\title{Paper Title}
\author{Borna. K }
\date{\today}
\DeclarePairedDelimiter{\ceil}{\lceil}{\rceil}
\DeclarePairedDelimiter{\floor}{\lfloor}{\rfloor}
\newtheorem{theorem}{Theorem}
\newtheorem{corollary}{Corollary}
\newtheorem{lemma}{Lemma}
\newtheorem{proposition}{Proposition}
\newcounter{nmdthmcnt}
\newcounter{pretheorem}
\newtheorem{Theorem}[pretheorem]{Theorem}
\theoremstyle{definition}
\newtheorem{condition}{Condition}
\newtheorem{remark}{Remark}
\newtheorem{definition}{Definition}
\newtheorem{claim}{Claim}
\newcommand{\theoremgroup}{\refstepcounter{pretheorem}}
\newcommand{\vm}[1]{\textbf{#1}}
\newcommand{\h}{\mathsf{T}}
\newcommand{\SNR}{\mathsf{SNR}}
\newcommand{\DoF}{\mathsf{DoF}}
\newcommand{\el}{\ell}
\newcommand{\mylabel}[2]
    {\protected@write\@auxout{}{\string\newlabel{#1}{{#2}{\thepage}%
      {\@currentlabelname}{\@currentHref}{}}}}}%
\newcommand{\mylabel}[2]
    {\protected@write\@auxout{}{\string\newlabel{#1}{{#2}{\thepage}}}}}
\begin{document}

\title{
$K$--User Interference Channel with Backhaul Cooperation: DoF vs. Backhaul Load Trade--Off
}
\author{
\IEEEauthorblockN{Borna Kananian, \IEEEauthorrefmark{1},
Mohammad~A.~Maddah-Ali, \IEEEauthorrefmark{2},
Babak~H.~Khalaj, \IEEEauthorrefmark{1}
}

\IEEEauthorblockA{
\IEEEauthorrefmark{1}Department of Electrical Engineering, Sharif University of Technology, Tehran, Iran\\
Email: borna@ee.sharif.edu,~khalaj@sharif.edu
}

\IEEEauthorblockA{
\IEEEauthorrefmark{2}Nokia Bell Labs, New Jersey, USA, Email: mohammad.maddah-ali@nokia.com
}
}

\maketitle

\begin{abstract}
In this paper, we consider multiple-antenna $K$--user interference channels with backhaul collaboration in
one side (among the transmitters or among the receivers) and investigate the
trade--off between the rate in the channel versus the communication load in the
backhaul. 
In this investigation, we focus on a first order approximation result, where the rate of the wireless channel is measured by the degrees of freedom (DoF) per user, and the load of the backhaul is measured by the entropy of backhaul messages per user normalized by $\log$ of transmit power, at high power regimes.
This trade--off is fully characterized for the case of even values of $K$, and approximately characterized for the case of odd values of $K$, with vanishing approximation gap as $K$ grows. 
For full DoF, this result establishes the optimality (approximately) of the most straightforward scheme, called \emph{Centralized Scheme}, in which the messages are collected at one of the nodes, centrally processed, and forwarded back to each node.
In addition, this result shows that the gain of the schemes, relying on distributed processing, through pairwise communication among the nodes (e.g., cooperative alignment) does not scale with the size of
the network.
For the converse, we develop a new outer-bound on the trade--off based on splitting the set of collaborative nodes (transmitters or receivers) into two subsets, and assuming full cooperation within each group. 
In continue, we further investigate the trade--off for the cases, where the backhaul or the wireless links (interference channel) are not fully connected. 
\end{abstract}
\begin{IEEEkeywords}
Interference Alignment, Cooperation Alignment, Degrees of Freedom, Centralized Processing.
\end{IEEEkeywords}

\section{Introduction}
Interference is known as the major limiting factor in the performance of wireless communication. 
Theoretically, techniques such as \textit{interference alignment}~\cite{cadambe08,motahari14,maddah10_com} promise to increase the throughput of the network significantly to its first order. 
However, implementation of such techniques faces serious practical challenges.
Fortunately, in a major class of wireless networks, i.e.  cellular networks, there exist some backhaul links, which provide the possibility of collaboration among  the interfering links.
Such backhaul resources can be used to manage interference and increase throughput in wireless links.  
The major question here is how much rate improvement is expected, for a given increase in the backhaul load? 
 
 In~\cite{wang11a},  the authors investigate the effect of receivers cooperation in a two--user interference channel, and characterize the capacity region versus backhaul load trade-off within a constant gap. 
 The achievable scheme is based on a form of Han and Kobayashi method.
 In high $\SNR$ regimes, the scheme of \cite{wang11a} reduces to a simple strategy.
 In particular, whenever cooperation is allowed,  it is optimum to pass the received signal (indeed its quantized version) at receiver one to receiver two, perform the joint decoding at receiver two, and pass the decoded message of receiver one back again. 
For the case of no cooperation, orthogonal transmission (e.g. in time or frequency) is used.

Let $\DoF^*$ and $\alpha$ respectively denote, the optimum rate in wireless link and the backhaul load per user, normalized by $\log (P)$, for large transmission power $P$.
Then from \cite{wang11a}, we conclude that
\begin{align}\label{eq:dof-2}
 \DoF = \min\left\{1\ ,\ \frac{1+\alpha}{2}\right\}.
\end{align}
From the results in~\cite{wang11b}, the same trade--off region can also be derived for two--user interference channel with \emph{transmitters} cooperation.

 The result in \cite{wang11a} has been extended in \cite{ashraphijuo2014capacity} to the cases where users are equipped with multiple antennas.
 To be more specific, the authors in \cite{ashraphijuo2014capacity} consider a multiple antenna two--user interference channel with receivers cooperation and provide an approximate capacity region assuming some fixed backhaul capacity.
 It is shown that the gap between the inner and outer bounds is a function of total number of antennas at the receivers and independent of the signal power, therefore, the trade--off between $\DoF^*$ versus $\alpha$ is fully characterized. 

 In~\cite{ntranos2015cooperation}, the authors characterize the trade--off between DoF versus receivers backhaul load for three user single antenna interference channel. 
The achievable scheme proposed in \cite{ntranos2015cooperation} is fundamentally different from the scheme of \cite{wang11a} and it has a new ingredient in the cooperation scheme, called cooperation alignment.
Such approach outperforms the schemes that are based on collecting all received signals at one node and jointly perform the decoding at that node, by fifty percent and achieves the optimum trade--off.
In cooperation alignment, some alignment techniques has been used in developing backhaul messages such that at each receiver the interfering terms in the signals received through the backhaul and through the wireless link are aligned, and these two together provides the means for canceling the interference and revealing the desired signal.
As such alignment is not possible in one shot solutions, the idea in \cite{ntranos2015cooperation} is to divide the intended signals into many sub-signals.
 In such method, upon receiving backhaul messages, each receiver is able to cancel part of the interference and continuing the message passing phase, all the receivers are able to decode their intended signals.
 Similarly, for the case of no cooperation interference alignment is exploited.
In \cite{ntranos2015cooperation} it is shown that $\DoF^*$ versus $\alpha$ trade--off follows the same formula as in~\eqref{eq:dof-2}. 
 
 The results of \cite{wang11a} and \cite{ntranos2015cooperation} suggest that for $K$--user case, the $\DoF^*$ versus $\alpha$ follows~\eqref{eq:dof-2} as well.
 The main objective of this paper is to show that such generalization is in fact wrong.
 Indeed the cases of two and three users interference channels are only exception rather than a rule.
 
\begin{figure}
\centering
   \includegraphics[scale=0.5]{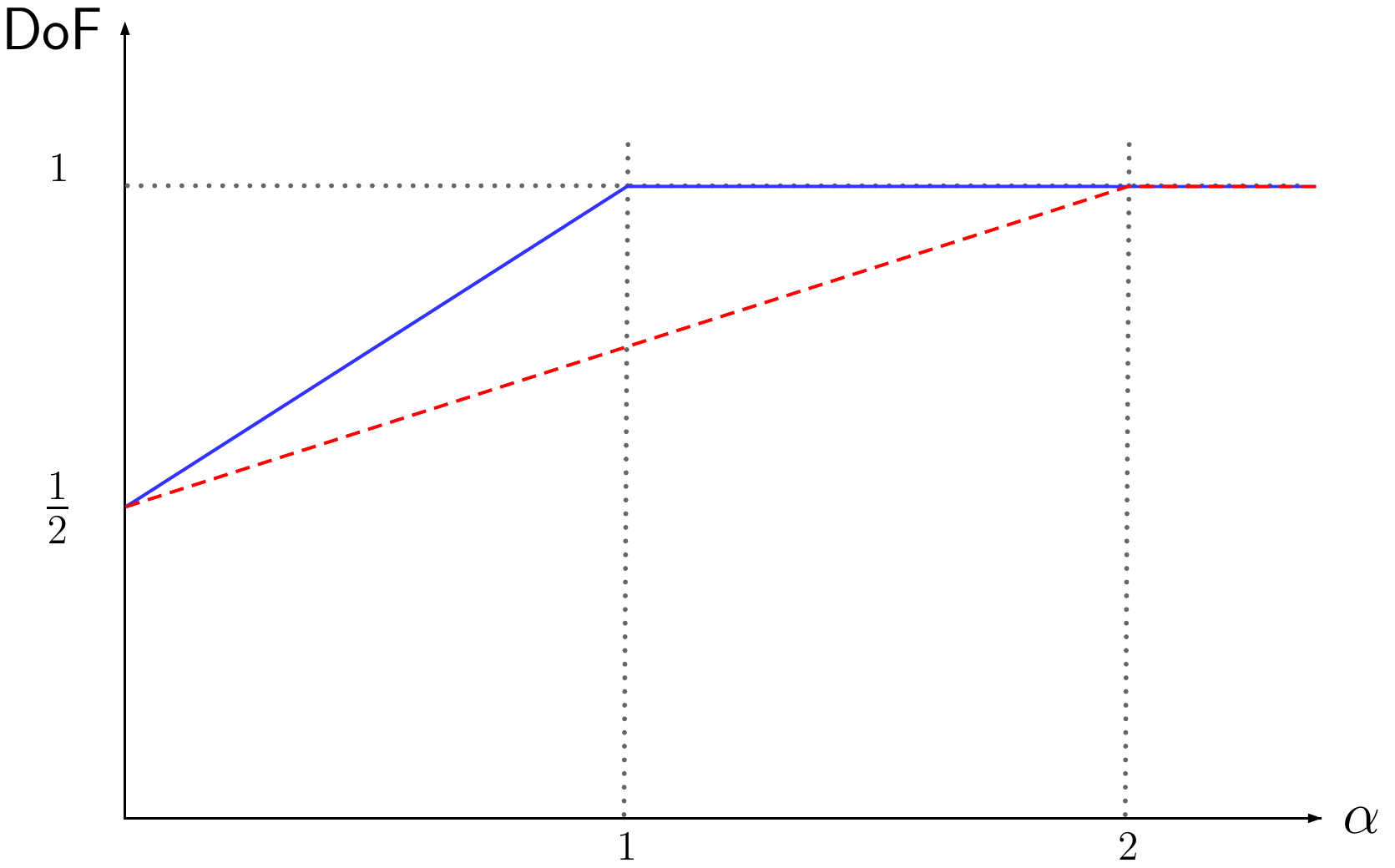}
   \caption{DoF vs. Backhaul load trade--off region.
   The solid blue line shows the region for two and three--user interference channels.
   The dashed red line shows the region for large interference networks.}
   \label{fig:trade-all} 
\end{figure}

 In this paper, we consider a $K$--user multiple antenna interference channel with backhaul cooperation.
 We assume fully connected backhaul network and non-degenerate wireless channels and characterize the full DoF versus backhaul load trade--off region, for both receivers cooperation and transmitters cooperation cases.
 To be more specific, we derive the full trade--off region for the cases of even number of users, while for the cases of odd number of users an achievable bound and a converse bound is provided.
 The gap between the achievable bound and the converse bound in case of odd number of users vanishes as the number of users $K$ increases, hence characteristic of the full trade--off region for large $K$ is presented by
 \begin{align}\label{tradeoff1}
 \DoF^* = \min\left\{M\ ,\ \frac{M}{2}+\frac{\alpha}{4}\right\},
 \end{align}
where $M$ is the number of antennas at each individual node.

When no cooperation is allowed, the achievable scheme is based on interference alignment.
On the other hand to achieve full degrees of freedom per user, we use a centralized scheme which is of great practical interest.
For the case of receivers cooperation, we collect the quantized version of all the received signals at one of the receivers, where the joint decoding takes place.
Then, the decoded signals are sent back to the corresponding receiver through the backhaul network.
For the transmitters cooperation, we collect all the messages in one of the transmitters where an interference management scheme (such as zero forcing) takes place.
Then, the computed signals are sent back to the corresponding transmitters.

In order to develop an outer bound, we propose a new converse based on dividing the set of cooperating nodes into two balanced sets and assuming full cooperation inside each set and only considering the cooperation load between the sets.
For finite number of users, we show that our achievable scheme is optimal when the number of users are optimal.
For odd number of users, we show that the gap between the achievable scheme and the converse is diminishing as the number of users increases.

 In \cite{wang11a,wang11b,ashraphijuo2014capacity,ntranos2015cooperation} the backhaul networks are fully connected, and the wireless channels from each transmitter to every receiver are degenerate with zero probability.
However, these are not valid assumptions for a cellular network, specially when the number of users $K$ increases.
Specifically, due to the power limitations, the signal sent by a transmitter can be detected by only a limited number of receivers. 
In addition, although there exist paths connecting every arbitrary pair of base stations, there are limitations in the backhaul network, and not all the base stations are directly connected to each other.
As an example, in \cite{kananian2016collaboration}, the authors assume that the backhaul network follows linear Wyner model, and each transmitter only interferes on the two receivers closest to it.

Then, unlike \cite{wang11a,wang11b,ashraphijuo2014capacity,ntranos2015cooperation}, we assume the system follows general connectivity.
In general connectivity, at the wireless side, the wireless channel between some of the transmitters and receivers are so low that we can consider them to be zero.
Subsequently at the wireless side, with respect to the channel coefficient matrix between a transmitter and a receiver, either all elements are identically drawn from a continuous probability distribution or all of them are zero.
In addition, in general, at the backhaul side, some of the direct links between cooperating nodes do not exist.

In the second part of this paper, we find conditions on the wireless and the backhaul connectivity, such that the centralized scheme remains optimum.
However, it is important for the conditions to be tractable, since we might deal with very large networks.
We have shown that the proposed conditions are tractable, i.e., verifiable in polynomial time with respect to the network size $K$.
 
 The rest of this paper is organized as follows.
 In Section~\ref{section:formulation}, we formulate our problem from the information theoretic view point and the main results of the paper are presented in Section~\ref{section:main}.
 Section~\ref{sec:two-user} contains the discussions on the two--user interference channels, forming the foundations of the proofs for the main results.
 The proofs of our main results for the case of full connectivity is presented in Section~\ref{sec:proof}, while the discussion on generalized configurations is presented in Section~\ref{section:centralized}.
 Finally, in Section~\ref{sec:com} we present the complexity analysis of the conditions for the optimality of the central processing.
 
 \section{Problem Formulation}\label{section:formulation}
 Consider an \textit{Interference Channel}, with $K$ receivers and a transmitter corresponding to each receiver.
Each transmitter is equipped with $M$ antennas and each receiver is equipped with $N$ antennas.
The set of all transmitters and the set of all receivers are denoted by $\mathcal{T}$ and $\mathcal{R}$, respectively.
Transmitter $i\in\mathcal{T}$, intends to convey a message $W_i$ to its corresponding receiver.
 The channel is a Gaussian Interference Channel, which, in a narrow-band environment, is given by
\begin{align}
\vm{y}_i(t) = \sum_{k=1}^K \vm{H}_{ik} \vm{x}_k(t) + \vm{z}_i(t).
\end{align}
In the above equation, $\vm{y}_i(t)\in \mathbb{C}^{N}$ is the received signal at receiver $i$ and $\vm{x}_k(t)\in \mathbb{C}^M$ is the transmitted signal from transmitter $k$, $\vm{z}_i(t)\in\mathbb{C}^N$ is the additive circularly symmetric Gaussian noise at receiver $i$ with zero mean and \textit{identity} co-variance matrix, all at time $t$.
In addition, $\vm{H}_{ik}\in \mathbb{C}^{N\times M}$ is the channel coefficient matrix from transmitter $k$ to receiver $i$ and is assumed to be fixed during the whole wireless transmission period.
We assume sufficient distance among the antennas, consequently all the channel coefficients are independent, yet randomly chosen from an identical distribution.
We further assume that the full Channel State Information (CSI) is available at all transmitters and receivers.

To be more precise, we assume that the channel coefficient matrix from receiver $i$ to receiver $j$ consists of the multiplication of a \emph{large scale factor} $L_{ij}\in\{0,1\}$ and a small scale factor $\bar{\vm{H}}_{ij} \in \mathbb{C}^{N\times M}$.
The elements of the small scale matrices are independent and randomly chosen from a continuous distribution.
This means  that the channel coefficient matrix $\vm{H}_{ij} = L_{ij}\bar{\vm{H}}_{ij}$ is either zero, or non-zero with probability one.
We collect all the large scale channel coefficients in a binary matrix $\vm{L}$, denoted by the adjacency matrix of the channel. 

\subsubsection*{Representation by a Bipartite Graph}
 The channel has an equivalent bipartite graph $G_{tr}$, with bipartitions $(\mathcal{T},\mathcal{R})$.
In general, there exists a link among receiver $i$ and transmitter $j$ if the large scale channel coefficient $L_{ij}$ is one. 

In most parts of this paper, we assume that the equivalent bipartite graph is fully connected. 
To be more precise, this means that all the receivers are subject to the interference from all the transmitters.
In the other parts, although we assume that small scale coefficients can be zero, we assume that the wireless \emph{direct} channel coefficient matrices are full rank with probability one, as defined rigorously in \nameref{cond2}(Condition~\ref{cond2}).

\begin{condition}[Direct Connectivity Condition]\label{cond2}
In the wireless channels, for every $i\in\mathcal{T}$, $L_{ii}$ is one and hence, the matrix $\vm{H}_{ii}$ is full rank with probability one.
\end{condition} 

The justification for \nameref{cond2}(Condition~\ref{cond2}), is that in the scheduling stage, if a direct link is weaker than a given threshold, we do not assign the receiver to that transmitter.
\vspace{.2 in}

 We assume that there exists a backhaul network providing the ability of cooperation as depicted in Fig.~\ref{fig:sys_model}.
The backhaul network can be either at the receivers side (Fig.~\ref{fig:sys_model1}), or at the transmitters side (Fig.~\ref{fig:sys_model2}) as described in what follows.
 The backhaul network has an equivalent graph $G_b = (\mathcal{V},\mathcal{E})$, where $\mathcal{V}$ is the set of all cooperating nodes and $\mathcal{E}$ represents the set of all available backhaul links.
In most part of this paper, we assume that the graph of the backhual network is fully connected, i.e., there exists a link between every pair $[i,j],\ j\neq i$, of the cooperating nodes through which, they can directly pass backhaul messages.
In the other parts, we assume that some of the direct links in the backhaul network are zero.

\begin{figure}
\centering
\begin{subfigure}[b]{0.7\textwidth}
   \includegraphics[width=1\linewidth]{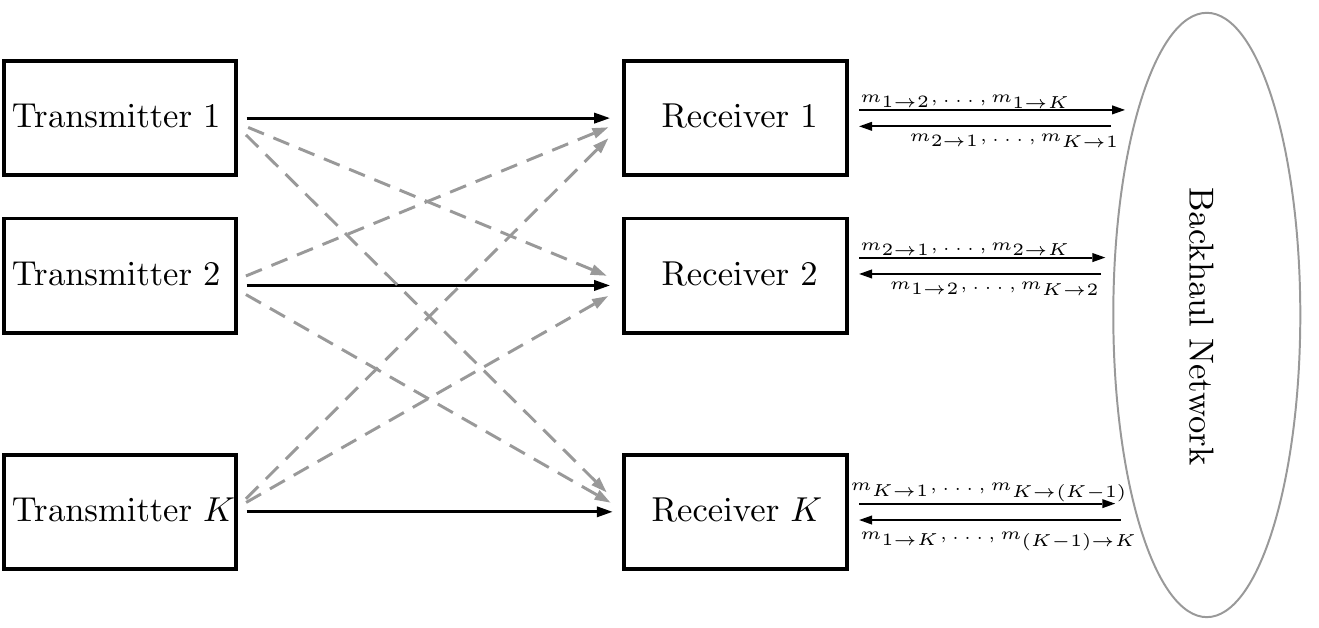}
   \caption{Receivers Cooperation.}
   \label{fig:sys_model1} 
\end{subfigure}

\begin{subfigure}[b]{0.7\textwidth}
   \includegraphics[width=1\linewidth]{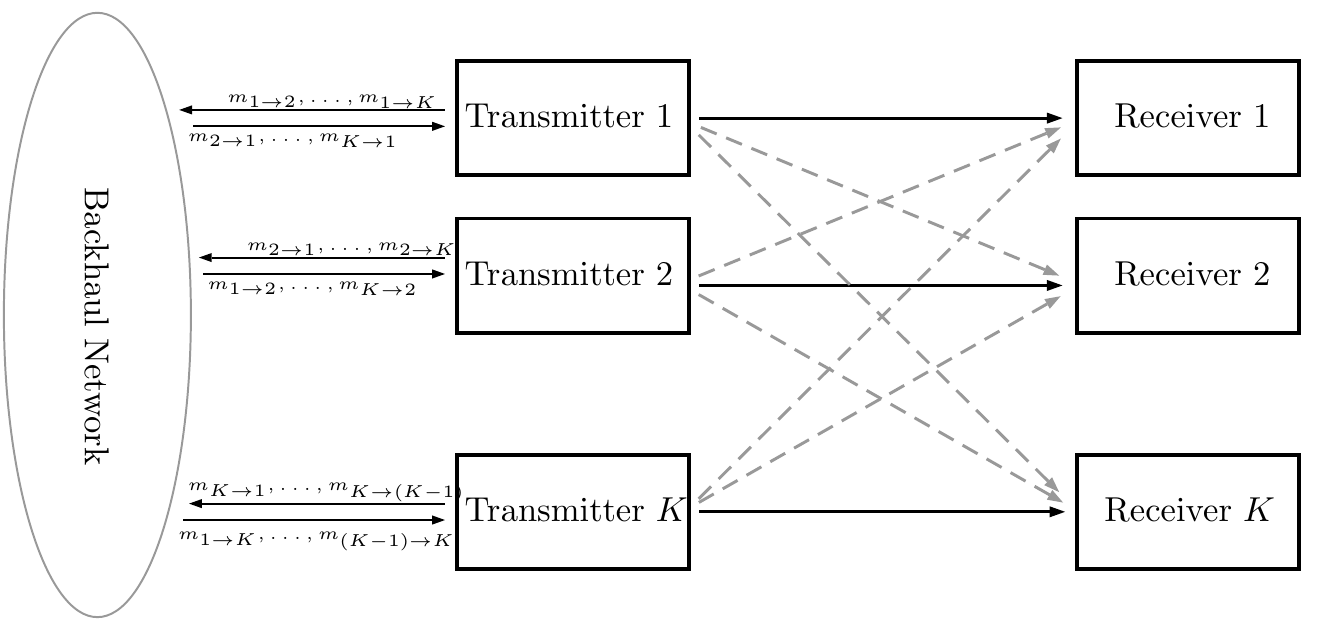}
   \caption{Transmitters Cooperation.}
   \label{fig:sys_model2}
\end{subfigure}

\caption{System Model}
\label{fig:sys_model}
\end{figure}

\subsection{Receivers Cooperation}\label{scenario:receivers}
 In this scenario, receivers cooperate through the backhaul network as depicted in Fig.~\ref{fig:sys_model1}.
Here, the set of vertices in the equivalent graph of the backhaul network is the set of all receivers, i.e., $\mathcal{V}=\mathcal{R}$.
In order for transmitter $i$ to create the signal $\vm{x}_i(t)$, it encodes the message $W_i$ into a codeword $[\vm{x}_i(t)]_{t=1}^n$, using a block code of length $n$ subject to the following average power constraint
\begin{align}\label{eq:ave_pow_const}
\frac{1}{n}\sum_{t=1}^n Tr(\vm{x}_i(t)\vm{x}_i^\dagger(t))\leq P.
\end{align}
Note that the outgoing signals of the transmitters are completely independent, since they are not connected and the messages $W_i$ and $W_j$ are independent, for all $i\neq j$.

Transmitter $i$ and its corresponding receiver agree on the message sets $\mathcal{W}_i = \{1,2,\dots,2^{n R_i}\}$.
It uses the encoding function 
\begin{align*}
f_i^n: \mathcal{W}_i \rightarrow \mathbb{C}^{M\times n},
\end{align*}
where
\begin{align*}
[\vm{x}_i(t)]_{t=1}^n = f_i^n(W_i).
\end{align*}
Subsequently, the receivers participate in the message passing process.
The backhaul message $m_{i\rightarrow j}(t)$, sent from receiver $i$ to receiver $j$ at $t^{th}$ round of collaboration phase, is chosen from $\mathcal{B}_{ij}^t$ which is a finite set denoting the backhaul message alphabet.
In order to construct the message $m_{i\rightarrow j}(t)$, receiver $i$ employs all the signals previously received over its wireless terminal $[y_i(\tau)]_{\tau=1}^{t-1}$, as well as all previously received backhaul messages $[m_{j\rightarrow i}(\tau)]_{\tau=1}^{t-1},\ \forall j\in\{1,2,\dots,K\}/i$.

In particular, receiver $i$ uses the backhaul message generating function
\begin{align*}
g_{ij}^{[t]}: \underset{k\in\{1,\dots,K\}\backslash i, \tau \in\{1,\dots,t-1\}}{ \mathbb{C}^{N\times (t-1)}\ \times\quad \prod \mathcal{B}_{ki}^\tau \rightarrow }\mathcal{B}_{ij}^t\ ,\ j\neq i,
\end{align*}  
  in order to form a backhaul message to receiver $j$ at $t^{th}$ round of collaboration phase.
Let $M_i^{[t]}$ be the collection of all received backhaul messages upto $t^{th}$ round of collaboration phase, at receiver $i$, i.e.,
\begin{align}\label{eq:previous_messages}
M_i^{[t]}= \big\{[m_{k\rightarrow i}]_{\tau=1}^{t}, k\in\{1,2,\dots,K\}\backslash i\big\},
\end{align}
then, we have
\begin{align*}
m_{i\rightarrow j}(t) = g_{ij}^{[t]} ([y_i(\tau)]_{\tau=1}^{t-1}, M_i^{[t-1]}).
\end{align*}

 Finally, for receivers to decode their intended signals, receiver $i$ chooses the decoding function 
\begin{align*}
 \eta_i : \underset{k\in\{1,\dots,K\}\backslash i, t\in\{1,\dots,n\}}{ \mathbb{C}^{N\times n}\ \times\quad \prod \mathcal{B}_{ki}^t\rightarrow }\mathcal{W}_i,
\end{align*} 
in order to decode its desired message, where
\begin{align*}
\hat{W}_i =  \eta_i([\vm{y}_i(\tau)]_{\tau=1}^{n}, M_i^{[n]}).
\end{align*} 

The corresponding probability of error can  be calculated as 
\begin{align}\label{eq:err_prob}
P_e^{(n)} = \mathbb{P}\big(\cup_{i\in\{1,\dots,K\}}\{ \hat{W}_i \neq W_i \}\big).
\end{align} 

\subsection{Transmitters Cooperation}\label{scenario:transmitter}
 As shown in Fig.~\ref{fig:sys_model2}, in this scenario, transmitters cooperate through a backhaul network.
 Note that in this scenario, the set of vertices in the equivalent graph of the backhaul network is the set of all transmitters, i.e., $\mathcal{V}=\mathcal{T}$.
In order for transmitter $i$ to create the signal $\vm{x}_i(t)$, it exploits the message $W_i$ and all the received backhaul messages at the cooperation phase.
The resulting signal $[\vm{x}_i(t)]_{t=1}^n$ contains $n$ time slots and is subject to the average power constraint \eqref{eq:ave_pow_const}.

In this case, transmitter $i$ and its corresponding receiver, agree on the message set $\mathcal{W}_i = \{1,2,\dots,2^{n R_i}\}$.
Then the transmitters participate in the message passing process.
The backhaul message $m_{i\rightarrow j}(t)$, sent from transmitter $i$ to transmitter $j$ at $t^{th}$ round of collaboration phase, is chosen from $\mathcal{B}_{ij}^t$ which is a finite set denoting the backhaul message alphabet.
In order to construct the message $m_{i\rightarrow j}(t)$, transmitter $i$ uses its intended message $W_i$, as well as all previously received backhaul messages $[m_{j\rightarrow i}(\tau)]_{\tau=1}^{t-1},\ \forall j\in\{1,2,\dots,K\}/i$.

In particular, transmitter $i$ uses the backhaul message generating function 
\begin{align*}
g_{ij}^{[t]}: \underset{k\in\{1,\dots,K\}\backslash i, \tau \in\{1,\dots,t-1\}}{\mathcal{W}_i\ \times\quad \prod \mathcal{B}_{ki}^\tau \rightarrow }\mathcal{B}_{ij}^t\ ,\ j\neq i,
\end{align*}
to form a backhaul message to transmitter $j$ at $t^{th}$ round of collaboration phase, i.e.,
\begin{align*}
m_{i\rightarrow j}(t) = g_{ij}^{[t]} (W_i, M_i^{[t-1]}),
\end{align*}
where $M_i^{[t]}$ is defined at \eqref{eq:previous_messages}.

In addition, transmitter $i$ uses an encoding function 
\begin{align*}
f_i^n:\underset{k\in\{1,\dots,K\}\backslash i, t\in\{1,\dots,n\}}{ \mathcal{W}_i\quad\times\quad \prod \mathcal{B}_{ki}^t\rightarrow }\mathbb{C}^{M\times n},
\end{align*}
in order to convey its message to the intended receiver through the interference channel
\begin{align*}
[x_i]_{t=1}^{n} = f_{i}^{n} (W_i, M_i^{n}).
\end{align*}

Finally, receiver $i$ uses a decoding function 
\begin{align*}
\eta_i :  \mathbb{C}^{N\times n}\rightarrow \mathcal{W}_i,
\end{align*}
in order to decode  its desired message, i.e., 
\begin{align*}
\hat{W}_i = \eta_i([\vm{y}_i(\tau)]_{\tau=1}^{n}),
\end{align*}
where $\hat{W}_i$ is the decoded message at receiver $i$. 
The corresponding probability of error can  be calculated accordingly, as in \eqref{eq:err_prob}. 

\subsection{Capacity and DoF Region}
We define the rate of each backhaul link and the average (per user) cooperation rate, as in \cite{ntranos2015cooperation}.
Specifically, the rate of each backhaul link is defined as the average entropy of the messages passing through that link
\begin{align}\label{eq:backhaul_rate}
R_b^{[i,j]} = \frac{1}{n} H([m_{i\rightarrow j}(\tau)]_{\tau=1}^n),
\end{align}	
and the average cooperation rate is also defined as the sum rate of all backhaul links, normalized by the number of users
\begin{align}
\bar{R}_b = \frac{1}{K} \sum_{i=1}^K \sum_{j\neq i} R_b^{[i,j]}.
\end{align}

For the achievablity of the rate vector $\vm{R}=[R_1, R_2, \ldots, R_k]^\h$, it is required that for every $\epsilon\ \textgreater\ 0$, there exists an integer $n_0$ such that $P_e^{(n)}\ \textless\ \epsilon$, for every  block length $n\ \textgreater\ n_0$, while 
\begin{align}\label{eq:ave_back_const}
\bar{R}_{b} = \frac{1}{K} \sum_{i=1}^K\sum_{j\neq i} \frac{1}{n}H([m_{i\rightarrow j }(\tau)]_{\tau=1}^n)\leq L.
\end{align}
Here, $L$ is a function of the average power constraint and indicates the capacity of the backhaul network.

The closure of all achievable rate vectors, subject to the average backhaul constraint \eqref{eq:ave_back_const}, forms the capacity region $\mathcal{C}_L$. 
Considering only the interference effect on the capacity region, we omit the noise impact by solely focusing on high $\SNR$ regimes.

In high $\SNR$ regimes, we define the backhaul capacity of the link from cooperating nodes $i$ to cooperating node $j$ as,
\begin{align*}
c_B^{ij} \triangleq \lim_{P\rightarrow\infty} \frac{R_b^{[i,j]}(P)}{\log(P)},
\end{align*}
and the average (per user) backhaul cooperation load as,
\begin{align*}
\alpha \triangleq \lim_{P\rightarrow\infty} \frac{L(P)}{\log(P)}.
\end{align*}
 To be more specific, we define the backhaul capacity and the average backhaul load as the limit for the backhaul rate and the capacity of the backhaul network divided by the rate of a point to point Gaussian channel at high $\SNR$, respectively.
 
In the same way, we also define 
 the achievable degrees of freedom for each individual user as,
\begin{align*}
\DoF_i(\alpha) \triangleq \liminf_{P\rightarrow\infty} \frac{R_i}{\log(P)},i\in\{1,\dots,K\}.
\end{align*}
and the average (per user) achievable degrees of freedom as,
\begin{align*}
\DoF(\alpha) \triangleq \liminf_{P\rightarrow\infty} \frac{1}{K} \sum_{k=1}^K \frac{R_k}{\log(P)}.
\end{align*}

The average \textit{DoF} (per user) of the channel is denoted by $\DoF^*(\alpha)$ and is defined as the supremum of $\DoF(\alpha)$, over all achievable schemes.
We denote $\DoF^*(\alpha)$ as the $\DoF$ region, and characterizing the trade-off between the $\DoF^*$ and the backhaul load $\alpha$, is one of the main focus points of this paper.

Definition~\ref{def:centralized_scheme} precisely defines a class of centralized schemes.
This definition generalized the concept of centralized scheme and provides the proper tool for writing rigorous converse proofs.
\begin{definition}\label{def:centralized_scheme}
The class of $(\alpha,\DoF)$--centralized scheme consists of all the schemes, achieving degrees of freedom of $\DoF$ per user with the backhaul load of $\alpha$ per user, where each receiver $k$ decodes its own message $\hat{W}_k$ and at least one of the cooperating nodes, say cooperating node $i$, is able to decode $\tilde{W}_1$, $\tilde{W}_2$, $\ldots$, $\tilde{W}_K$ with vanishing probability of error.
This means that for every $\epsilon\ \textgreater\ 0$, there exists an integer $n_0$ that for codes with the block lengths $n\geq n_0$, we have
$\underset{k}\max\ \mathbb{P}(\hat{W}_k \neq W_k) \leq \epsilon$, and $\underset{k}\max\ \mathbb{P}(\tilde{W}_k \neq W_k) \leq \epsilon$.
We denote cooperating node $i$ as the \emph{Central Processor}.
\end{definition}

\begin{definition}\label{def:fisibility}
For a given configuration, we say that the class of $(\alpha,\DoF)$--centralized schemes is feasible, if at least one of its members is achievable under \nameref{cond2}(Condition~\ref{cond2}).
\end{definition}

\section{Main Results}\label{section:main}

In this section, we present our main results.
For the rest of the paper we consider the cases where all transmitters and receivers have equal number of antennas, i.e., $M=N$, unless otherwise stated.

\subsection{Fully Connected Wireless Network}\label{sec:full}
In this subsection, we bound the achievable DoF per user with limited backhaul capacity, where both the wireless and the backhaul networks are fully connected.
This means that at the backhaul network, for every arbitrary pair of cooperating nodes, there exists a two-way backhaul link. 
Also, in the wireless channel, all the entries of the adjacency matrix are one, i.e., the channel coefficients matrix $\vm{H}_{ij}$, for all $i\in\mathcal{R}$ and $j\in\mathcal{T}$, are full rank.
We treat the problem separately for the cases of \textit{Even} and \textit{Odd} number of users. 

\theoremgroup\label{theorem:bound_receiver}
\begin{theorem}\label{theorem:bound_receiver_e}
In a $K$--user interference channel where each node is equipped with $M$ antennas, for even values of $K$ and with the average  backhaul load $\alpha$, we have
\begin{align}\label{eq:dof_receiver_b_e}
\DoF^*(\alpha)= \min \{M,\frac{1}{2} (M+\frac{K}{2(K-1)}\alpha)\}.
\end{align}
\end{theorem}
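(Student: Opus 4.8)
The plan is to prove \eqref{eq:dof_receiver_b_e} by matching achievability and converse bounds, with a single convex trade--off curve covering both branches of the $\min$. For achievability I would use two extreme operating points joined by time--sharing (legitimate since the DoF region is convex). At $(\alpha,\DoF)=(0,M/2)$ the scheme is plain interference alignment on the fully connected $K$--user $M\times M$ MIMO interference channel, delivering $M/2$ DoF per user. At $(\alpha,\DoF)=\bigl(\tfrac{2(K-1)}{K}M,\,M\bigr)$ the scheme is the Centralized Scheme: each receiver other than a designated central receiver quantizes its $M$--dimensional observation at the noise floor, costing $M\log P$ bits per channel use, and forwards it; the central receiver decodes all $K$ messages jointly (the stacked effective channel is a generic $KM\times KM$ matrix) and returns $W_i$ to receiver $i$, another $M\log P$ bits per user. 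Counting $K-1$ inbound plus $K-1$ outbound links of normalized rate $M$ and dividing by $K$ gives $\alpha=\tfrac{2(K-1)}{K}M$ while each user attains $\DoF=M$. Time--sharing a fraction $\lambda$ of the centralized scheme with $1-\lambda$ of alignment realizes $\bigl(\lambda\tfrac{2(K-1)}{K}M,\ \tfrac M2+\lambda\tfrac M2\bigr)$; eliminating $\lambda$ traces the line $\DoF=\tfrac M2+\tfrac{K}{4(K-1)}\alpha=\tfrac12\bigl(M+\tfrac{K}{2(K-1)}\alpha\bigr)$, and for larger $\alpha$ the centralized scheme alone already attains the cap $\DoF=M$.

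For the converse I would develop the balanced--splitting outer bound. Fix any achievable scheme with average backhaul load $\alpha$. First, an averaging argument produces a good cut: partitions of the $K$ receivers into two equal halves $\mathcal S_1,\mathcal S_2$ exist precisely because $K$ is even, and under a uniformly random such partition any fixed ordered pair $(i,j)$, $i\ne j$, lands on opposite sides with probability $\tfrac{K}{2(K-1)}$, so the expected cut $\sum_{i\in\mathcal S_1,j\in\mathcal S_2}c_B^{ij}+\sum_{i\in\mathcal S_2,j\in\mathcal S_1}c_B^{ij}$ equals $\tfrac{K}{2(K-1)}\sum_{i\ne j}c_B^{ij}\le\tfrac{K^2}{2(K-1)}\alpha$; hence some partition has cut value $\beta\le\tfrac{K^2}{2(K-1)}\alpha$. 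Fixing that partition, I would grant free cooperation among the receivers inside $\mathcal S_1$ and inside $\mathcal S_2$, and likewise let the transmitters inside each group cooperate; both operations only enlarge the capacity region and collapse the network to a two--user $M'\times M'$ MIMO interference channel with $M'=\tfrac{KM}{2}$, receiver cooperation, and total bidirectional backhaul normalized capacity equal to the sum of the original inter--group link rates, i.e.\ exactly $\beta$. Invoking the two--user MIMO trade--off established in Section~\ref{sec:two-user}, $\DoF_{\mathcal S_1}+\DoF_{\mathcal S_2}\le\min\{2M',\,M'+\tfrac\beta2\}$, and since merging receivers cannot hurt, $\sum_{k=1}^K\DoF_k\le\DoF_{\mathcal S_1}+\DoF_{\mathcal S_2}$. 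Substituting $M'=\tfrac{KM}{2}$ and $\beta\le\tfrac{K^2}{2(K-1)}\alpha$ gives $\sum_k\DoF_k\le\min\{KM,\ \tfrac{KM}{2}+\tfrac{K^2}{4(K-1)}\alpha\}$; dividing by $K$ recovers the right--hand side of \eqref{eq:dof_receiver_b_e} and closes the gap with achievability.

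The step I expect to be the main obstacle is making this reduction rigorous at the level of the formal code definitions: one must check that ``full cooperation inside each group'' is legitimate given the causal, round--by--round backhaul functions $g_{ij}^{[t]}$ and the decoders $\eta_i$, that the induced two--user problem genuinely fits the message--passing model of Section~\ref{sec:two-user}, and---crucially---that its backhaul budget is exactly the cut $\beta$, so that the now--internal intra--group links cannot be exploited to inflate the inter--group cross capacity or to relay around a round--structure constraint. Once the two--user bound of Section~\ref{sec:two-user} is available, the averaging argument and the remaining algebra are routine.
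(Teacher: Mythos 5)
Your proposal is correct and follows essentially the same route as the paper: the converse splits the users into two balanced groups with free intra-group cooperation, applies the two-user MIMO bound of Lemma~\ref{lemma:MIMO_dof_bound}, and averages over all balanced bipartitions (the paper sums the bound over all $\binom{K}{K/2}$ partitions and counts occurrences of each $c_B^{ij}$, which is the deterministic form of your probabilistic averaging), while the achievability is the identical time-sharing between interference alignment and the centralized scheme. The obstacle you flag is exactly what the paper addresses (tersely) in Lemma~\ref{claim:upper_bound}, namely that any code for the original channel induces a valid code for the merged two-user channel with inter-group backhaul budget equal to the cut.
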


\begin{theorem}\label{theorem:bound_receiver_o}
In a $K$--user interference channel where each node is equipped with $M$ antennas, for odd values of $K$ and with the average  backhaul load $\alpha$, we have
\begin{align}\label{eq:dof_receiver_b_o}
\min \{M,\frac{1}{2} (M+\frac{K}{2(K-1)}\alpha)\}
\leq \DoF^*(\alpha)
\leq \min \{M,\frac{K+1}{2K} (M+\frac{\alpha}{2})\}.
\end{align}
\end{theorem}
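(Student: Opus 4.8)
I would prove the two inequalities of \eqref{eq:dof_receiver_b_o} separately, the left one by exhibiting a scheme and the right one by a genie/averaging converse.

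\textbf{Achievability (left inequality).} The scheme is the parity‑independent one that also establishes Theorem~\ref{theorem:bound_receiver_e}: time‑share two operating points. At $\alpha=0$, interference alignment on the $K$‑user $M$‑antenna channel (generic coefficients) delivers $\DoF=M/2$ per user. At $\alpha=\tfrac{2(K-1)M}{K}$, the centralized scheme of Definition~\ref{def:centralized_scheme} delivers $\DoF=M$ per user: in the receivers‑cooperation case each of the $K-1$ non‑central receivers forwards a quantized copy of its $M$‑dimensional observation (cost $M$ DoF per link) to the central processor, which inverts the $KM\times KM$ generic channel, jointly decodes $W_1,\dots,W_K$, and returns $\hat W_k$ to receiver $k$ (again $M$ DoF per link), for a total of $2(K-1)M$ over the $2(K-1)$ active links, i.e. $\alpha=2(K-1)M/K$ per user; the transmitters‑cooperation case is dual (collect all messages at one transmitter, zero‑force, return the precoded signals). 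Time‑sharing these two points then yields $\DoF\ge\tfrac12\bigl(M+\tfrac{K}{2(K-1)}\alpha\bigr)$ for $\alpha\le\tfrac{2(K-1)M}{K}$ and $\DoF\ge M$ beyond, which is the left side of \eqref{eq:dof_receiver_b_o}.

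\textbf{Converse (right inequality).} The plan is to invoke the two‑user outer bound of Section~\ref{sec:two-user} on every balanced split of the cooperating nodes and then average over splits. Partition the $K$ cooperating nodes into $\mathcal{S}_1,\mathcal{S}_2$ with $|\mathcal{S}_1|=\tfrac{K-1}{2}$, $|\mathcal{S}_2|=\tfrac{K+1}{2}$, and introduce a genie granting full transmitter‑ and receiver‑side cooperation inside each part; this cannot decrease capacity and collapses the channel to a two‑user MIMO interference channel in which node $i$ has $|\mathcal{S}_i|M$ antennas, the two super‑receivers cooperate over the total backhaul budget $C_{\mathcal{S}_1\mathcal{S}_2}\triangleq\sum_{i\in\mathcal{S}_1,j\in\mathcal{S}_2}c_B^{ij}+\sum_{i\in\mathcal{S}_2,j\in\mathcal{S}_1}c_B^{ij}$, and the intra‑part backhaul is free. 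The two‑user bound then gives
\begin{align*}
\sum_{k=1}^{K}\DoF_k=\sum_{k\in\mathcal{S}_1}\DoF_k+\sum_{k\in\mathcal{S}_2}\DoF_k\ \le\ \min\Bigl\{KM,\ \tfrac{K+1}{2}M+\tfrac12\,C_{\mathcal{S}_1\mathcal{S}_2}\Bigr\}.
\end{align*}
The left side does not depend on the split, so it is bounded by the minimum over splits, hence by the average over a uniformly random split of the prescribed sizes. For such a split a fixed ordered pair of distinct nodes lands in different parts with probability $p=\dfrac{2|\mathcal{S}_1||\mathcal{S}_2|}{K(K-1)}=\dfrac{K+1}{2K}$, so $\mathbb{E}\bigl[C_{\mathcal{S}_1\mathcal{S}_2}\bigr]=p\sum_{i\neq j}c_B^{ij}\le pK\alpha$. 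Therefore
\begin{align*}
K\,\DoF^*(\alpha)\ \le\ \tfrac{K+1}{2}M+\tfrac12\cdot\tfrac{K+1}{2K}\cdot K\alpha\ =\ \tfrac{K+1}{2}\Bigl(M+\tfrac{\alpha}{2}\Bigr),
\end{align*}
which together with $\DoF^*(\alpha)\le M$ gives the right side of \eqref{eq:dof_receiver_b_o}; the transmitters‑cooperation version follows by dualizing the two‑user bound.

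\textbf{Where the difficulty lies.} Everything above is bookkeeping once the two‑user outer bound is available; the actual work, carried out in Section~\ref{sec:two-user}, is to show that an \emph{asymmetric} two‑user MIMO interference channel with $m_i$ antennas at node $i$ and receiver cooperation of total backhaul $C$ (in DoF units) has sum DoF at most $\min\{m_1+m_2,\ \max(m_1,m_2)+C/2\}$ — in particular pinning the intercept $\max(m_1,m_2)$ and, more delicately, the coefficient $1/2$ in front of $C$. This requires a genie/entropy argument in the spirit of \cite{wang11a,ntranos2015cooperation} that correctly tracks the $\log P$‑scaling of the backhaul messages (whose rate is defined through entropy in \eqref{eq:backhaul_rate}) and that stays valid under \nameref{cond2}(Condition~\ref{cond2}). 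I note that running the same averaging with $K$ even forces the symmetric split $|\mathcal{S}_1|=|\mathcal{S}_2|=K/2$, where the intercept becomes $\tfrac{K}{2}M$ and $p=\tfrac{K}{2(K-1)}$, reproducing the exact formula of Theorem~\ref{theorem:bound_receiver_e}; the slack in \eqref{eq:dof_receiver_b_o} is precisely the $\tfrac{M}{2K}$ penalty (at $\alpha=0$) caused by the unavoidable imbalance $\tfrac{K+1}{2}$ versus $\tfrac{K-1}{2}$ for odd $K$, and it vanishes as $K\to\infty$.
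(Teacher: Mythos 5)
Your proposal is correct and follows essentially the same route as the paper: identical achievability (time-sharing interference alignment at $\alpha=0$ with the centralized scheme at $\alpha=\tfrac{2(K-1)M}{K}$), and the same converse obtained by splitting the nodes into groups of sizes $\tfrac{K-1}{2}$ and $\tfrac{K+1}{2}$, invoking the two-user outer bound of Lemma~\ref{lemma:MIMO_dof_bound} with free intra-group cooperation, and averaging over all such splits — your uniform-random-split expectation with $p=\tfrac{K+1}{2K}$ is just a probabilistic restatement of the paper's combinatorial count of the $\binom{K}{m}$ realizations. Your deferral of the real work to the asymmetric two-user bound (Lemma~\ref{lemma:MIMO_dof_bound}) mirrors the paper's own structure.
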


\begin{remark}
The results of Theorem~\ref{theorem:bound_receiver}, are valid for the case of receivers cooperation (\ref{scenario:receivers}) as well as the case of transmitters cooperation (\ref{scenario:transmitter}).
Therefore, we have shown the reciprocity of DoF vs. Backhaul load trade--off for both receivers and transmitters cooperation.
This was previously discussed in \cite{wang11a} and \cite{wang11b} for two--user interference channel for rate versus backhaul load trade-off.
Here, we extend the result to $K$--user interference channels in terms of DoF versus backhaul load.
To be more specific, we show that the effect of transmitters cooperation and receivers cooperation on the DoF are similar and the gain from exploiting either of them is the same.
\end{remark}

\begin{remark}\label{remark:achievable}  For the achievability, we use time-sharing between two corner points. 
In case of no collaboration (i.e., $\alpha=0$), we use interference alignment to achieve DoF of $\frac{M}{2}$ per user. 
 On the other hand, to eliminate the entire effect of interference, and approximately achieving the capacity of $K$ interference-free MIMO links, we follow \emph{centralized processing} as follows:
\begin{itemize} 
\item{Receivers Cooperation}:
A quantized version of the received signals at all the receivers are collected at one receiver.
 At that receiver, the decoding is done jointly, and the decoded messages are sent back to the corresponding receivers.
\item{Transmitters Cooperation:} The messages corresponding to all the transmitters are collected at one transmitter.
That transmitter encodes each of the messages, and the encoded signals then go through to a linear transformation, by multiplying to the inverse of the channel coefficient matrix (i.e., zero forcing).
The resulting signals are sent back to the corresponding transmitter, and subsequently, wireless transmission phase takes place.
Thus, each receiver receives the interference free version of the encoded message of its corresponding transmitter.
 \end{itemize}
\end{remark}

\begin{remark} Using the results of Theorem~\ref{theorem:bound_receiver}, we settled down the problem raised in \cite{ntranos2015cooperation}.
To be more specific, we solve the problem of characterizing the DoF vs. backhaul load trade-off region for $K$--user interference channel with receiver backhaul cooperation.
In case of \textit{Even} number of users, \eqref{eq:dof_receiver_b_e} characterizes full trade-off region.
Moreover in case of \textit{Odd} number of users, with finite number of antennas, the gap characterized in \eqref{eq:dof_receiver_b_o} vanishes for large network size, $K$.
\end{remark}

\begin{remark} The authors in \cite{ntranos2015cooperation} show that in case of three--user Interference channel with receivers cooperation and single antenna users, it is possible to reduce the amount of required backhaul load by employing the concept of \emph{Cooperation Alignment} and decentralized encoding and decoding.
Note that the problem formulation admits the solutions in which interference management is performed through pairwise collaboration.
However, our results show that the most straightforward scheme in which one node collects all the signals, perform interference management, and send back the computed signals to the corresponding nodes is (almost) optimal.  
\end{remark}

\begin{remark} According to \eqref{eq:dof_receiver_b_o}, the minimum required backhaul load for $\DoF^*(\alpha)=M$,  is given by $\alpha_{\min} = \frac{2M(K-1)}{K+1}$.
This bound is tight for the case of three--user interference channel with single antenna users, which is consistent with the result of \cite{ntranos2015cooperation}.
It is worth noting that the difference between $\alpha_{\min}$ and the required backhaul load in the Centralized Scheme (briefly discussed in Remark~\ref{remark:achievable}) is $\frac{2M(K-1)}{K(K+1)}$, which for finite number of antennas shrinks as $K$ increases.
\end{remark}

\begin{corollary}\label{cor:01}
For the full DoF vs. backhaul load trade-off region of a $K$--user Interference Channel with backhaul cooperation, we have
\begin{align*}
\DoF^*(\alpha)= \min \{M,\frac{1}{2} (M+\frac{\alpha}{2})\},
\end{align*}
for large enough number of users $K$.
\end{corollary}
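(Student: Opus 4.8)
The plan is to obtain Corollary~\ref{cor:01} as a direct consequence of Theorem~\ref{theorem:bound_receiver_e} and Theorem~\ref{theorem:bound_receiver_o}, by letting the number of users $K$ grow. The only analytic ingredient is the elementary observation that the $K$--dependent coefficients appearing in those two statements converge to $\tfrac12$:
\[
\frac{K}{2(K-1)}\ \xrightarrow[K\to\infty]{}\ \frac12 ,
\qquad
\frac{K+1}{2K}\ \xrightarrow[K\to\infty]{}\ \frac12 .
\]
Since the target expression $\min\{M,\tfrac12(M+\tfrac{\alpha}{2})\}$ is precisely what one gets by substituting these limiting coefficients, I would split the argument according to the parity of $K$, exactly as in the two theorems.

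For even $K$, Theorem~\ref{theorem:bound_receiver_e} gives the exact value $\DoF^*(\alpha)=\min\{M,\tfrac12(M+\tfrac{K}{2(K-1)}\alpha)\}$. Both branches of the outer $\min$ are affine (hence continuous) in the coefficient $\tfrac{K}{2(K-1)}$, so letting $K\to\infty$ through even values yields pointwise convergence to $\min\{M,\tfrac12(M+\tfrac{\alpha}{2})\}$. A short estimate upgrades this to uniform convergence over all $\alpha\ge 0$: comparing the two ``affine, then capped at $M$'' profiles, their difference is maximized at the kink $\alpha=\tfrac{2M(K-1)}{K}$ and equals $\tfrac{M}{2K}$; this quantifies what ``large enough $K$'' means. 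For odd $K$, Theorem~\ref{theorem:bound_receiver_o} sandwiches $\DoF^*(\alpha)$ between $\min\{M,\tfrac12(M+\tfrac{K}{2(K-1)}\alpha)\}$ and $\min\{M,\tfrac{K+1}{2K}(M+\tfrac{\alpha}{2})\}$. The lower bound is within $\tfrac{M}{2K}$ of $\min\{M,\tfrac12(M+\tfrac{\alpha}{2})\}$ by the computation just made, and a parallel computation shows the upper bound lies within $\tfrac{M}{K+1}$ of it (the worst case now being the kink $\alpha=\tfrac{2M(K-1)}{K+1}$). Squeezing between the two bounds gives the claim for odd $K$, and combining the two parities finishes the argument.

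There is essentially no obstacle in this step: the whole content is supplied by Theorems~\ref{theorem:bound_receiver_e} and~\ref{theorem:bound_receiver_o}, and the only point requiring care is the precise meaning of ``for large enough number of users $K$''. Since for every finite $K$ the coefficient $\tfrac{K}{2(K-1)}$ is strictly above $\tfrac12$, the stated equality holds only asymptotically; the correct rigorous reading is either the uniform-in-$\alpha$ limit as $K\to\infty$, or equivalently that for fixed large $K$ one has $\DoF^*(\alpha)=\min\{M,\tfrac12(M+\tfrac{\alpha}{2})\}$ up to an additive error of order $M/K$, which is exactly the bound extracted from the even case and from the sandwich in the odd case.
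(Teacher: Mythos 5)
Your proposal is correct and follows essentially the same route as the paper, which states the corollary as a direct limiting consequence of Theorems~\ref{theorem:bound_receiver_e} and~\ref{theorem:bound_receiver_o} (the coefficients $\tfrac{K}{2(K-1)}$ and $\tfrac{K+1}{2K}$ both tending to $\tfrac12$, with the odd-$K$ gap vanishing as $K$ grows). Your explicit error bounds of $\tfrac{M}{2K}$ and $\tfrac{M}{K+1}$ at the respective kinks are a correct and welcome quantification of what the paper leaves implicit in the phrase ``for large enough $K$''.
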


\begin{remark}\label{remark:01}
Consider the scenario in which we have $K$, receivers each equipped with $M$ antennas.
Corresponding to each receiver there exist $M$ single antenna transmitters, i.e., $MK$ total transmitters. 
In this scenario, the results of Theorem~\ref{theorem:bound_receiver} hold. 
\end{remark}

\begin{remark}\label{remark:02}
Consider another scenario in which we have $K$ transmitters each equipped with $M$ antennas.
Corresponding to each transmitter there exist $M$ single antenna receivers, i.e., $MK$ total receivers. 
The results of Theorem~\ref{theorem:bound_receiver} holds in this scenario. 
\end{remark}

The proofs for Remarks~\ref{remark:01} and~\ref{remark:02} follow the same line of proof of Theorem~\ref{theorem:bound_receiver} and hence are omitted.

\subsection{General Wireless Networks}\label{sec:general}
In this subsection, we introduce a condition on wireless network connectivity such that in the presence of that condition, the class of $(2M,M)$--centralized schemes are optimum, for large values of $K$.
In a general wireless network, the elements of the adjacency matrix are allowed to become zero, i.e., the channel coefficient matrix $\vm{H}_{ij}$, are either zero or full rank, for all $i\in\mathcal{R}$ and $j\in\mathcal{T}$.
Here, first we introduce Extended Hall Condition as follows.
\begin{condition}[Extended Hall's Condition]\label{cond:ehc}
Let $\ell$ be any arbitrary integer in $\{1,2,\ldots,\ceil{\frac{K}{2}}\}$.
In the equivalent bipartite graph, for each group of $\ell$ arbitrary subset $\mathcal{S}\subset\mathcal{T}$ of transmitters, we have
\begin{align*}
|\mathcal{N}_\mathcal{R}(\mathcal{S})|\ \geq \floor{\frac{K}{2}}+\ell,
\end{align*}
where $\mathcal{N}_\mathcal{R}(\mathcal{S})$ is defined according to Definition~\ref{def:neigh} in Appendix~\ref{app:graph}.
\end{condition}
The reason for the appellation of the above condition is its similarity to the \nameref{cond:hall-cond} (Condition~\ref{cond:hall-cond}).
We have the following theorems.
\begin{Theorem}\label{theorem:condition}
Consider a $K$-user interference channel with backhaul cooperation where the \nameref{cond2} (Condition~\ref{cond2}) holds and all the receivers and transmitters are equipped with $M$ antennas.
Whenever \nameref{cond:ehc} (Condition~\ref{cond:ehc}) holds, the class of $(2M-M)$--centralized schemes is optimum, for large values of $K$. 
\end{Theorem}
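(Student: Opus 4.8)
The plan is to establish the two sides of ``optimality'' separately: (a) that a member of the $(2M,M)$--centralized class of Definition~\ref{def:centralized_scheme} is achievable under Condition~\ref{cond2} alone, and (b) that under Condition~\ref{cond:ehc} the converse behind Theorem~\ref{theorem:bound_receiver} still applies with full strength, so that for large $K$ this operating point cannot be beaten by any (possibly decentralized) scheme.

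For (a) I would run the centralized scheme described in Remark~\ref{remark:achievable}. Stacking the per-node signals into $\vm{y}=[\vm{y}_1;\dots;\vm{y}_K]$ and $\vm{x}=[\vm{x}_1;\dots;\vm{x}_K]$, let $\vm{H}\in\mathbb{C}^{MK\times MK}$ be the block channel matrix. Condition~\ref{cond2} makes each diagonal block $\vm{H}_{ii}$ full rank with probability one, so setting every off-diagonal block to zero turns $\det\vm{H}$ into $\prod_i\det\vm{H}_{ii}$, which is a nonzero polynomial in the (continuously distributed) channel entries; hence $\vm{H}$ is invertible almost surely, even though many of its blocks may be zero. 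The central transmitter then precodes by $\vm{H}^{-1}$ (zero forcing), or the central receiver inverts $\vm{H}$ after collecting the quantized observations, so that each receiver sees its own codeword interference free and $\DoF_i=M$. The central node collects $K-1$ messages (or quantized observations) of entropy $M\log P+o(\log P)$ each and returns $K-1$ processed signals of the same order, for a per-user backhaul load $\tfrac{2(K-1)}{K}M\le 2M$ that tends to $2M$; since one node recovers all messages, this is a valid member of the class. Condition~\ref{cond:ehc} plays no role in this part.

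For (b) I would transplant the ``balanced split'' converse behind Theorem~\ref{theorem:bound_receiver}: average over all partitions of the cooperating nodes into parts $\mathcal{A},\mathcal{B}$ of sizes $\ceil{K/2},\floor{K/2}$ (with transmitter $i$ and receiver $i$ always in the same part) and pick one whose total rate on the links crossing the cut is at most the average of that quantity over all balanced partitions; grant full cooperation inside each part for free; and apply, to the resulting two-user MIMO interference channel with $M\ceil{K/2}$ and $M\floor{K/2}$ antennas at the two super-users, the DoF-versus-cut-backhaul bound established in Section~\ref{sec:two-user}. Combining that bound with the estimate on the crossing rate reproduces the converse half of Theorem~\ref{theorem:bound_receiver} and forces $\alpha\to 2M$ as $K$ grows whenever $\DoF^*(\alpha)=M$. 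The only step that could become loose in a general wireless network is the two-user bound itself, which requires the two cross-cut channels to be of maximal rank; by the reciprocity recorded after Theorem~\ref{theorem:bound_receiver} the same reasoning covers transmitters cooperation.

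Hence the crux --- and what I expect to be the main obstacle --- is a purely combinatorial lemma: Condition~\ref{cond:ehc} implies that for \emph{every} balanced cut both cross-cut channels are full rank (equal to $M\floor{K/2}$) with probability one. I would prove it from Condition~\ref{cond:ehc} directly. For any $\mathcal{S}\subseteq\mathcal{T}_\mathcal{B}$ with $|\mathcal{S}|=\el\le\floor{K/2}$, Condition~\ref{cond:ehc} gives $|\mathcal{N}_\mathcal{R}(\mathcal{S})|\ge\floor{K/2}+\el$; since at most $\floor{K/2}$ of those neighbours lie in $\mathcal{R}_\mathcal{B}$, we get $|\mathcal{N}_{\mathcal{R}_\mathcal{A}}(\mathcal{S})|\ge\el$. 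Applying the same inequality to $\mathcal{U}=\mathcal{T}_\mathcal{A}\setminus\mathcal{N}_{\mathcal{T}_\mathcal{A}}(\mathcal{S}')$ (whose $\mathcal{R}$-neighbourhood avoids $\mathcal{S}'$) yields $|\mathcal{N}_{\mathcal{T}_\mathcal{A}}(\mathcal{S}')|\ge|\mathcal{S}'|$ for every $\mathcal{S}'\subseteq\mathcal{R}_\mathcal{B}$. By the defect form of Hall's theorem, these are exactly the conditions under which the $M$-fold blow-ups of the $\mathcal{T}_\mathcal{B}\to\mathcal{R}_\mathcal{A}$ and $\mathcal{T}_\mathcal{A}\to\mathcal{R}_\mathcal{B}$ bipartite graphs admit matchings saturating their smaller side, so by a Schwartz--Zippel argument both cross-cut blocks are full rank almost surely, exactly as in the fully connected channel, and the two-user bound of Section~\ref{sec:two-user} applies verbatim. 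The leftover $o(1)$ terms and the usual $\ceil{K/2}$-versus-$\floor{K/2}$ slack for odd $K$ are handled just as in the proof of Theorem~\ref{theorem:bound_receiver_o}, and are exactly what the ``large values of $K$'' hypothesis absorbs.
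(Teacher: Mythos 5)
Your proposal is correct and follows essentially the same route as the paper: achievability via the centralized scheme of Section~\ref{subsection:achievable} (which needs only \nameref{cond2}), and a converse that reuses the balanced-partition two-user reduction of Section~\ref{subsection:converse}, reduced to showing that \nameref{cond:ehc} forces every cross-cut channel matrix to be full rank almost surely via Hall's theorem and the matching-implies-full-rank argument (the paper's Theorem~\ref{theorem:equivalency} together with Lemma~\ref{lemma:direct-cross} and Proposition~\ref{prop:2}; your derivation of the dual Hall condition for subsets of receivers is exactly the paper's Lemma~\ref{lemma:trans_rec}). The only difference is cosmetic: you prove just the one implication of Theorem~\ref{theorem:equivalency} that is actually needed rather than the full three-way equivalence.
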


\begin{remark}
Centralized scheme is of great interest from the practical point of view.
The above theorem deals with the cases where some of the wireless links between transmitters and receivers are zero.
In this case, the question is whether the DoF of $M$ and the backhaul of $2M$ form a point on the boundray of the region of backhual load versus DoF trade--off.
The above theorem states some necessary condition for such optimality.
\end{remark}

\begin{remark}
Note that \nameref{cond:ehc} (Condition~\ref{cond:ehc}) deals with exponentially many subset of nodes.
To be more precise, in order to check \nameref{cond:ehc} exhaustively is exponentially hard in network size.
Still, Theorem~\ref{theorem:poly} shows that similar to the \nameref{cond:hall-cond}, we can verify \nameref{cond:ehc} in polynomial time in our problem setting.
\end{remark}

\begin{Theorem}\label{theorem:poly}
In a $K$--user interference channel, \nameref{cond:ehc} (Condition~\ref{cond:ehc}) can be verified in a time polynomial in the network size $K$.
\end{Theorem}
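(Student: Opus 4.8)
The plan is to reduce the verification of \nameref{cond:ehc}~(Condition~\ref{cond:ehc}) to a sequence of maximum-flow computations, one for each candidate value of $\ell\in\{1,\dots,\ceil{K/2}\}$, and to argue that each such computation runs in polynomial time. The key observation is that the condition ``for every $\mathcal{S}\subset\mathcal{T}$ with $|\mathcal{S}|=\ell$ we have $|\mathcal{N}_\mathcal{R}(\mathcal{S})|\geq\floor{K/2}+\ell$'' is equivalent to the single statement
\begin{align*}
\min_{\substack{\mathcal{S}\subset\mathcal{T}\\ |\mathcal{S}|=\ell}} |\mathcal{N}_\mathcal{R}(\mathcal{S})| \;\geq\; \floor{K/2}+\ell .
\end{align*}
So it suffices, for each $\ell$, to compute (or lower-bound appropriately) the quantity $\mu(\ell):=\min_{|\mathcal{S}|=\ell}|\mathcal{N}_\mathcal{R}(\mathcal{S})|$. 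Since there are only $O(K)$ values of $\ell$, a polynomial-time subroutine for $\mu(\ell)$ yields the theorem.

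First I would handle the subroutine. Fix $\ell$. The function $\mathcal{S}\mapsto|\mathcal{N}_\mathcal{R}(\mathcal{S})|$ is a monotone submodular function on subsets of $\mathcal{T}$, and minimizing the neighborhood size of a set of prescribed cardinality in a bipartite graph is exactly the kind of problem solvable by a parametric min-cut / Hall-deficiency argument. Concretely, one can use the classical deficiency version of Hall's theorem: the minimum over all $\mathcal{S}\subseteq\mathcal{T}$ of $\big(|\mathcal{N}_\mathcal{R}(\mathcal{S})|-|\mathcal{S}|\big)$ is computable via one bipartite-matching / max-flow call, and more generally, by introducing a Lagrangian penalty on $|\mathcal{S}|$ (or, equivalently, by solving a sequence of min-cut problems on the bipartite graph $G_{tr}$ augmented with a source connected to $\mathcal{T}$ and a sink connected to $\mathcal{R}$ with suitably chosen capacities), one recovers the whole curve $\ell\mapsto\mu(\ell)$. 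Each such min-cut is computed on a graph with $O(K)$ vertices and $O(K^2)$ edges, hence in time polynomial in $K$; and since $\ell\mapsto\mu(\ell)-\ell$ is the lower concave envelope obtained from the submodular structure, $O(K)$ parameter choices suffice to pin down every needed value. Then I would simply check, for each $\ell\in\{1,\dots,\ceil{K/2}\}$, whether $\mu(\ell)\geq\floor{K/2}+\ell$; \nameref{cond:ehc} holds if and only if all these checks pass.

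The main obstacle I anticipate is the cardinality constraint $|\mathcal{S}|=\ell$: plain max-flow/Hall machinery minimizes $|\mathcal{N}_\mathcal{R}(\mathcal{S})|-|\mathcal{S}|$ over \emph{all} $\mathcal{S}$, not over sets of a fixed size, and one must justify carefully that the fixed-cardinality minima are recovered from the parametric family without an exponential search. I would resolve this either (i) by appealing to submodular-function minimization under a cardinality constraint being polynomial in this special (bipartite-neighborhood) case, using the lattice structure of the minimizers of $|\mathcal{N}_\mathcal{R}(\mathcal{S})|-\lambda|\mathcal{S}|$ as $\lambda$ varies — the minimizers form a chain, so the attainable cardinalities interpolate monotonically — or (ii) by the even more elementary route of observing that it is enough to verify the condition for the \emph{smallest} violating $\ell$, and that a single run of the Hopcroft–Karp / Hungarian-style deficiency computation already certifies whether $|\mathcal{N}_\mathcal{R}(\mathcal{S})|\geq\floor{K/2}+|\mathcal{S}|$ for all $\mathcal{S}$ with $|\mathcal{S}|\leq\ceil{K/2}$ at once, by checking that the maximum deficiency $\max_{\mathcal{S}}\big(|\mathcal{S}|-|\mathcal{N}_\mathcal{R}(\mathcal{S})|+\floor{K/2}\big)$ restricted to $|\mathcal{S}|\le\ceil{K/2}$ is $\le 0$. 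Wrapping this up, I would state the total running time as $O(K)$ invocations of a max-flow routine on an $O(K)$-vertex graph, which is polynomial in $K$, completing the proof.
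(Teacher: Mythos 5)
Your high-level plan --- polynomially many max-flow calls --- has the same flavor as the paper's proof, but the specific reduction you build it on does not work, and the obstacle you yourself flag is exactly where it fails. The subroutine everything rests on, computing $\mu(\ell)=\min_{|\mathcal{S}|=\ell}|\mathcal{N}_\mathcal{R}(\mathcal{S})|$ for each fixed $\ell$, is the minimum-$\ell$-union problem; deciding whether $\mu(\ell)\leq K-\ell$ is the balanced-biclique problem in the bipartite complement, which is NP-complete, so no parametric min-cut can deliver $\mu(\ell)$ exactly. The Lagrangian family $|\mathcal{N}_\mathcal{R}(\mathcal{S})|-\lambda|\mathcal{S}|$ recovers only the lower convex envelope of $\{(\ell,\mu(\ell)):\ell=1,\dots,K\}$: the nested structure of the minimizers does not make every cardinality attainable (the breakpoint cardinalities can jump), and at skipped values you obtain the envelope, not $\mu(\ell)$. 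Worse, this envelope is systematically below the threshold line on the range you need: since $\mu(K)\leq K$ while $\floor{K/2}+\ell>K$ for $\ell>\ceil{K/2}$, the chord from $(1,\mu(1))$ to $(K,\mu(K))$ pushes the envelope at $\ell=\ceil{K/2}$ strictly below $K=\floor{K/2}+\ceil{K/2}$ whenever some transmitter has degree less than $K$. Concretely, for $K=4$ with each transmitter missing exactly one receiver, no two missing the same one and none missing its own, \nameref{cond:ehc} (Condition~\ref{cond:ehc}) holds ($\mu(1)=3$, $\mu(2)=4$), yet the envelope at $\ell=2$ equals $10/3<4$, so your test (i) reports a false violation. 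Your test (ii) is circular: a single matching/deficiency computation evaluates $\max_{\mathcal{S}}\big(|\mathcal{S}|-|\mathcal{N}_\mathcal{R}(\mathcal{S})|\big)$ over \emph{unrestricted} $\mathcal{S}$, which with the $+\floor{K/2}$ offset is always positive (take $\mathcal{S}=\mathcal{T}$); you never explain how one matching run imposes the restriction $|\mathcal{S}|\leq\ceil{K/2}$, and that restriction is the entire difficulty.

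The paper's proof sidesteps all of this by refusing to slice over $\ell$. A violating set $\mathcal{S}$ together with $\mathcal{R}\backslash\mathcal{N}_\mathcal{R}(\mathcal{S})$ is precisely a \emph{proper} independent set of size exceeding $\ceil{K/2}$ (Lemma~\ref{lemma:IS-cond1}), so the cardinality cap on $\mathcal{S}$ collapses into the mere requirement that the independent set meet both sides; maximum independent sets in bipartite graphs are polynomial by K\"onig-type duality, and properness is enforced by anchoring each candidate pair $(a,b)$ with infinite-capacity source/sink edges and taking one min-cut per pair, $K^2$ flow computations in total (Definition~\ref{def:G_hat}, Lemmas~\ref{claim:min-cut} and~\ref{lemma:poly-pis}). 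The lesson is that the simultaneous condition over all $\ell$ against the linear threshold $\floor{K/2}+\ell$ is tractable precisely because it complements into an independent-set question with no cardinality constraint; decomposing it into fixed-cardinality subproblems destroys that structure and lands you on an NP-hard problem. As written, the argument has a genuine gap.
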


\begin{remark}
Together, Theorems~\ref{theorem:condition} and~\ref{theorem:poly} show that, in a specific configuration of the wireless channel, one is able to check whether the centralized scheme is optimal in a reasonable time. 
\end{remark}

\begin{remark}
In the proof of Theorem~\ref{theorem:poly} (Section~\ref{sec:com}), we show that the above problem reduces to \emph{Max-Flow Min-Cut} problem which is solvable in polynomial time.
\end{remark}

In the following section, we focus on two--user interference channel which forms the foundation of the proofs for Theorem~\ref{theorem:bound_receiver}.

\section{Two--User Interference Channels}\label{sec:two-user}
In this section, we focus on the case of two--user interference channels with backhaul cooperation.
The problem setup is similar to Section~\ref{section:formulation}, except we assume that transmitter $i$ is equipped with $M_i$ antennas and receiver $i$ is equipped with $N_i$ antennas.
Specifically, in this section we remove the assumption of equal number of antennas both for the transmitters and the receivers.

Lemma~\ref{lemma:MIMO_dof_bound} characterizes the DoF region of the general two--user interference channel. 
This DoF regions forms the main ingredient for the proof of Theorem~\ref{theorem:bound_receiver} (Section~\ref{sec:proof}), as well as proof of Theorem~\ref{theorem:two-user} (to be discussed later in this section).

\begin{lemma}\label{lemma:MIMO_dof_bound}
For a two--user interference channel with backhaul cooperation, the DoF region is characterized by
\begin{align}
\DoF_1 + \DoF_2 \leq \min\{N_1,(M_1-N_2)^+ \} + \min\{N_2, M_1+M_2 \} + c_B^{12},\label{eq:MIMO_dof_bound_1}\\
\DoF_1 + \DoF_2 \leq \min\{N_2,(M_2-N_1)^+ \} + \min\{N_1, M_1+M_2 \} + c_B^{21}.\label{eq:MIMO_dof_bound_2}
\end{align}
This result holds for receivers cooperation scenario as well as transmitters cooperation scenario.
\end{lemma}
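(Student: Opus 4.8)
The plan is to establish the two symmetric bounds \eqref{eq:MIMO_dof_bound_1} and \eqref{eq:MIMO_dof_bound_2} by a genie-aided argument, treating the receivers-cooperation and transmitters-cooperation scenarios separately but in parallel. By symmetry (swapping the roles of the two users) it suffices to prove \eqref{eq:MIMO_dof_bound_1}, since \eqref{eq:MIMO_dof_bound_2} follows by relabeling. I will work in the high-$\SNR$/DoF normalization throughout, so that terms like $o(\log P)$ and the noise contributions are harmless and will be absorbed at the end when dividing by $\log P$ and taking $\liminf$.

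First, for the \textbf{receivers-cooperation} case: give receiver~$2$ a genie that supplies it with (a quantized version of) everything receiver~$1$ observes, namely $[\vm{y}_1(\tau)]_{\tau=1}^n$, together with all backhaul messages $M_1^{[n]}$ and $M_2^{[n]}$; this can only enlarge the achievable region. The cost of handing receiver~$1$'s signal over the backhaul is at most $c_B^{12}\log P + o(\log P)$ degrees of freedom, by the definition of $c_B^{12}$ and the entropy bound \eqref{eq:backhaul_rate}. Now receiver~$2$, equipped with $\vm{y}_1,\vm{y}_2$ and the backhaul traffic, must decode $W_2$ and (by Fano) can also decode $W_1$ from the genie-aided $\vm{y}_1$; so it decodes both messages. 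The combined observation $(\vm{y}_1,\vm{y}_2)$ is an $(N_1+N_2)\times$ vector channel driven by $(\vm{x}_1,\vm{x}_2)\in\mathbb{C}^{M_1}\times\mathbb{C}^{M_2}$, which carries at most $\min\{N_1+N_2,\,M_1+M_2\}$ DoF. I then peel this apart into the two terms in \eqref{eq:MIMO_dof_bound_1}: after decoding and subtracting $\vm{x}_1$, the residual channel to $W_2$ seen at the joint receiver is a MIMO point-to-point link with $N_1+N_2$ receive and $M_2$ transmit antennas but where $N_2$ of the receive dimensions were "used up" resolving $\vm{x}_1$ through $\vm{H}_{21}$ — the standard MAC/interference-channel dimension-counting gives $W_1$ at most $\min\{N_1,(M_1-N_2)^+\}$ DoF in the presence of the desired stream, and $W_2$ at most $\min\{N_2,M_1+M_2\}$, summing to the right-hand side. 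This dimension bookkeeping — making precise which receive subspace is available to which message once the genie is in place — is the step I expect to be the main obstacle, and it is essentially the same computation as in the two-user MIMO IC DoF converse of Jafar--Fakhereddin, adapted to carry the extra additive $c_B^{12}$.

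Second, for the \textbf{transmitters-cooperation} case: here the reciprocity claimed in the statement must be argued rather than invoked. One route is to observe that in the transmitter-cooperation model the backhaul message $m_{1\to 2}$ is a function of $W_1$ and incoming backhaul, so its entropy is again bounded by $c_B^{12}\log P + o(\log P)$; then give receiver~$2$ a genie consisting of $\vm{x}_1$ in the subspace not nulled by $\vm{H}_{21}$ (equivalently, the part of $W_1$ that transmitter~$2$ learns through cooperation), and run the same joint-decoding dimension count. Alternatively — and this is cleaner — invoke the known DoF reciprocity of Gaussian networks with node cooperation (the reverse-channel argument of Maddah-Ali--Tse / the compound-channel duality), under which the transmitter-cooperation IC with backhaul $c_B^{12},c_B^{21}$ has the same DoF region as the receiver-cooperation IC with the backhaul directions reversed, and the form of \eqref{eq:MIMO_dof_bound_1}--\eqref{eq:MIMO_dof_bound_2} is invariant under that reversal together with the user swap. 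I would present the receiver-cooperation proof in full and then give the reciprocity reduction for the transmitter case in a short paragraph, flagging that the bound on $H(m_{i\to j})$ is where the two models meet. Finally I would collect the $\liminf_{P\to\infty}(\cdot)/\log P$ limits, at which point all $o(\log P)$ genie-quantization and noise terms vanish, yielding exactly \eqref{eq:MIMO_dof_bound_1} and \eqref{eq:MIMO_dof_bound_2}.
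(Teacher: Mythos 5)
Your overall strategy (genie-aided converse plus dimension counting, with the two cooperation modes treated separately) points in the right direction, but the genie you construct for the receivers-cooperation case is too generous and cannot yield \eqref{eq:MIMO_dof_bound_1}. If receiver~2 is handed the full $[\vm{y}_1(\tau)]_{\tau=1}^n$ for free, the only bound that follows is the full-cooperation bound $\DoF_1+\DoF_2\le\min\{N_1+N_2,M_1+M_2\}$, and no amount of after-the-fact ``peeling'' recovers the additive $c_B^{12}$: that quantity is the entropy of the backhaul messages actually used by the scheme, not ``the cost of handing receiver~1's signal over the backhaul'' (conveying $\vm{y}_1^n$ would cost on the order of $N_1\log P$ per symbol, which can vastly exceed $c_B^{12}\log P$). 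Concretely, take $M_1=M_2=N_1=N_2=1$ and $c_B^{12}=c_B^{21}=0$: the lemma asserts $\DoF_1+\DoF_2\le 1$, whereas your genie only gives $\le 2$. The paper's genie goes the other way around: it gives receiver~1 the side information $(\vm{x}_2^n,\vm{y}_2^n)$, leaves receiver~2 with only $(\vm{y}_2^n,M_2^{[n]})$ so that the backhaul enters through $I(\vm{x}_2^n;M_2^{[n]}\mid\vm{y}_2^n)\le H(M_2^{[n]})=nR_B^{[1,2]}$, and obtains the term $\min\{N_1,(M_1-N_2)^+\}$ from the conditional entropy $h(\vm{H}_{11}\vm{x}_1+\vm{z}_1\mid \vm{H}_{21}\vm{x}_1+\vm{z}_2)$ after the $h(\vm{s}_1)$ contributions cancel between the two mutual-information terms. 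That cancellation is what produces the $(M_1-N_2)^+$ exponent; your decode-and-subtract bookkeeping at a super-receiver that already possesses both signals does not.

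The transmitters-cooperation case is also not dispatched by citing a reciprocity principle. The reciprocity of the trade-off is a \emph{consequence} of this lemma (the paper states this explicitly), and there is no off-the-shelf duality for interference channels with rate-limited, directed backhaul that you could invoke without essentially reproving the bound. The genuine difficulty the paper isolates is that after pre-transmission cooperation $\vm{x}_1^n$ and $\vm{x}_2^n$ are correlated, so the step $h(\vm{y}_1^n,\vm{y}_2^n\mid\vm{x}_2^n)=h(\vm{H}_{11}\vm{x}_1^n+\vm{z}_1^n,\vm{s}_1^n)$ used in the receiver case fails; the paper's proof instead conditions throughout on $(W_2,M_2^{[n]})$, of which $\vm{x}_2^n$ is a deterministic function, to restore the needed independence. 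Finally, your plan to ``absorb $o(\log P)$ terms at the end'' skips a real step: the power constraint \eqref{eq:ave_pow_const} is a per-codeword time average, so converting the per-symbol entropy bounds into the log-det form requires the concavity/Jensen argument over the time-varying covariances $\vm{Q}(t)$ and the monotonicity lemmas (Lemmas~\ref{lemma:matrix_function}--\ref{lemma:psd_trace}) before the pre-log factors of Lemmas~\ref{lemma:dof_simple_sum} and~\ref{lemma:dof_complex} can be extracted.
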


\begin{remark}
The authors in \cite{ashraphijuo2014capacity} have derived the same result (Theorem 2) for the case of receivers cooperation.
However, here we prove the bounds for both transmitters cooperation and receivers cooperation.
The challenge to prove Lemma~\ref{lemma:MIMO_dof_bound} for transmitters cooperation case is that the transmitting signals are not independent, in spite of the fact that the messages are independent.
This is due to the fact that the cooperation phase has taken place before the wireless transmission phase and the transmitters are, to some extend, aware of each others' messages.
Therefore, the method used in \cite{ashraphijuo2014capacity} is not applicable anymore.
Our main contribution here, is to deal with such dependency and showing the reciprocity of transmitters and receivers cooperation case for general two--user interference channel with backhaul cooperation.
To handle this case, we bound the effect of dependency between transmitting messages and show that in terms of degrees of freedom this does not affect the results.
On the other hand, in \cite{ashraphijuo2014capacity} the power constraint is of the form $\mathbb{E}\{\text{Tr}(\vm{x}_i \vm{x}_i^*)\}\leq P$, which is different from \eqref{eq:ave_pow_const}.
This will also affect the proof, since \eqref{eq:ave_pow_const} is more general.
\end{remark}

\begin{remark}
It is worth mentioning that our achievable scheme is completely different of the one in~\cite{ashraphijuo2014capacity}.
The proposed achievable scheme  is based on the centralized scheme, while in~\cite{ashraphijuo2014capacity} a Han--Kobayashi achievable scheme has been proposed.
Therefore, the proposed achievable scheme is considerably simpler.
\end{remark}

\begin{remark}\label{remark:two-user-constraint}
Following the lines of the proof, one is able to verify that the only constraints for the results of Lemma~\ref{lemma:MIMO_dof_bound} to hold, is that the channels coefficient matrices from transmitter $i$ to receivers $i$ and $j$ must be full rank, for $i=1,2$ and $j\neq i$.
This means that the direct and cross channel coefficient matrices must be full rank.
Such observation is of great importance for the discussions in Section~\ref{sec:general}. 
\end{remark}

Using the results from Lemma \ref{lemma:MIMO_dof_bound}, we have the following Theorem for two--user interference channels.  

\begin{Theorem}\label{theorem:two-user}
For a two--user interference channel with backhaul cooperation and average per user backhaul load $\alpha$, where transmitter $i$ is equipped with $M_i$ antennas and receiver $i$ is equipped with $N_i = M_i$ antennas, we have 
\begin{align}
\DoF^*(\alpha) = \frac{1}{2}\min\bigg\{ M_1 + M_2\ ,\ \alpha + \max(M_1,M_2) \bigg\}.
\end{align} 
\end{Theorem}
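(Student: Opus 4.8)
\emph{Converse.} The plan is to specialize Lemma~\ref{lemma:MIMO_dof_bound} to $N_i=M_i$. With this substitution, $\min\{N_1,(M_1-N_2)^+\}=\min\{M_1,(M_1-M_2)^+\}=(M_1-M_2)^+$ and $\min\{N_2,M_1+M_2\}=M_2$, so \eqref{eq:MIMO_dof_bound_1} becomes $\DoF_1+\DoF_2\le (M_1-M_2)^+ + M_2 + c_B^{12}=\max(M_1,M_2)+c_B^{12}$, and symmetrically \eqref{eq:MIMO_dof_bound_2} yields $\DoF_1+\DoF_2\le \max(M_1,M_2)+c_B^{21}$. Adding the two and using that for $K=2$ the normalized average backhaul load obeys $c_B^{12}+c_B^{21}\le 2\alpha$ (directly from the definitions of $\bar R_b$ and $\alpha$), I obtain $\DoF_1+\DoF_2\le \alpha+\max(M_1,M_2)$. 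Together with the single-user bound $\DoF_i\le\min(M_i,N_i)=M_i$, this gives $\DoF^*(\alpha)=\tfrac12\sup(\DoF_1+\DoF_2)\le\tfrac12\min\{M_1+M_2,\ \alpha+\max(M_1,M_2)\}$, which is the converse.

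\emph{Achievability.} It is enough to realize the two corner points $(\alpha,\DoF^*)=(0,\tfrac12\max(M_1,M_2))$ and $(\min(M_1,M_2),\tfrac12(M_1+M_2))$ and time-share; a short calculation shows the connecting segment has slope $\tfrac12$ and equals $\tfrac12(\alpha+\max(M_1,M_2))$ on $[0,\min(M_1,M_2)]$, while for $\alpha\ge\min(M_1,M_2)$ the second corner already attains the full DoF $\tfrac12(M_1+M_2)$ and the surplus backhaul is simply unused. The point $\alpha=0$ is the classical two-user MIMO interference channel, whose sum DoF is $\max(M_1,M_2)$, attained by zero forcing / interference alignment with no cooperation.

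\emph{The full-DoF corner.} Here the plan is to invoke a centralized scheme; assume $M_1\ge M_2$. For receivers cooperation: receiver $2$, whose observation $\vm{y}_2$ is $M_2$-dimensional with power $\Theta(P)$, quantizes $\vm{y}_2$ to the noise floor using $M_2\log P+o(\log P)$ bits per channel use and forwards it to receiver $1$; receiver $1$ then has $M_1+M_2$ effective observations of the $M_1+M_2$ transmitted streams and, by genericity of the channel matrices, jointly decodes $(W_1,W_2)$; finally receiver $1$ returns $\hat{W}_2$ to receiver $2$ at rate $M_2\log P$. Thus $c_B^{21}=c_B^{12}=M_2$, i.e.\ $\alpha=\tfrac12(c_B^{12}+c_B^{21})=M_2=\min(M_1,M_2)$, while $(\DoF_1,\DoF_2)=(M_1,M_2)$. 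For transmitters cooperation: transmitter $2$ forwards $W_2$ to transmitter $1$ ($M_2\log P$ bits); transmitter $1$, now knowing both messages, views the two transmit arrays jointly as one $(M_1+M_2)$-antenna broadcast transmitter serving receivers of dimensions $M_1$ and $M_2$, block-diagonalizes, and sends the resulting precoded signal $\vm{x}_2$ back to transmitter $2$ over the backhaul ($M_2\log P$ bits), after which each receiver sees an interference-free copy of its streams. Again $\alpha=\min(M_1,M_2)$. Reciprocity of the two cooperation modes (as used for Lemma~\ref{lemma:MIMO_dof_bound}) makes either argument sufficient.

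\emph{Main obstacle.} I expect the delicate part to be the bookkeeping in the quantize-and-forward step: one must check that quantizing $\vm{y}_2$ at a fidelity preserving all $M_2$ degrees of freedom costs exactly $M_2\log P+o(\log P)$ bits per channel use, so that the quantized backhaul link is DoF-negligible, and that after receiver $1$ stacks $\vm{y}_1$ with the quantized $\widehat{\vm{y}}_2$ the resulting $(M_1+M_2)\times(M_1+M_2)$ effective channel is full rank with probability one, permitting joint decoding at full DoF. In the transmitter-cooperation variant the corresponding subtlety is that the backhaul message carrying $\vm{x}_2$ is itself representable with $M_2\log P+o(\log P)$ bits and that the block-diagonalizing precoder for the aggregate transmit array exists generically. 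The remaining steps — the time-sharing arithmetic and the specialization of the already-established Lemma~\ref{lemma:MIMO_dof_bound} — are routine.
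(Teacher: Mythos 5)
Your proposal is correct and follows essentially the same route as the paper: the converse specializes Lemma~\ref{lemma:MIMO_dof_bound} to $N_i=M_i$, adds the two bounds, and combines with the single-user bound, while achievability time-shares between the no-cooperation corner $(0,\tfrac12\max(M_1,M_2))$ and the centralized full-DoF corner $(\min(M_1,M_2),\tfrac12(M_1+M_2))$. The only cosmetic difference is that the paper reaches the first corner by simply switching off the weaker pair rather than invoking the general two-user MIMO IC sum-DoF result, and it defers the quantize-and-forward details of the centralized corner to Remark~\ref{remark:achievable}.
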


\begin{proof}
To prove Theorem~\ref{theorem:two-user}, we first present the converse then the achievable scheme.
Without loss of generality, we assume that $M_1 \leq M_2$.
For the converse proof, we directly use the results of Lemma~\ref{lemma:MIMO_dof_bound}, where by setting $N_i = M_i$ we have
\begin{align*}
\begin{split}
\DoF_1 + \DoF_2 \leq \min\{M_1,(M_1-M_2)^+ \} + \min\{M_2, M_1+M_2 \} + c_B^{12},\\
\DoF_1 + \DoF_2 \leq \min\{M_2,(M_2-M_1)^+ \} + \min\{M_1, M_1+M_2 \} + c_B^{21}.
\end{split}
\end{align*}
By adding the above two inequalities, we have
\begin{align*}
\begin{split}
\DoF_1 + \DoF_2 \leq \frac{1}{2} \big( (M_2-M_1) +  (M_1+M_2) + c_B^{12} + c_B^{21} \big),
\end{split}
\end{align*}
or
\begin{align*}
\begin{split}
\DoF (\alpha) \leq \frac{ M_2 + \alpha }{2} .
\end{split}
\end{align*}
On the other hand, since there are $M_i$ antennas at each transmitter $i$ and each receiver $i$, the maximum achievable DoF equals $\frac{M_1 + M_2}{2}$ per user. 
This completes the converse proof.

For the achievable scheme, we use the time sharing between two corner points.
The first corner point is where no cooperation is allowed and we simply turn the firss transmitter--receiver pair off and let the second pair work to achieve a DoF of $\frac{M_2}{2}$ per user.
Moreover, the second corner point is obtained by eliminating the entire effect of interference, and approximately achieve DoF of $\frac{M_1 + M_2}{2}$ per user.
We use the central processing as introduced in Remark~\ref{remark:achievable} while the cooperating node two handles the processes.
This scheme requires $2M_1$ backhaul messages to achieve DoF of $\frac{M_1+M_2}{2}$ and the backhaul load is $M_1$ per user.
\end{proof}

\begin{remark}
For a two--user interference channel where each transmitter and each receiver are equipped with $M$ antennas, the DoF versus backhaul load trade of region is given by
\begin{align*}
\DoF^*(\alpha) = \min \big\{M, \frac{M+\alpha}{2}\big\},
\end{align*}
which is consistent with the results of Theorem~\ref{theorem:bound_receiver}.
\end{remark}

So far we have dealt with two--user multiple antenna interference channels.
Now we are ready to move to more general case and prove Theorem~\ref{theorem:bound_receiver}.

\section{Proof of Theorem~\ref{theorem:bound_receiver}}\label{sec:proof}
In this section, we first provide a converse proof for Theorem~\ref{theorem:bound_receiver}, and then we prove the achievability of the theorem.

\subsection{Converse Proof}\label{subsection:converse}\label{subsection:converse}
 First, we notice that since all the transmitters and all the receivers are equipped with $M$ antennas, we have $\DoF_i \leq M$, for all $i$, and consequently $\DoF \leq M$.
We call this as the single user bound.

For the rest of the proof we develop an upper--bound using the DoF region of a two--user interference channel with backhaul cooperation as stated in Lemma~\ref{lemma:MIMO_dof_bound}.
 
Let us partition the whole set of transmitter--receiver pairs, into two groups.
In this partitioning, there exists $K_1$ pairs of transmitters-receivers in the first group and $K_2=K-K_1$ pairs in the second group.
Note that such a partitioning is not unique and by putting different pairs of transmitter--receiver in different groups, we can form different realizations of such a partitioning.
Let $\mathcal{K}_i$ represent the set of indices of transmitter-receiver pairs in group $i$, and $\mathcal{K}$ denote the set of indices of all transmitter-receiver pairs.
For any $\mathcal{S}\subseteq \mathcal{K}$ we define 
\begin{align}
\DoF^*_\mathcal{S} \triangleq \sum_{i\in \mathcal{S}} \DoF^*_i.
\end{align}

Corresponding to each realization of the above partitioning, we form a two--user interference channel, in which the transmitters (receivers) in $\mathcal{K}_i$ fully cooperate with each other at zero backhaul cost.
In other words, the transmitters (receivers) in $\mathcal{K}_i$ form the multiple antenna transmitter $i$ (receiver $i$).
We also set the capacity of the backhaul link connecting cooperating node $i$ to cooperating node $j$ in the two--user interference channel equal to the sum of the capacities of the backhaul links connecting each of the cooperating nodes in group $i$ to each of the cooperating nodes in group $j$ of the original $K$--user interference channel (For an example see Fig.~\ref{fig:MIMO}).
\begin{align}\label{eq:backhaul_cap}
\begin{split}
\hat{c}_B^{[\mathcal{K}_1,\mathcal{K}_2]} \triangleq \sum_{i\in\mathcal{K}_1,j\in\mathcal{K}_2} c_B^{ij}, \\
\hat{c}_B^{[\mathcal{K}_2,\mathcal{K}_1]} \triangleq \sum_{i\in\mathcal{K}_2,j\in\mathcal{K}_1} c_B^{ij}.
\end{split}
\end{align}

\begin{figure}[t!]
\centering
\includegraphics[scale=0.9]{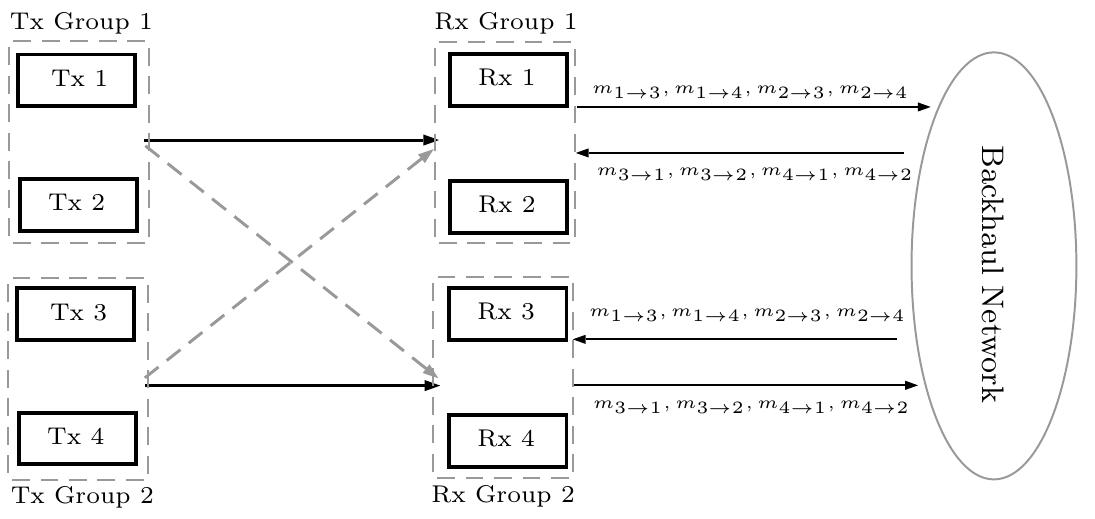}
\caption{This is a special realization of users partitioning. In this realization we partition receivers one and two to the first group of receivers and correspondingly, transmitters one and two to the first group of transmitters.
Permitting cost-free intra-group collaboration, the system is equivalent to a two--user two-antenna interference channel.
Receivers groups are able to collaborate through the backhaul network.}
\label{fig:MIMO}
\end{figure}

\begin{lemma}\label{claim:upper_bound}
For a given backhaul load limit, the total achievable DoF in each realization of the above partitioning, is upper bounded by the total achievable DoF of the two--user interference channel corresponding to that realization.
\end{lemma}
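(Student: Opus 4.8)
The plan is to show that any coding scheme for the original $K$-user channel that achieves a DoF vector $(\DoF_1,\ldots,\DoF_K)$ under the given backhaul constraint induces, without any loss, a valid coding scheme for the two-user interference channel associated with the partition $(\mathcal{K}_1,\mathcal{K}_2)$, achieving the aggregated DoF pair $(\DoF^*_{\mathcal{K}_1},\DoF^*_{\mathcal{K}_2})$ with backhaul capacities given by \eqref{eq:backhaul_cap}. Since Lemma~\ref{lemma:MIMO_dof_bound} (and, in the symmetric case, Theorem~\ref{theorem:two-user}) bounds the DoF region of that two-user channel, the desired inequality on $\DoF^*_{\mathcal{K}_1}+\DoF^*_{\mathcal{K}_2}$ follows.

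First I would describe the construction of the induced two-user scheme. In group $i$, we let the $K_i$ cooperating nodes be co-located and share all their observations and messages at zero cost: the ``super-transmitter $i$'' has all $K_i M$ antennas and all messages $\{W_k : k\in\mathcal{K}_i\}$, and the ``super-receiver $i$'' has all $K_i M$ antennas and must decode all $\{W_k : k\in\mathcal{K}_i\}$. The wireless channel between the super-nodes is exactly the block channel matrix obtained by stacking the $\vm{H}_{jk}$ with $j,k$ ranging over the appropriate groups, which is full rank (or has the required rank structure) under the standing assumptions, so Lemma~\ref{lemma:MIMO_dof_bound} applies verbatim with $M_i = N_i = K_i M$. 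The only nontrivial point is the backhaul: the inter-group backhaul messages of the original scheme are precisely $\{m_{i\to j} : i\in\mathcal{K}_1, j\in\mathcal{K}_2\}$ and their reverses; I would define the single backhaul message from super-node $1$ to super-node $2$ to be the concatenation (tuple) of all these, and likewise in the reverse direction. By subadditivity of entropy, $H$ of the concatenated message is at most $\sum_{i\in\mathcal{K}_1,j\in\mathcal{K}_2} H(m_{i\to j})$, so after normalizing by $\log P$ the induced backhaul capacities are bounded by the sums in \eqref{eq:backhaul_cap}. The intra-group messages $\{m_{i\to j} : i,j\in\mathcal{K}_\ell\}$ are simply internal computations of the super-node and incur no backhaul cost. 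One must also check a causality/timing point: in the receivers-cooperation model the super-receiver can replay the original message-passing schedule round by round, since each original message $m_{i\to j}(t)$ is a function of data already available to the super-node at the start of round $t$; in the transmitters-cooperation model the analogous replay works because each super-transmitter knows all $W_k$ in its group and the inter-group messages it has received.

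The key steps, in order: (1) define the super-nodes, their antenna counts, their message sets, and the stacked wireless channel; (2) define the induced inter-group backhaul messages as concatenations and invoke subadditivity of entropy to bound their rates by \eqref{eq:backhaul_cap}; (3) verify that the original encoders/decoders/message-passing functions, restricted and regrouped, form a legitimate scheme for the two-user channel with vanishing error probability (each super-receiver decodes all messages in its group because each original receiver did); (4) conclude that $(\DoF^*_{\mathcal{K}_1},\DoF^*_{\mathcal{K}_2})$ lies in the DoF region of the induced two-user channel, hence $\DoF^*_{\mathcal{K}_1}+\DoF^*_{\mathcal{K}_2} \leq \DoF^*_{\text{2-user}}$ for that realization. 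The main obstacle I anticipate is step (3) in the transmitters-cooperation case: because the cooperation phase precedes transmission, the transmitted signals within a group are correlated and one must be careful that ``giving the super-transmitter all group messages for free'' is genuinely a relaxation (it is, since extra message knowledge at a transmitter can only enlarge the achievable region) and that the power constraint \eqref{eq:ave_pow_const} is respected by the regrouped codewords — which it is, since the per-antenna signals are unchanged. Everything else is bookkeeping: co-locating nodes and deleting intra-group backhaul costs can only help, and concatenating messages can only inflate entropy subadditively, so no achievable point is lost in passing to the two-user surrogate.
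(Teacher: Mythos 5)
Your proposal is correct and follows essentially the same approach as the paper, which disposes of this lemma in one sentence ("any achievable solution on the original channel can be used on the corresponding two--user channel"); you have simply made explicit the details the paper leaves implicit, namely the co-location of each group into a super-node, the concatenation of inter-group backhaul messages with subadditivity of entropy giving the capacities in \eqref{eq:backhaul_cap}, and the causality and power-constraint checks.
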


\begin{proof}
Any achievable solution on the original interference channel with backhaul cooperation can be used on the corresponding two--user interference channel.
\end{proof}

According to Lemma~\ref{lemma:MIMO_dof_bound} and~\ref{claim:upper_bound}, one concludes that for any given realization of the partitioning, we have
\begin{align}\label{eq:tow_user_bound}
\begin{split}
\DoF^*_{\mathcal{K}_1} + \DoF^*_{\mathcal{K}_2} \leq M(K_1 - K_2)^+
+M K_2+ \hat{c}_B^{[\mathcal{K}_1,\mathcal{K}_2]},\\
\DoF^*_{\mathcal{K}_1} + \DoF^*_{\mathcal{K}_2} \leq M(K_2 - K_1)^+ +M K_1 + \hat{c}_B^{[\mathcal{K}_2,\mathcal{K}_1]},
\end{split}
\end{align}
where $M$ is the number of antennas for each individual transmitter (receiver) and $\hat{c}_B^{ij}$ is defined according to~\eqref{eq:backhaul_cap}.
In addition, $\mathcal{K} = \mathcal{K}_1\cup\mathcal{K}_2$ and $\mathcal{K}_1\cap\mathcal{K}_2 = \emptyset$, therefore $\DoF^*_{\mathcal{K}_1} + \DoF^*_{\mathcal{K}_2}$ equals the total achievable DoF, i.e.,
\begin{align*}
\sum_{k=1}^K \DoF_i = \DoF^*_{\mathcal{K}_1} + \DoF^*_{\mathcal{K}_2}.
\end{align*}

Here, we consider the proof for Theorems~\ref{theorem:bound_receiver_e} and~\ref{theorem:bound_receiver_o} separately.
\vspace{0.2 in}

\noindent
\textbf{Part I}\ (Converse for Theorem~\ref{theorem:bound_receiver_e}.)

In case of even number of users, we have $K=2m$ for some $m\in\mathbb{N}$. 
In this case we choose $K_1=K_2=m$.
By substituting $K_1$ and $K_2$ and adding both equations in \eqref{eq:tow_user_bound} we have
\begin{align}\label{eq:bound_e}
\DoF^*_{\mathcal{K}_1} + \DoF^*_{\mathcal{K}_2} \leq  Mm + \frac{\hat{c}_B^{[\mathcal{K}_1,\mathcal{K}_2]} +\hat{c}_B^{[\mathcal{K}_2,\mathcal{K}_1]}}{2},
\end{align}
for all realizations of $\mathcal{K}_1$ and $\mathcal{K}_2$ with $|\mathcal{K}_1|=|\mathcal{K}_2|=m$.

In order to form one of such realizations, we have ${K}\choose{m}$ different options.
Consider the realizations in which, pair $i$ is in one group and pair $j$ is in the other, for $i,j\in\mathcal{K}$ and $i\neq j$.
In such realizations, the capacity of the backhaul link connecting node $i$ to node $j$, denoted by $c_B^{ij}$, and the capacity of the backhaul link connecting node $j$ to node $i$ denoted by $c_B^{ji}$, show up in \eqref{eq:bound_e}.
In order to form a realization, we need to choose $m-1$ pairs, out of $K-2$ remaining pairs.
This means that there are ${K-2}\choose{m-1}$ different options. 

Using the above argument and summing \eqref{eq:bound_e} over all options, we have
\begin{align*}
\frac{1}{2} {{K}\choose{m}} \big(\sum_{i\in\mathcal{K}} \DoF^*_i - Mm\big) \leq \frac{1}{2} {{K-2}\choose{m-1}}\underset{i,j\in\mathcal{K},j\neq i}\sum c_B^{ij}.
\end{align*}
Thus
\begin{align*}
\frac{K!}{m!m!}\big(\sum_{i\in\mathcal{K}} \DoF^*_i - Mm\big) &\leq \frac{(K-2)!}{(m-1)!(m-1)!} \underset{i,j\in\mathcal{K},j\neq i}\sum c_B^{ij},
\end{align*}
therefore 
\begin{align*}
\sum_{i\in\mathcal{K}} \DoF^*_i &\leq m(M +\frac{1}{2(K-1)} \underset{i,j\in\mathcal{K},j\neq i}\sum c_B^{ij}).
\end{align*}
Since $m = \frac{K}{2}$, we have
\begin{align}\label{eq:converse-even}
\DoF^* = \frac{1}{K} \sum_{i\in\mathcal{K}} \DoF^*_i &\leq \frac{1}{2} \big(M + \frac{K}{2(K-1)} \alpha\big).
\end{align}

\noindent
\textbf{Part II}\ (Converse for Theorem~\ref{theorem:bound_receiver_o}.)

In case of odd number of users, we have $K=2m+1$, for some $m\in\mathbb{N}$. 
In this case we choose $K_1=m$ and $K_2=m+1$.
By substituting $K_1$ and $K_2$ and adding both equations in \eqref{eq:tow_user_bound} we have
\begin{align}\label{eq:bound_o}
\DoF^*_{\mathcal{K}_1} + \DoF^*_{\mathcal{K}_2} \leq M(m+1) + \frac{\hat{c}_B^{[\mathcal{K}_1,\mathcal{K}_2]}+\hat{c}_B^{[\mathcal{K}_1,\mathcal{K}_2]}}{2}.
\end{align}
for any realization of $\mathcal{K}_1$ and $\mathcal{K}_2$ with $|\mathcal{K}_1|=m$ and $|\mathcal{K}_2| = m+1$.

In order to form one of such realizations, we need to choose $m$ pairs, out of $K$ available pairs.
This means that there are ${K}\choose{m}$ different options.
Consider a realization in which, pair $i$ is in one group and pair $j$ is in the other, for some $i,j\in\mathcal{K}$ and $i\neq j$.
In this realization, the capacity of the backhaul link connecting node $i$ to node $j$, denoted by $c_B^{ij}$, and the capacity of the backhaul link connecting node $j$ to node $i$, denoted by $c_B^{ji}$, show up in \eqref{eq:bound_o}.
To form such a realization, we need to choose $m$ pairs, out of $K-2$ remaining pairs.

Using the above argument and summing \eqref{eq:bound_o} over all realizations we have
\begin{align*}
 {{K}\choose{m}} \big(\sum_{i\in\mathcal{K}} \DoF^*_i - M(m+1)\big) &\leq  {{K-2}\choose{m}} \underset{i,j\in\mathcal{K},j\neq i}\sum c_B^{ij}.
\end{align*}
Thus
\begin{align*}
 \frac{K!}{m!(m+1)!} \big(\sum_{i\in\mathcal{K}} \DoF^*_i - M(m+1)\big) &\leq  \frac{(K-2)!}{m!(m-1)!}\underset{i,j\in\mathcal{K},j\neq i}\sum c_B^{ij}.
\end{align*}
Consequently,
\begin{align*}
\sum_{i\in\mathcal{K}} \DoF^*_i &\leq  (m+1) (M + \frac{1}{2K}\underset{i,j\in\mathcal{K},j\neq i}\sum c_B^{ij}).
\end{align*}
Since $K=2m+1$, we have,
\begin{align}\label{eq:converse-odd}
\DoF^* = \frac{1}{K} \sum_{i\in\mathcal{K}} \DoF^*_i \leq \frac{K+1}{2K} \big(M + \frac{\alpha}{2} \big),
\end{align}
which completes the proof.

\begin{remark}\label{remark:multi-user-constraint}
Referring to Remark~\ref{remark:two-user-constraint} and the fact that the converse proof is based on the results from two--user interference channel, one is able to verify that for the converse bounds \eqref{eq:converse-even} and \eqref{eq:converse-odd} to hold, it is required that for all realizations the channel coefficient matrices from the transmitters in group $i$ to receiver in group $i$ and $j$ must be full rank, for $i=1,2$ and $j\neq i$.
More details are given in Lemma~\ref{lemma:direct-cross}.
\end{remark}

\subsection{Achievable Schemes}\label{subsection:achievable}

First consider the case where no cooperation is permitted, i.e., $\alpha = 0$.
This case is widely investigated in the literature and it is shown that a DoF of  half per user is achievable using \textit{Interference Alignment} \cite{cadambe08,motahari14,maddah10_com}.
Consequently, we have, 
\begin{align*}
\DoF^*(0) \geq \frac{M}{2}
\end{align*}

 Let us now focus on the cases where each user is able to achieve full DoF of $M$.
We treat this part separately for the cases of receivers cooperation and transmitters cooperation.
 
\subsubsection{Receivers Cooperation}\label{sub:rec}
In this scenario, transmitter $i$ uses a Gaussian code book, carrying $M$ degrees of freedom and transmit the message $\vm{X}_i^n = [\vm{x}_i(1),\vm{x}_i(2),\ldots,\vm{x}_i(n)]$.
At the receivers side, one of the receivers, say receiver 1, is chosen to be the central processor. 
Then, all other receivers, quantize their signals with unit squared error distortion and send it to the central processor, i.e. receiver 1, using the backhal links.

 The central processor, upon receiving all the backhal messages, is able to jointly process all received signals and decode all the messages. 
Exploiting the backhaul links again, central processor sends back the desired message of each receiver. 

 \subsubsection{Transmitters Cooperation}\label{sub:tra}
In this case each transmitter uses a Gaussian code book, carrying $M$ degrees of freedom.
The cooperation phase takes place prior to transmission and one of the transmitters, say transmitter 1, is chosen be to the central processor.
Then, all other transmitters send their intended messages to the central processor using the backhaul links.
 Upon receiving all the backhaul messages, the central processor is aware of all intended messages and it encodes each message separately for the corresponding receiver.
 With the encoded messages and the channels knowledge in hand, the central processor performs a linear beamforming.
To be more specific, let $\bar{\vm{H}}\in\mathbb{C}^{MK\times MK}$ denote the supper channel matrix, and $\vm{x}_j(\el)$ denote the signal intended for receiver $j$ at time $\el$. 
The central processor performs a linear zero--forcing and forms the signal $\vm{u} = \bar{\vm{H}}^{-1}\vm{x}$.
Here, $\vm{x} = [\vm{x}_1(\el)^\h,\vm{x}_2(\el)^\h,\ldots,\vm{x}_K(\el)^\h]^\h$ and $\vm{u} = [\vm{u}_1^\h,\vm{u}_2^\h,\ldots,\vm{u}_K^\h]^\h$ and $|\vm{u}_k| = |\vm{x}_k|$ for $k\in\{1,2,\ldots,K\}$.

Subsequently, it quantizes the signal $\vm{u}$ with unit squared error distortion, resulting in $\bar{\vm{u}}$, and for all $k\in\{1,2,\ldots,K\}$, and sends $\bar{\vm{u}}_k$ back to transmitter $k$ using the backhal links.
Each transmitter consequently sends what it has received from the central processor through the channel.
After the transmission phase, each receiver is able to decode its desired signal, since the received signals are interference-free.

Each of the above achievable schemes, so-called the \textit{Centralized Scheme}, requires $\alpha = \frac{2M(K-1)}{K}$ per user backhaul load and achieves full DoF of $M$ per user, i.e.,
\begin{align*}
(\DoF^*)^{-1}(M) \leq \frac{2M(K-1)}{K}
\end{align*}

Up to now we have shown the achievability of two corner points of the DoF versus backhaul load trade-off region, i.e., $(\frac{M}{2},0)$ and $(M,\frac{2M(K-1)}{K})$.
Using time-sharing, one is able to achieve every point connecting these two points in the trade-off region. 
Consequently, we have
\begin{align*}
\DoF^*(\alpha)\geq \min\{M,\frac{1}{2}(M+\frac{K}{2(K-1)}\alpha)\}.
\end{align*}

\noindent Referring to Definition~\ref{def:centralized_scheme}, both schemes introduced in Sub--sections~\ref{sub:rec} and~\ref{sub:tra} are in class of $(2M,M)$--centralized schemes, for large values of $K$.

\section{Generalized Configurations}\label{section:centralized}
Up to now, we assumed that the backhaul network is fully connected and all the large scale wireless channel coefficients are equal to one.
To be more precise, in our problem setting, the equivalent graph of the backhaul network contains all possible links and the transmitted signal of each transmitter contributes to the signal received by all the receivers.

In the schemes introduced in Section~\ref{subsection:achievable}, we do not use all the links of the backhaul network. 
To be more specific, there are $\frac{K(K-1)}{2}$ links available in a fully connected backhaul network, while our scheme only employs $K-1$ of them.
Therefore, even in the absence of some of the backhaul links the class of $(M,2M)$--centralized scheme might be achievable.

On the other hand, with respect to the wireless links, there are $K^2$ large scale channel coefficients, all of which are one in the fully connected scenario.
However, for the optimality of class of $(M,2M)$--centralized schemes, it is not required for all the large scale channel coefficients to be equal to one.

Thus, we expect the class of centralized schemes to be optimum in some networks with less connectivity, as well.
In this section, we aim to explore such class of networks.

In general configurations, in the backhaul side, some of the direct links among the cooperating nodes do not exist and on the wireless side, the channel between some of the transmitters and receivers are so weak that we can consider them to be zero.
To be more precise, we assume that the large scale channel coefficients are either zero or one.
This means  that the channel coefficient matrix $\vm{H}_{ij} = L_{ij}\bar{\vm{H}}_{ij}$ is either zero ($\vm{H}_{ij} =\boldsymbol{0}$), or $\vm{H}_{ij} = \bar{\vm{H}}_{ij}$ where all the elements of $\bar{\vm{H}}_{ij}$ are drawn i.i.d from a continuous distribution.
Following the definition of the adjacency matrix in Section~\ref{section:formulation}, for a group of transmitters $\mathcal{S}$ and a group of receivers $\mathcal{Q}$, the matrix $\vm{L}_{QS}$ is called the adjacency matrix of the set of transmitters and the set of receivers.

We assume that, \nameref{cond2} (Condition~\ref{cond2}) always holds, i.e., $L_{ii}=1$, for all $i\in\{1,2,\ldots,K\}$, while $L_{ij}$ could be either zero or one for $i\neq j$.
Then, the question is whether the centralized scheme is still optimum in those configurations or not?

The objective in this section is to investigate the optimality and feasibility of the class of centralized schemes in more general configurations.

\subsection{Optimality Condition}\label{sec:optimality}
As mentioned in Corollary~\ref{cor:01}, in a $K$--user interference channel with fully connected backhaul network and wireless channel, the centralized scheme is optimal for large enough values of $K$.
To be more specific, there is no scheme that achieves DoF of $M$ per user, with a backhaul load less than $2M$ per user.

In this sub-section, we assume the backhaul network remains fully connected, and identify some wireless configurations, where despite the fact that zero channels are allowed, the class of $(2M,M)$--centralized schemes, remains to be optimal.
To identify this class of wireless networks, we find some sufficient condition on wireless channel connectivity such that the converse proof in Section~\ref{subsection:converse} is still valid.

First, we have the following theorem about the connectivity of the wireless channel.
\begin{Theorem}\label{theorem:equivalency}
In the presence of \nameref{cond2} (Condition~\ref{cond2}), the following conditions are equivalent:\\
\emph{(a) \nameref{cond:ehc} (Condition~\ref{cond:ehc})}\\
Let $\ell$ be any arbitrary integer in $\{1,2,\ldots,\ceil{\frac{K}{2}}\}$.
In the equivalent bipartite graph, for each group of $\ell$ arbitrary subset $\mathcal{S}\subset\mathcal{T}$ of transmitters,
\begin{align*}
|N_\mathcal{R}(\mathcal{S})|\ \geq \floor{\frac{K}{2}}+\ell.
\end{align*}
\emph{(b) Full Rank Sub-Matrices Conditions}\\
Let $\ell_1$, $\ell_2\in\{\ceil{\frac{K}{2}},\floor{\frac{K}{2}}\}$ be such that $\ell_1 + \ell_2 \geq K$.
For any arbitrary subset $\mathcal{S}\subseteq\mathcal{T}$ of transmitters with $|\mathcal{S}| = \ell_1$ and any arbitrary subset $\mathcal{Q}\subseteq\mathcal{R}$ of receivers with $|\mathcal{Q}| = \ell_2$, $\vm{H}_{\mathcal{Q}\mathcal{S}}$ is full rank, almost surely.
\vspace{0.1in}

\noindent
\emph{(c) Non Degenerate Direct and Cross Channels Conditions}\\
Let  $\ell \in\{\ceil{\frac{K}{2}}, \floor{\frac{K}{2}}\}$.
For each arbitrary set $\mathcal{S}$ of users with $|\mathcal{S}|=\ell$, the channel coefficient matrices $\vm{H}_{\mathcal{S}\mathcal{S}}$ and $\vm{H}_{\mathcal{Q}\mathcal{S}}$ are full rank with probability one, where $\mathcal{Q} = \mathcal{R}\backslash \mathcal{S}$.
\end{Theorem}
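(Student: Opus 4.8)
The plan is to reduce each of the three conditions to a purely combinatorial statement about matchings in the equivalent bipartite graph $G_{tr}$, and then to prove the cycle $(a)\Rightarrow(b)\Rightarrow(c)\Rightarrow(a)$. The bridge between linear algebra and graph theory is the following dictionary, which I would establish first as a lemma: for any $\mathcal{S}\subseteq\mathcal{T}$ and any $\mathcal{Q}\subseteq\mathcal{R}$, the matrix $\vm{H}_{\mathcal{Q}\mathcal{S}}$ has rank $M\mu$ with probability one, where $\mu$ is the size of a maximum matching in the subgraph of $G_{tr}$ induced on $\mathcal{S}\cup\mathcal{Q}$; in particular $\vm{H}_{\mathcal{Q}\mathcal{S}}$ is full rank almost surely if and only if that subgraph admits a matching of size $\min\{|\mathcal{S}|,|\mathcal{Q}|\}$. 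The proof of the lemma splits in two. First, since the entries that lie in the nonzero blocks $\bar{\vm{H}}_{ij}$ are mutually independent with a continuous distribution, the rank of $\vm{H}_{\mathcal{Q}\mathcal{S}}$ equals its term rank with probability one, because a minor that is not identically zero as a polynomial in those entries vanishes only on a measure-zero set. Second, the zero pattern of $\vm{H}_{\mathcal{Q}\mathcal{S}}$ is precisely the $M$-fold blow-up of the induced bipartite graph (each vertex replaced by $M$ copies, each edge by a complete bipartite graph between the corresponding copies), and a short K\"onig/Hall computation shows that the term rank of such a blow-up is exactly $M$ times the matching number of the base graph: a base matching of size $\mu$ lifts to one of size $M\mu$, while a base minimum vertex cover of size $\mu$ lifts to a vertex cover of size $M\mu$. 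As a by-product, Condition~\ref{cond2} makes the diagonal of $G_{tr}[\mathcal{S},\mathcal{S}]$ a perfect matching, so $\vm{H}_{\mathcal{S}\mathcal{S}}$ is automatically full rank and the $\vm{H}_{\mathcal{S}\mathcal{S}}$ clause of (c) costs nothing.

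For $(a)\Rightarrow(b)$, fix $\ell_1,\ell_2\in\{\ceil{\frac{K}{2}},\floor{\frac{K}{2}}\}$ with $\ell_1+\ell_2\geq K$, and arbitrary $\mathcal{S}\subseteq\mathcal{T}$, $\mathcal{Q}\subseteq\mathcal{R}$ of those sizes; by the dictionary it suffices to find a matching of size $\min\{\ell_1,\ell_2\}$ in $G_{tr}[\mathcal{S},\mathcal{Q}]$. If $\ell_2\geq\ell_1$, I would check Hall's condition on the $\mathcal{S}$-side: any nonempty $\mathcal{S}'\subseteq\mathcal{S}$ has $|\mathcal{S}'|\leq\ell_1\leq\ceil{\frac{K}{2}}$, so (a) gives $|N_\mathcal{R}(\mathcal{S}')|\geq\floor{\frac{K}{2}}+|\mathcal{S}'|$, and since $|\mathcal{R}\setminus\mathcal{Q}|=K-\ell_2\leq K-\ceil{\frac{K}{2}}=\floor{\frac{K}{2}}$ (using $2\ell_2\geq\ell_1+\ell_2\geq K$), intersecting with $\mathcal{Q}$ still leaves $|N_\mathcal{Q}(\mathcal{S}')|\geq|\mathcal{S}'|$. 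If instead $\ell_1>\ell_2$ (forcing $K$ odd, $\ell_1=\ceil{\frac{K}{2}}$, $\ell_2=\floor{\frac{K}{2}}$), I would show that a matching saturating $\mathcal{Q}$ exists by contradiction: a Hall violation on the $\mathcal{Q}$-side yields $\mathcal{Q}'\subseteq\mathcal{Q}$ with $|N_\mathcal{S}(\mathcal{Q}')|<|\mathcal{Q}'|$, whence $\mathcal{S}_0:=\mathcal{S}\setminus N_\mathcal{S}(\mathcal{Q}')$ satisfies $1\leq|\mathcal{S}_0|\leq\ceil{\frac{K}{2}}$, $|\mathcal{S}_0|\geq\ell_1-|\mathcal{Q}'|+1$, and $N_\mathcal{R}(\mathcal{S}_0)\subseteq\mathcal{R}\setminus\mathcal{Q}'$, so $|N_\mathcal{R}(\mathcal{S}_0)|\leq K-|\mathcal{Q}'|<\floor{\frac{K}{2}}+|\mathcal{S}_0|$, contradicting (a).

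The implication $(b)\Rightarrow(c)$ is a direct specialization: given $\ell\in\{\ceil{\frac{K}{2}},\floor{\frac{K}{2}}\}$ and a user set $\mathcal{S}$ of size $\ell$, the complement $\mathcal{Q}=\mathcal{R}\setminus\mathcal{S}$ has $|\mathcal{Q}|=K-\ell\in\{\ceil{\frac{K}{2}},\floor{\frac{K}{2}}\}$ with $|\mathcal{S}|+|\mathcal{Q}|=K$, so $\vm{H}_{\mathcal{Q}\mathcal{S}}$ being full rank is an instance of (b), while $\vm{H}_{\mathcal{S}\mathcal{S}}$ is free as noted above. For the closing implication $(c)\Rightarrow(a)$ I would argue contrapositively. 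Suppose some $\ell\in\{1,\dots,\ceil{\frac{K}{2}}\}$ and some $\mathcal{S}_0$ with $|\mathcal{S}_0|=\ell$ violate (a), so $|N_\mathcal{R}(\mathcal{S}_0)|\leq\floor{\frac{K}{2}}+\ell-1$, and note $\ell\leq|N_\mathcal{R}(\mathcal{S}_0)|$ by Condition~\ref{cond2}. If $\ell\leq\floor{\frac{K}{2}}$, enlarge $\mathcal{S}_0$ to a user set $\mathcal{S}$ of size $\floor{\frac{K}{2}}$ containing as many vertices of $N_\mathcal{R}(\mathcal{S}_0)$ as the size budget permits; a direct count then gives $|N_{\mathcal{R}\setminus\mathcal{S}}(\mathcal{S}_0)|\leq\ell-1<|\mathcal{S}_0|$, so $G_{tr}[\mathcal{S},\mathcal{R}\setminus\mathcal{S}]$ fails Hall on $\mathcal{S}_0$, has no matching saturating $\mathcal{S}$, and — since $|\mathcal{R}\setminus\mathcal{S}|=\ceil{\frac{K}{2}}\geq|\mathcal{S}|$ — the dictionary forces $\vm{H}_{(\mathcal{R}\setminus\mathcal{S})\mathcal{S}}$ to be rank-deficient, contradicting (c) at the parameter $\floor{\frac{K}{2}}$. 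If $\ell=\ceil{\frac{K}{2}}$ (possible only for odd $K$), take $\mathcal{S}=\mathcal{S}_0$ itself: then $|N_{\mathcal{R}\setminus\mathcal{S}_0}(\mathcal{S}_0)|=|N_\mathcal{R}(\mathcal{S}_0)|-\ell\leq\floor{\frac{K}{2}}-1<|\mathcal{R}\setminus\mathcal{S}_0|$, so some receiver outside $\mathcal{S}_0$ sees no transmitter of $\mathcal{S}_0$, the matrix $\vm{H}_{(\mathcal{R}\setminus\mathcal{S}_0)\mathcal{S}_0}$ has an all-zero block-row, and (c) fails at the parameter $\ceil{\frac{K}{2}}$. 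This closes the cycle and proves the equivalence.

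The step I expect to be the crux is $(c)\Rightarrow(a)$: unlike in (b), hypothesis (c) pins the receiver set to be the exact complement of the transmitter index set, so there is much less freedom in building a rank-deficient configuration, and the boundary case $\ell=\ceil{\frac{K}{2}}$ for odd $K$ — where the offending transmitter set is already as large as the user sets (c) quantifies over and cannot be enlarged inside them — has to be isolated and handled separately. The other point that needs care, though it is routine, is making the term-rank/blow-up lemma of the first step fully rigorous: the measure-zero argument for rank $=$ term rank, and the K\"onig argument identifying the term rank of an $M$-blow-up with $M$ times the base matching number.
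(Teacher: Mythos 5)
Your proposal is correct and follows essentially the same route as the paper: both rest on the correspondence between almost-sure full rank of the block channel matrices and the existence of a saturating matching in the induced bipartite graph (the paper's Propositions~\ref{prop:1} and~\ref{prop:2} together with Hall's theorem), and the combinatorial counting in your implications $(a)\Rightarrow(b)$ and $(c)\Rightarrow(a)$ mirrors the paper's arguments, including the need to treat the odd-$K$ boundary case $\ell=\ceil{\frac{K}{2}}$ separately. The only differences are organizational — you close a single cycle $(a)\Rightarrow(b)\Rightarrow(c)\Rightarrow(a)$ where the paper proves the two biconditionals $(a)\Leftrightarrow(b)$ and $(b)\Leftrightarrow(c)$, and you justify the rank--matching dictionary via term rank and K\"onig's theorem rather than the paper's diagonal-permutation induction — but the substance is the same.
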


\begin{proof}

We first show the equivalency of \emph{(a)} and \emph{(b)}.

\emph{a $\Longleftarrow$ b}
\\
Assume \emph{(a)} does not hold.
Then, there exists at least one $\ell\in\{1,2,\ldots,\ceil{\frac{K}{2}}\}$ and a set $\mathcal{S}\subset \mathcal{T}$, containing $\ell$ transmitters, with less than $\floor{\frac{K}{2}}+l$ neighboring receivers, i.e.,
$
|\mathcal{N}_\mathcal{R}(\mathcal{S})|\ \textless\ \floor{\frac{K}{2}}+\ell
$.
Moreover, there exists a set $\mathcal{Q}\subset \mathcal{R}$, containing $\ceil{\frac{K}{2}}$ receivers, and 
$|\mathcal{Q} \cap \mathcal{N}_\mathcal{R}(\mathcal{S})|\ \textless\ \ell$,
i.e., $
|\mathcal{N}_\mathcal{Q}(\mathcal{S})|\ \textless\ \ell$.
Consider another set $\bar{\mathcal{S}}\subset \mathcal{T}$, containing $\ceil{\frac{K}{2}}$ transmitters, where $\mathcal{S}\subseteq \bar{\mathcal{S}}$.

There exists a subset $\mathcal{S}$ of $\bar{\mathcal{S}}$ with $\ell$ nodes, which has less than $\ell$ neighbors in $\mathcal{Q}$.
According to Theorem~\ref{theorem:hall}, there is no perfect matching between $\bar{\mathcal{S}}$ and $\mathcal{Q}$.
This results in rank-deficiency of the coefficient matrix of the channel between $\bar{\mathcal{S}}$ and $\mathcal{Q}$, due to Proposition~\ref{prop:2}.
Therefore, \emph{(b)} does not hold.

\emph{a $\Longrightarrow$ b}
\\
We provide the proof, for the cases of even and odd number of users separately.\\
\textit{Case 1:} Even number of users ($K=2m$)

Here, it is only required to show that the channel coefficient matrix from any set $\mathcal{S}\subset \mathcal{T}$ containing $m$ transmitters to any set $\mathcal{Q}\subset \mathcal{R}$ containing $m$ receivers, is full rank.

Assume \emph{(a)} holds, i.e., for any $\ell\in\{1,2,\ldots,m\}$ and any arbitrary set $\mathcal{S}\subset \mathcal{T}$, containing $\ell$ transmitters, there are at least $m+\ell$ neighboring receivers.
Since there are $2m$ receivers in total, one concludes that in each arbitrary set $\mathcal{Q}\subset \mathcal{R}$, containing $m$ receivers, i.e., $|\mathcal{Q}|=m$, there are at least $\ell$ neighbors  of $\mathcal{S}$, i.e.,
$
|\mathcal{N}_\mathcal{Q}(\mathcal{S})| \geq \ell$.
According to Theorem~\ref{theorem:hall}, a perfect matching exists for each sub-graph of $(\mathcal{S},\mathcal{Q})$ of the equivalent bipartite graph, where $|\mathcal{S}|=|\mathcal{Q}|=m$.
Exploiting Proposition~\ref{prop:2}, the existence of the perfect matching results in the rank-efficiency of the equivalent matrix, i.e., \emph{(b)} holds.\\
\textit{Case 2:} Odd number of users ($K=2m+1$)

Here, it is only required to show that the channel coefficient matrix from any set $\mathcal{S}\subset \mathcal{T}$ containing either $m$ or $m+1$ transmitters to any set $\mathcal{Q}\subset \mathcal{R}$ containing $m+1$ receivers, is full rank.
It is also required to show that the channel coefficient matrix form any set $\mathcal{S}\subset \mathcal{T}$ containing $m+1$ transmitters to any set $\mathcal{Q}\subset \mathcal{R}$ containing either $m$ or $m+1$ receivers, is full rank.

Assume \emph{(a)} holds, i.e., for any $\ell\in\{1,2,\ldots,m+1\}$ and any set $\mathcal{S}\subset \mathcal{T}$, containing $\ell$ transmitters, there are at least $m+\ell$ neighboring receivers.
Since there are $2m+1$ receivers in total, one concludes that in each arbitrary set $\mathcal{Q}\subset \mathcal{R}$, containing $m+1$ receivers, i.e., $|\mathcal{Q}|=m+1$, there are at least $\ell$ neighbors of $\mathcal{S}$, i.e.,
$|\mathcal{N}_\mathcal{Q}(\mathcal{S})| \geq \ell.
$. 
Therefore, and according to Theorem~\ref{theorem:hall}, for each sub-graph of $(\mathcal{S},\mathcal{Q})$ where $|\mathcal{S}|\leq m+1$ and $|\mathcal{Q}|=m+1$ there exists a \emph{matching} saturating all transmitters.

On the other hand, since \emph{(a)} holds and according to Lemma~\ref{lemma:trans_rec}, for any $k\in\{1,2,\ldots,m+1\}$ and any set $\mathcal{Q}\subset \mathcal{R}$, containing $k$ \textbf{receivers}, there are at least $m+k$ neighboring \textbf{transmitters}.
With the same argument as above, since there are totally $2m+1$ transmitters, one concludes that in each arbitrary set $\mathcal{S}\subset \mathcal{T}$, containing  $m+1$ transmitters, i.e., $|\mathcal{S}|=m+1$, there are at least $k$ neighbors of $\mathcal{Q}$, i.e., $
|\mathcal{N}_\mathcal{S}(\mathcal{Q})| \geq k$.
Therefore, and according to Theorem~\ref{theorem:hall}, for each sub-graph of $(\mathcal{S},\mathcal{Q})$ where $|\mathcal{S}|=m+1$ and $|\mathcal{Q}|\leq m+1$ there exists a \emph{matching} saturating all \textbf{receivers}.

Exploiting Proposition~\ref{prop:2}, the existence of a matching results in the rank-efficiency of the equivalent matrix, i.e., \emph{(b)} holds.

Now we provide the proof for the equivalency of \emph{(b)} and \emph{(c)}.
However, it is very easy to verify that if \emph{(b)} holds, then \emph{(c)} also holds.

\emph{b $\Longleftarrow$ c}
\\
Note that, since \nameref{cond2}(Condition~\ref{cond2}) holds, according to Proposition~\ref{prop:1}, the direct channel coefficient matrices are always full rank.
For the rest of proof we use contradiction.
Consider a set $\mathcal{S}\subset \mathcal{T}$, containing $\ceil{\frac{K}{2}}$ transmitters and another set $\mathcal{Q}\subset \mathcal{R}$, containing $\ceil{\frac{K}{2}}$ receivers, with rank--deficient channel coefficients matrix.
According to Proposition~\ref{prop:2}, there is no perfect matching in the adjacency graph and consequently, due to the results from Theorem~\ref{theorem:hall}, there exists a set $\bar{\mathcal{S}}\subseteq \mathcal{S}$, containing $\ell \leq \ceil{\frac{K}{2}}$ transmitters with $|\mathcal{N}_\mathcal{Q}(\bar{\mathcal{S}})|\ \textless\  \ell $.

We define, $\bar{\mathcal{Q}} \triangleq \mathcal{Q}\backslash \mathcal{N}_\mathcal{Q}(\bar{\mathcal{S}})$, and hence, $\ceil{\frac{K}{2}} -\ell\ \textless\ |\bar{\mathcal{Q}}|\ \leq  \ceil{\frac{K}{2}}$ and $|\mathcal{N}_{\bar{\mathcal{Q}}}(\bar{\mathcal{S}})|=0$.
We also define $\hat{\mathcal{S}}$, as the set of receivers corresponding to the transmitters in $\bar{\mathcal{S}}$.
Hence, $|\hat{\mathcal{S}}| = \ell$ and since \nameref{cond2} (Condition~\ref{cond2}) holds, $|\mathcal{N}_{\hat{\mathcal{S}}}(\bar{\mathcal{S}})|=\ell$.
Note that $|\bar{\mathcal{Q}}\cap\hat{\mathcal{S}}|=0$, and $\ceil{\frac{K}{2}}\ \textless\ |\bar{\mathcal{Q}}\cup\hat{\mathcal{S}}|\leq \ceil{\frac{K}{2}}+\ell$.

Since there are totally $K$ receivers, we are able to choose $\floor{\frac{K}{2}}- \ell$ receivers from the set $\mathcal{R}\backslash (\bar{\mathcal{Q}}\cup\hat{\mathcal{S}})$ and put them in set $\mathcal{P}$.
We also define $\bar{\mathcal{P}}$ as the set of transmitters corresponding to the receivers in $\mathcal{P}$.
The set $\hat{\mathcal{P}} = \bar{\mathcal{S}}\cup\bar{\mathcal{P}}$ contains $\floor{\frac{K}{2}}$ transmitters and all the receivers in $\bar{\mathcal{Q}}$ are in the group of receivers whose transmitters are not among these transmitters.
Therefore, and since $\ceil{\frac{K}{2}} - \ell\ \textless\ |\bar{\mathcal{Q}}|$ there exists a subset  $\bar{\mathcal{S}}$ of the set $\hat{\mathcal{P}}$ containing $\ell$ transmitters, with less than $\ell$ neighboring receivers in the non-corresponding receivers set.
Hence, due to Theorem~\ref{theorem:hall} there is no perfect matching and accordingly due to Proposition~\ref{prop:2} the cross channel coefficient matrix is not full rank, i.e., \emph{(c)} does not hold.

\end{proof}

Considering the converse proof in Section~\ref{subsection:converse}, one is able to verify that the converse bounds in~\eqref{eq:converse-even} and~\eqref{eq:converse-odd} remain valid, if Condition~\emph{(c)} in Theorem~\ref{theorem:equivalency} holds (see Remarks~\ref{remark:two-user-constraint} and \ref{remark:multi-user-constraint} for more explanations).
The following lemma provides a more precise statement.

\begin{lemma}\label{lemma:direct-cross}
In $K$--user interference channel with backhaul cooperation where all transmitters and receivers are equipped with a single antenna, if Non Degenerate Direct and Cross Channels Condition (Condition~\emph{(c)} in Theorem~\ref{theorem:equivalency}) holds, the class of $(2M-M)$--centralized schemes is optimal, for large enough values of $K$.
\end{lemma}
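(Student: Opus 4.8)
The plan is to reduce Lemma~\ref{lemma:direct-cross} to the already-established converse of Theorem~\ref{theorem:bound_receiver} by verifying that every two--user sub-channel invoked in that argument still satisfies the full-rank hypotheses of Lemma~\ref{lemma:MIMO_dof_bound}. Recall from Remarks~\ref{remark:two-user-constraint} and~\ref{remark:multi-user-constraint} that the only place where non-degeneracy of the wireless channel entered the converse was in requiring, for each balanced (or near-balanced) partition $\mathcal{K}=\mathcal{K}_1\cup\mathcal{K}_2$, that the aggregated channel matrices from the transmitters of group $i$ to the receivers of group $i$ (the ``direct'' super-channel) and to the receivers of group $j$ (the ``cross'' super-channel) be full rank, for $i=1,2$, $j\neq i$. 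So the entire task is: assuming Condition~(c) of Theorem~\ref{theorem:equivalency}, show these super-channel matrices are full rank almost surely for all partitions used in Section~\ref{subsection:converse}.

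First I would observe that in the converse the groups always have sizes $\{m,m\}$ when $K=2m$ and $\{m,m+1\}$ when $K=2m+1$; in both cases each group size lies in $\{\lfloor K/2\rfloor,\lceil K/2\rceil\}$. Fix such a partition and a group, say $\mathcal{K}_1$ of size $\ell\in\{\lfloor K/2\rfloor,\lceil K/2\rceil\}$, let $\mathcal{S}\subset\mathcal{T}$ be its transmitter indices and $\mathcal{S}$ also (abusing notation as the paper does) the corresponding receivers, and let $\mathcal{Q}=\mathcal{R}\setminus\mathcal{S}$ be the receivers of the complementary group. The ``direct'' super-channel of group $1$ is exactly $\vm{H}_{\mathcal{S}\mathcal{S}}$ and the ``cross'' super-channel from group $1$ transmitters to group $2$ receivers is $\vm{H}_{\mathcal{Q}\mathcal{S}}$ (up to the irrelevant ordering of rows). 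But these are precisely the two matrices that Condition~(c) asserts to be full rank with probability one. Hence, invoking Theorem~\ref{theorem:equivalency}, Condition~(c) guarantees full rank of every direct and cross super-channel matrix arising for every partition into groups of the sizes used in the converse. Combining this with Lemma~\ref{lemma:MIMO_dof_bound} (whose only requirement per Remark~\ref{remark:two-user-constraint} is full rank of these direct and cross matrices) yields, for every such partition, inequalities~\eqref{eq:tow_user_bound}, and then the counting argument of Part~I and Part~II of Section~\ref{subsection:converse} reproduces \eqref{eq:converse-even} and \eqref{eq:converse-odd} verbatim. Since for large $K$ these bounds certify that $\DoF^*(\alpha)=M$ forces $\alpha\geq 2M$, and the $(2M,M)$--centralized scheme of Section~\ref{subsection:achievable} (which only uses direct links, hence remains feasible whenever Condition~\ref{cond2} holds) achieves this point, the class of $(2M,M)$--centralized schemes is optimal.

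The one genuine subtlety — and the step I expect to be the main obstacle — is that the converse counting argument sums the two--user bound over \emph{all} realizations of the partitioning, not just one; I must therefore confirm that Condition~(c), as stated for a single set $\mathcal{S}$ of size $\ell\in\{\lfloor K/2\rfloor,\lceil K/2\rceil\}$, really does range over \emph{every} $\mathcal{S}$ that can appear, with the matching complementary receiver set $\mathcal{Q}=\mathcal{R}\setminus\mathcal{S}$, and that for odd $K$ both size choices $m$ and $m+1$ are covered (which they are, since the converse for $K=2m+1$ uses $|\mathcal{K}_1|=m$ together with $|\mathcal{K}_2|=m+1$, and one needs the direct/cross matrices of \emph{both} groups). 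For the group of size $m$ in the odd case, the ``cross'' matrix is from $m$ transmitters to $m+1$ receivers, which is wide and full rank iff the $m$ rows are independent — this follows from the $|\mathcal{S}|=m$, $|\mathcal{Q}|=m+1$ sub-case of Condition~(b), equivalently the matching-saturating-transmitters conclusion in Case~2 of the proof of Theorem~\ref{theorem:equivalency}. So I would close the argument by explicitly citing Theorem~\ref{theorem:equivalency}(b)$\Leftrightarrow$(c) to pass from the symmetric statement of Condition~(c) to full rank of all the (generally non-square) sub-matrices the counting argument needs, and then remark that the achievability half is untouched because the centralized scheme never uses a cross wireless link or a non-adjacent backhaul link, so Condition~\ref{cond2} alone suffices for it. This completes the proof.
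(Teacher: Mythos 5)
Your proposal is correct and follows essentially the same route as the paper's own proof: it reduces the lemma to the converse of Theorem~\ref{theorem:bound_receiver} via Remarks~\ref{remark:two-user-constraint} and~\ref{remark:multi-user-constraint}, observing that the direct and cross super-channel matrices needed for each realization of the partitioning are exactly the matrices $\vm{H}_{\mathcal{S}\mathcal{S}}$ and $\vm{H}_{\mathcal{Q}\mathcal{S}}$ covered by Condition~(c). Your additional checks (that the group sizes $\{\lfloor K/2\rfloor,\lceil K/2\rceil\}$ match the sets quantified over in Condition~(c), including the non-square cross matrices for odd $K$, and that achievability needs only Condition~\ref{cond2}) are more explicit than the paper's terse argument but do not change the approach.
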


\begin{proof}
In Subsection~\ref{subsection:converse}, we used the bounds from two--user interference channels, and add them together in order to derive the converse bounds.
As long as the bounds for each of the two--user channels holds, the bounds in~\eqref{eq:converse-even} and~\eqref{eq:converse-odd} remain valid.

According to the proof of Lemma~\ref{lemma:MIMO_dof_bound}, in order for the bounds of the two--user interference channel to be valid, it is only required to have full rank direct and cross channel coefficient matrices.
Note that the bounds must be valid for each two--user interference channel and therefore the direct and cross channel coefficient matrices must be full rank for each realization of the our grouping in Subsection~\ref{subsection:converse}, i.e., Non Degenerate Direct and Cross Channels Condition (Condition~\emph{(c)} in Theorem~\ref{theorem:equivalency}).

\end{proof}

Theorem~\ref{theorem:equivalency} and Lemma~\ref{lemma:direct-cross} leads to the proof of Theorem~\ref{theorem:condition}.
A natural consequent question will arise on  the benefits of \nameref{cond:ehc} (Condition~\ref{cond:ehc}) over Non Degenerate Direct and Cross Channels Condition (Condition~\emph{(c)} in Theorem~\ref{theorem:equivalency}).

Note that, to exhaustively verify Condition~\emph{(c)} in Theorem~\ref{theorem:equivalency}, we need to check the rank of $2{K\choose \floor{\frac{K}{2}}}$ matrices for odd values of $K$ and ${K\choose \frac{K}{2}}$ matrices for even values of $K$.
For large values of $K$, this leads to checking the rank of approximately $\frac{2^K}{\sqrt{K}}$ matrices, which is not feasible.

However, Theorem~\ref{theorem:poly} states that Condition~\ref{cond:ehc} can be verified in polynomial time.
The detailed proof is given in Section~\ref{sec:com}.

\begin{remark}
Although \nameref{cond:ehc} (Condition~\ref{cond:ehc}) gives the necessary and sufficient condition for Non Degenerate Direct and Cross Channels Condition (Condition~\emph{(c)} in Theorem~\ref{theorem:equivalency}), it only gives a sufficient condition for the converse bounds in~\eqref{eq:converse-even} and~\eqref{eq:converse-odd}, and subsequently the optimality of the class of $(2M,M)$--centralized schemes.
This is due to the fact that there might be other ways to derive the same bounds.
\end{remark}

\subsection{Feasibility Condition}\label{sec:feasibility}

This subsection deals with the backhaul connectivity.

Recall that, according to Definitions~\ref{def:centralized_scheme} and~\ref{def:fisibility}, for the case of receivers cooperation, a member of class of $(2M,M)$--centralized schemes is achievable, if each receiver $i \in \{1,2,\ldots, K\}$ can decode the message $W_i$ with vanishing probability of error, and in addition at least one of the receivers can decode the whole set of messages, again with a vanishing probability of error.

On the other hand, for the case of transmitters cooperation, a memeber of class of $(2M,M)$--centralized schemes is achievable, if each receiver $i \in \{1,2,\ldots, K\}$ can decode the message $W_i$ with vinishing probability of error, while at least one of the transmitters having access to the whole set of messages.

Note that the feasibility of the class of $(M,2M)$--centralized schemes is not equivalent to its optimality.
To be more precise, there exist scenarios in which the class of centralized schemes is feasible, but we can achieve the full DoF of $M$, even without any cooperation, e.g., in case of no interference in wireless channels.
In fact, there might exist an achievable scheme which is able to achieve the DoF of $M$ per user with the per user backhaul load less than $2M$.

The following lemma, states the necessary and sufficient conditions for the feasibility of the class of $(2M,M)$--centralized schemes.

\begin{lemma}\label{prop:suf_nec}
In a $K$--user interference channel with backhaul cooperation, with $K$ large, and in the presence of \nameref{cond2} (Condition~\ref{cond2}) and~\nameref{cond:ehc} (\ref{cond:ehc}), the class of $(2M,M)$--centralized schemes is feasible, if and only if in the backhaul graph, there exists at least one node, say $i$, with degree of connectivity, $K-o(K)$, i.e., 
$$\textnormal{Deg} (i) = K-o(K)$$.
\end{lemma}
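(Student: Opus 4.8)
\textbf{Proof proposal.} The plan is to prove the two implications separately, in both cases benchmarking against the fact that the Centralized Scheme of Section~\ref{subsection:achievable} already achieves $\DoF=M$ with per--user load $\tfrac{2M(K-1)}{K}=2M(1-o(1))$ using a central node of full degree, so a scheme that exceeds this by $o(1)$ per user is still in the $(2M,M)$--centralized class for large $K$. The first step, used throughout, is to observe that \nameref{cond2}(Condition~\ref{cond2}) together with the hypothesis \nameref{cond:ehc}(Condition~\ref{cond:ehc}) make all the relevant channel submatrices full rank (Theorem~\ref{theorem:equivalency} and Lemma~\ref{lemma:direct-cross}), so the converse of Section~\ref{subsection:converse}, hence Corollary~\ref{cor:01}, still applies. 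Consequently feasibility of a $(2M,M)$--centralized scheme pins the operating point to the trade--off boundary $\DoF^{*}=M$, $\alpha=2M(1-o(1))$: every converse inequality, and in particular the per--user budget, is (asymptotically) tight.

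For sufficiency, suppose the backhaul graph has a node $i^{\star}$ with $\mathrm{Deg}(i^{\star})=K-o(K)$. I would run the Centralized Scheme with $i^{\star}$ as Central Processor. Its $K-o(K)$ direct neighbours send their quantized wireless observations (for receivers cooperation) or their messages (for transmitters cooperation) over the direct links at cost $M$ each; the remaining $o(K)$ ``far'' nodes relay the same data to $i^{\star}$, and symmetrically $i^{\star}$ relays the decoded messages (or zero--forced signals) back. Since $i^{\star}$ misses only $o(K)$ nodes, each far node is within $O(1)$ hops of $i^{\star}$ (it has a neighbour among the $K-o(K)$ neighbours of $i^{\star}$, unless it lies in the exceptional $o(K)$--set), so the extra relay traffic totals $o(K)\log P$, i.e.\ $o(1)$ per user, and the per--user load stays $2M(1-o(1))$. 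Once $i^{\star}$ holds all $K$ observations (resp.\ messages), full rankness of the super channel matrix under Conditions~\ref{cond2} and~\ref{cond:ehc} lets it decode $W_1,\dots,W_K$ (resp.\ perform zero forcing), and every receiver recovers its own message; hence a $(2M,M)$--centralized scheme is achievable.

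For necessity, fix a feasible $(2M,M)$--centralized scheme with Central Processor $i^{\star}$. Applying Fano's inequality to the event ``$i^{\star}$ decodes $W_1,\dots,W_K$'' and using that $i^{\star}$ has only $M$ antennas yields $\tfrac1n H(M_{i^{\star}}^{[n]})\ge (K-1)M\log P\,(1-o(1))$; applying Fano to each other node $j$ decoding $W_j$, and upper bounding the DoF that its own observation $\mathbf y_j$ carries about $W_j$ via the genie/cut--set argument behind Lemma~\ref{lemma:MIMO_dof_bound} (this is where \nameref{cond:ehc} is essential: the interference subspace at $j$ is generic and essentially fills $\mathbb{C}^{M}$), yields $\tfrac1n H(M_j^{[n]})\ge \big(M-\DoF_j^{\mathrm{ant}}\big)\log P\,(1-o(1))$. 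Summing and using the already--tight per--user budget, the incoming backhaul of $i^{\star}$ must be $(K-1)M\log P\,(1+o(1))$ exactly and the total backhaul is essentially exhausted, so the non--central nodes are left with $o(K)\log P$ of spare capacity in aggregate. But $i^{\star}$ can only acquire as much fresh DoF about $W_{-i^{\star}}$ as its neighbours can deliver — at most $M$ of ``own observation'' DoF per neighbour, plus whatever is relayed; since only $o(K)$ relaying is affordable, $i^{\star}$ needs $K-o(K)$ neighbours to collect its $(K-1)M\,(1-o(1))$ fresh DoF. Equivalently, any receiver not adjacent to $i^{\star}$ forces both its observation to be relayed to $i^{\star}$ and its decoded message to be relayed back through intermediate nodes that have no spare capacity, so at most $o(K)$ such receivers can exist; hence $\mathrm{Deg}(i^{\star})\ge K-o(K)$.

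The hard part is the DoF accounting inside the necessity argument: turning ``the interference at receiver $j$ is full rank, so $\mathbf y_j$ carries only a controlled amount of DoF about $W_j$'' into an inequality that is uniform over all nodes and all schemes, and in particular controlling $\sum_j \DoF_j^{\mathrm{ant}}$, so that the ``exhausted budget $\Rightarrow$ no relaying $\Rightarrow$ $i^{\star}$ adjacent to all but $o(K)$ nodes'' chain becomes rigorous with the sharp $K-o(K)$ threshold rather than a weaker $\Omega(K)$ bound; this is precisely where the genie/cut--set apparatus of Lemma~\ref{lemma:MIMO_dof_bound} and the full--rank consequences of \nameref{cond:ehc} must be used in aggregate. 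A secondary subtlety is the transmitters--cooperation case, where the transmitted signals are correlated through the prior cooperation phase and this dependency must be bounded exactly as in the remark following Lemma~\ref{lemma:MIMO_dof_bound}; and, on the achievability side, the claim that the $o(K)$ relay paths can be kept short, which I would justify from connectedness of the backhaul graph (handling an isolated far node as a degenerate case).
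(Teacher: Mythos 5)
Your overall architecture matches the paper's: sufficiency by running the Centralized Scheme with the high--degree node as Central Processor, and necessity by a backhaul--load audit in which the Central Processor must import $(K-1)M$ degrees of freedom and every non--neighbour costs at least two hops. Two points, however, separate your write--up from a complete proof. On sufficiency, the paper does not relay the $o(K)$ far nodes at all: it simply silences their transmitters and runs the scheme over the remaining $\bar{K}=K-o(K)$ users, which still gives average DoF $M\bar{K}/K\to M$ and load $2M(\bar{K}-1)/K\to 2M$. This sidesteps entirely the hop--count/connectedness issue you flag at the end --- an isolated far node genuinely breaks your relaying construction, and nothing in the hypotheses (which constrain the wireless graph, not the backhaul graph) rules one out.

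On necessity, the step you yourself call ``the hard part'' is precisely where the paper supplies a lemma your proposal lacks. Your per--node bound $\tfrac1n H(M_j^{[n]})\ge (M-\DoF_j^{\mathrm{ant}})\log P\,(1-o(1))$ involves an undefined quantity and does not visibly sum to the $M(\bar{K}-1)$ term the audit needs. The paper instead proves (Claim~\ref{claim:number_messages}) that \emph{every} cooperating node must both send and receive backhaul load at least $M$, with no correction term, via a genie argument: let a genie perform all backhaul communication except that to and from node $j$, which collapses the network to a two--user channel with antenna counts $M$ and $(K-1)M$, and then read the bound off \eqref{eq:MIMO_dof_bound_1}--\eqref{eq:MIMO_dof_bound_2}. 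Combined with Claim~\ref{claim:central} (the Central Processor must receive $(K-1)M$), this yields the explicit counting inequality $K\alpha \ge M\big(d_j + 2(\bar{K}-1-d_j)\big) + M(\bar{K}-1)$ of \eqref{eq:Kalpha}, from which $d_i \ge 3(\bar{K}-1)-2K = K-o(K)$ follows immediately at $\alpha=2M$. Without that unconditional per--node lemma and the resulting inequality, your chain ``budget exhausted $\Rightarrow$ no relaying affordable $\Rightarrow$ degree $K-o(K)$'' remains an assertion rather than a derivation, and could a priori only deliver a weaker $\Omega(K)$ threshold.
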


\begin{proof}
We first focus on achievable scheme which is trivial.
Assume that a node with connectivity degree almost equal to $K$ exists.
Since $\text{Deg}(i) = K - o(K)$, there are only $o(K)$ nodes that are not in the neighborhood of $i$.
Assume we prevent the transmitters corresponding to these $o(K)$ nodes from participating in the wireless transmission, and perform the scheme in Section~\ref{subsection:achievable}, over the remaining $\bar{K} = K-o(K)$ nodes.
We choose node $i$ to be the central processor, resulting in a total DoF of $M\bar{K}$, i.e., DoF of $M$ per user, by exploiting $2M(\bar{K} - 1)$ backhaul messages, i.e., backhaul load of $2M$ per user.

The converse proof, on the other hand, it not trivial.
In order to pursue that, we start with the following claims.
\begin{claim}\label{claim:number_messages}
In a $K$--user interference channel with backhaul cooperation, in the presence of \nameref{cond2} (Condition~\ref{cond2}) and~\nameref{cond:ehc} (Condition~\ref{cond:ehc}), in order to achieve the full DoF of $M$ per user, it is required for each cooperating node to send and receive a backhaul load greater than $M$. 
\end{claim}
\begin{proof}[Proof of Claim~\ref{claim:number_messages}.]
Let us first investigate the case of two--user interference channels.
Under \nameref{cond2} (Condition~\ref{cond2}) and~\nameref{cond:ehc} (Condition~\ref{cond:ehc}), and according to \eqref{eq:MIMO_dof_bound_1}, in a two--user interference channel where the transmitters and the receivers are equipped with $M$ antennas, to achieve a full DoF of $M$ per user, i.e., DoF of $2M$ in total, cooperating node two, must at least receive the backhaul load of  $M$ from the backhaul network, and consequently the cooperating node one should at least send the backhaul load of $M$.
Considering \eqref{eq:MIMO_dof_bound_2} and with the same argument, Claim~\ref{claim:number_messages} is true for the case of two--user interference channels.

Now, consider a two--user interference channel, where transmitter one and its corresponding receiver are equipped with $M$ antennas (i.e., $N_1 = M_1 = M$), and transmitter two and its corresponding receiver are equipped with $(K-1)M$ antennas (i.e., $N_2 = M_2 = (K-1)M$).
Under \nameref{cond2} (Condition~\ref{cond2}) and~\nameref{cond:ehc} (Condition~\ref{cond:ehc}), and according to \eqref{eq:MIMO_dof_bound_1}, in the above interference channel, in order to achieve the total DoF of $KM$, receiver two must at least receive the backhaul load of  $M$ from the backhaul network, and consequently receiver one should at least send the backhaul load of $M$.
Considering \eqref{eq:MIMO_dof_bound_2}, one also illustrates that receiver one must at least receive the backhaul load of $M$.

Finally, consider a $K$--user interference channel, where all the transmitters and receivers are equipped with $M$ antennas.
Let a genie perform the required backhaul communications among all the cooperating nodes except for the communications to or from cooperating node $i$.
The system now is equivalent to the system introduced in the last paragraph, and therefore the cooperating node $i$ must send and receive a backhaul load greater than $M$.
The genie is free to choose the individual cooperating node, therefore each of the cooperating nodes requires to send and receive a backhaul load greater than $M$ through the backhaul network.
The genie aided scenario provides an upper bound for the required backhaul load of the original system and therefore, the proof is complete.

\end{proof}

\begin{claim}\label{claim:central}
In a $K$--user interference channel with backhaul cooperation, in the presence of \nameref{cond2} (Condition~\ref{cond2}) and~\nameref{cond:ehc} (Condition~\ref{cond:ehc}), in order to achieve the total DoF of $M$ and for the central processor to have access to the whole set of messages with vanishing probability of error, the central processor must receive a backhaul load greater than $M(K-1)$. 
\end{claim}
\begin{proof}[Proof of Claim~\ref{claim:central}]
Let cooperating node $i$ play the role of the central processor.
Using Fano's inequality we have
\begin{align*}
H(W_1,W_2,\ldots,W_K|V_i , M_i^{[n]}) &\leq \epsilon_n,\\
H(W_1,W_2,\ldots,W_K) - I(W_1,W_2,\ldots,W_K; V_i, M_i^{[n]}) &\leq \epsilon_n,\\
H(W_1,W_2,\ldots,W_K)  &\leq I(W_1,W_2,\ldots,W_K; V_i, M_i^{[n]}) +  \epsilon_n,\\
H(W_1,W_2,\ldots,W_K)  &\leq H(V_i) + H(M_i^{[n]}) +  \epsilon_n,
\end{align*}
where, in case of receivers cooperation $V_i$ denotes the set of all received signals form the wireless interface at receiver $i$ and in case of transmitters cooperation $V_i = W_i$.
$M_i^{[n]}$ also is defined as in~\eqref{eq:previous_messages}.
In  high $\SNR$ regimes, we have
\begin{align*}
(K-1)M \leq  \lim_{P\rightarrow\infty} \frac{H(M_i^{[n]})}{\log(P)} + \hat{\epsilon}_n.
\end{align*}
Therefor, in order for the central processor to have access to the whole set of messages with vanishing probability of error, it must receive a backhaul load greater than or equal to $M(K-1)$.

\end{proof}

Our goal is to achieve the DoF of $M$ per user and therefore only $o(K)$ of the users are allowed to get the individual DoF less than $M$.
We simply ignore the messages to or from the cooperating nodes corresponding to these $o(K)$ users.
However, we still have $\bar{K} = K-o(k)$ users, each of which requires to achieve the individual DoF of $M$.
Assume that $d_j$ indicates the connectivity degree of node $j$, i.e., $\text{Deg}(j) = d_j$ for all $j$ and let $i = \underset{j}{\arg\max}( d_j)$.

Taking node $j$ as the central processor, for the total backhaul load, $K \alpha$, we have,
\begin{align}\label{eq:Kalpha}
K\alpha \geq M\big(d_j + 2 (\bar{K} - 1 - d_j ) \big) + M(\bar{K}-1).
\end{align}
By Definition~\ref{def:centralized_scheme}, the central processor $j$ requires to decode the whole set of messages.
In order to do so, and according to Claim~\ref{claim:central}, it requires to receive a backhaul load of $M(\bar{K}-1)$.
However, since before any cooperation each individual node has access to at most $M$ DoF, node $j$ is able to receive a backhaul load of $d_jM$, with a single hop communication, while for the rest (i.e., $M(\bar{K} - 1 - d_j)$) at least two hops is required.
On the other hand, according to Claim~\ref{claim:number_messages}, each of the cooperating nodes are required to receive the backhaul load of $M$, which leads to the second term in the right hand side of the above inequality, i.e., $M(\bar{K}-1)$.
We ignore the backhaul messages from and to the receivers achieving individual DoF less than $M$.

Note that the right hand side of \eqref{eq:Kalpha} is decreasing with respect to $d_j$ and therefore, the tightest bound for the normalized backhaul load $\alpha^*$ corresponds to $d_i$, and we have,
\begin{align*}
K \alpha^*  \geq  3M(\bar{K}-1) - Md_i,
\end{align*}
to achieve a DoF of $M$ per user.

Therefore, in order to achieve a DoF of $M$ per user with a backhaul load of $2M$ per user, it is required to have at least one node $i$, with $d_i=K-o(K)$.

\end{proof}

We have the following Corollary for finite values of $K$.
\begin{corollary}\label{col:1}
For the $K$--user interference channel with finite values of $K$, and in the presence of \nameref{cond2} (Condition~\ref{cond2}) and~\nameref{cond:ehc} (Condition~\ref{cond:ehc}), the class of $(2,1)$--centralized schemes is feasible, if and only if in the backhaul graph, there exists at least one node, say $i$, with $K-1$ degrees of connectivity, i.e.,
\begin{align*}
\textnormal{Deg} (i) = K-1.
\end{align*}
\end{corollary}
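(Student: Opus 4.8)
\noindent\textbf{Proof plan (of Corollary~\ref{col:1}).}
The plan is to obtain Corollary~\ref{col:1} by specialising the argument of Lemma~\ref{prop:suf_nec} to finite $K$: the only structural change is that the asymptotic slack $o(K)$ collapses to zero, so the requirement ``there is a backhaul node of degree $K-o(K)$'' must sharpen to ``there is a node adjacent to all $K-1$ other cooperating nodes''.

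For the \emph{if} direction I would repeat the achievability part of Lemma~\ref{prop:suf_nec}, but now without discarding any user. Given a node $i$ with $\mathrm{Deg}(i)=K-1$, run the Centralized Scheme of Section~\ref{subsection:achievable} with $i$ as central processor: every node reaches $i$, and is reached back by $i$, in a single backhaul hop, so the total backhaul load is $2M(K-1)$, i.e.\ $\alpha = 2M(K-1)/K \le 2M$ (pad with dummy traffic to reach exactly $2M$ if one insists on equality), while each user attains $\DoF=M$ and $i$ decodes the whole set $W_1,\dots,W_K$. The point worth spelling out is that joint decoding at $i$ needs the $KM\times KM$ wireless matrix to be full rank; under Condition~\ref{cond2} and Condition~\ref{cond:ehc} the equivalent bipartite graph satisfies Hall's condition globally --- every $\ell$-subset of transmitters with $\ell\le\ceil{\frac{K}{2}}$ already has $\floor{\frac{K}{2}}+\ell\ge\ell$ neighbouring receivers, and any larger transmitter subset contains a $\ceil{\frac{K}{2}}$-subset, which by the same bound sees all $K$ receivers --- so by Theorem~\ref{theorem:hall} and Proposition~\ref{prop:2} the matrix has full rank almost surely and the decoding goes through. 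Hence the class of $(2M,M)$--centralized schemes is feasible; for $M=1$ this is precisely the class of $(2,1)$--centralized schemes.

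For the \emph{only if} direction I would fix a feasible $(2,1)$--centralized scheme, let $i$ be its central processor, and set $d_i=\mathrm{Deg}(i)$ in the backhaul graph. Since $K$ is finite and $\DoF^*_k\le M$ for every $k$, the requirement $\DoF=M$ per user forces $\DoF^*_k=M$ for \emph{all} $k$, so no user's traffic may be dropped --- this is exactly where the finite-$K$ statement is stronger than Lemma~\ref{prop:suf_nec}. I then combine three ingredients: Claim~\ref{claim:central} (the central processor must receive backhaul load at least $M(K-1)$); Claim~\ref{claim:number_messages} (every cooperating node must both send and receive at least $M$); and a hop count on the backhaul graph, namely that the $d_i$ neighbours of $i$ can deliver at most $Md_i$ to $i$ in a single hop while the $M(K-1-d_i)$ units of information held by the remaining non-neighbours must cross at least two backhaul links before reaching $i$, and symmetrically for the feedback messages that $i$ returns. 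Summing the collection and feedback contributions on the cut that separates $i$ from its non-neighbours and imposing the feasibility constraint $\alpha\le 2M$ then squeezes $d_i$ up toward $K-1$.

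The hard part will be making this last bound tight enough to pin $d_i$ to exactly $K-1$ rather than only $K-O(1)$: naively adding the two phases together with the per-node estimates of Claims~\ref{claim:number_messages} and~\ref{claim:central} leaves a slack of a few units. Closing it requires (i) handling the collection and feedback phases jointly rather than as two separate cut constraints, (ii) ruling out that a carefully chosen set of relays lets a node of degree $d_i<K-1$ act as an effective central processor without incurring the extra two-hop cost, and (iii) using that the finite-$K$ regime admits no per-user DoF slack at all. Once that accounting is carried out the ``only if'' part follows, and the ``if'' part already supplies the matching scheme, completing Corollary~\ref{col:1}.
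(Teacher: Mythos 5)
Your \emph{if} direction is fine and coincides with the achievability half of Lemma~\ref{prop:suf_nec} specialised to $\bar{K}=K$: a degree-$(K-1)$ node serves as central processor with total backhaul load $2M(K-1)$, and your Hall's-condition check (every transmitter subset of size $\ell\le\ceil{\frac{K}{2}}$ sees $\floor{\frac{K}{2}}+\ell$ receivers, and any larger subset contains such a subset and hence sees all of $\mathcal{R}$) correctly delivers the almost-sure invertibility of the super channel matrix via Theorem~\ref{theorem:hall} and Proposition~\ref{prop:2}. The paper itself offers no separate proof of Corollary~\ref{col:1}, so the comparison is necessarily against the implicit specialisation of Lemma~\ref{prop:suf_nec}.

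The problem is the \emph{only if} direction, which you have outlined but not proved. Setting $\bar{K}=K$ in the counting inequality of Lemma~\ref{prop:suf_nec} gives
\begin{align*}
K\alpha \;\ge\; M\bigl(d_i + 2(K-1-d_i)\bigr) + M(K-1) \;=\; 3M(K-1) - Md_i,
\end{align*}
and imposing $\alpha\le 2M$ yields only $d_i\ge K-3$, not $d_i=K-1$. You correctly flag this slack of two units, but the three devices you propose for closing it --- joint accounting of collection and feedback phases, excluding relay-assisted central processors of degree $K-2$ or $K-3$, and exploiting the absence of per-user DoF slack --- are named rather than executed. None of them is routine: for instance, to exclude $d_i=K-2$ you must show that routing a single non-neighbour's $M$ units through a relay necessarily costs $2M$ extra on top of the relay's own send/receive budget from Claim~\ref{claim:number_messages} \emph{and} that the returning feedback message to that non-neighbour incurs the same two-hop penalty, without double-counting against the $M(K-1)$ term already charged by Claim~\ref{claim:central}. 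As written, the argument establishes feasibility only for $d_i=K-1$ and infeasibility only for $d_i\le K-4$, leaving the cases $d_i\in\{K-3,K-2\}$ undecided; the corollary is therefore not proved. You would need either a genuinely sharper cut/flow argument on the backhaul graph or a direct information-theoretic bound on two-hop traffic to finish.
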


Note that the set of conditions in Lemma \ref{prop:suf_nec}, is equivalent to the following condition.
\begin{condition}
In the backhaul equivalent graph $G=(\mathcal{V},\mathcal{E})$, there exists a subset of nodes, $\mathcal{V}_c\subseteq \mathcal{V}$ where $|\mathcal{V}_c|= K - o(K)$ and in the induced subgraph $G[\mathcal{V}_c] = (\mathcal{V}_c , \mathcal{E}_c)$, there exists at least one node with Normalized Closeness Centrality measure equal to one.
\end{condition}

In fact, the maximum amount of information that we are able to gather, corresponds to the size of the subsets of nodes which are  connected.
Therefore, in order to achieve a DoF of one per user by exploiting the class of $(2M,M)$--centralized schemes, it is required to have a connected subset of nodes $\mathcal{V}_c$ with $|\mathcal{V}_c|= K-o(K)$.

According to Definition~\ref{def:closeness_centrality}, since there exists a node, say $i$, with \emph{closeness centrality measure} equal to one, one concludes that $\sum_{j}d(i,j) = K- 1 - o(K)$.
We choose node $i$ as the central processor, and according to Definition~\ref{def:centralized_scheme}, it must have access to the whole set of messages.
According to Claim~\ref{claim:central} and the fact that before the cooperation each node has access to at most a DoF of $M$ to gather all information, a backhaul load of $M\sum_{j}d(i,j)$ is required. 
On the other hand, according to Claim~\ref{claim:number_messages}, at least a backhaul load of $M(K-1)$ is required to be received by the other cooperating nodes.
Therefore, the backhaul load $\alpha$ equals $\frac{M(\sum_{j}d(i,j)+K-1)}{K}$ which is asymptotically equal to $2M$.

Also note that, the condition in Corollary~\ref{col:1}, is equivalent to the case that  a ``Star'' sub-graph can be extracted from the equivalent backhaul graph.

\section{The Polynomial Time Algorithm}\label{sec:com}

In this section, we first prove Theorem~\ref{theorem:poly} for interference channels with even number of users, i.e., $K=2m$, and then extend the results to the case of odd number of users.

\begin{lemma}\label{lemma:IS-cond1}
Let $G$ be the equivalent bipartite graph of a $K$--user interference channel.
Then, \nameref{cond:ehc} (Condition~\ref{cond:ehc}) holds if and only if the size of the largest PIS of $G$ is no larger than $\ceil{\frac{K}{2}}$.
\end{lemma}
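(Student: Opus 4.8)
The plan is to re-express \nameref{cond:ehc} (Condition~\ref{cond:ehc}) as a statement about independent sets of $G$ by complementing neighbourhoods, and then to verify the two implications by elementary counting with the identity $\floor{\frac{K}{2}}+\ceil{\frac{K}{2}}=K$.

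First I would write \nameref{cond:ehc} in contrapositive form: it \emph{fails} exactly when there is some $\ell\in\{1,\dots,\ceil{\frac{K}{2}}\}$ and some $\mathcal{S}\subseteq\mathcal{T}$ with $|\mathcal{S}|=\ell$ and $|\mathcal{N}_\mathcal{R}(\mathcal{S})|\le\floor{\frac{K}{2}}+\ell-1$. The bridge to $G$ is the trivial remark that $\mathcal{S}\subseteq\mathcal{T}$ and $\mathcal{Q}\subseteq\mathcal{R}$ span no edge of $G$ iff $\mathcal{Q}\subseteq\mathcal{R}\setminus\mathcal{N}_\mathcal{R}(\mathcal{S})$, so the largest PIS whose transmitter part is $\mathcal{S}$ is $\mathcal{S}\cup(\mathcal{R}\setminus\mathcal{N}_\mathcal{R}(\mathcal{S}))$, of size $|\mathcal{S}|+K-|\mathcal{N}_\mathcal{R}(\mathcal{S})|$. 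I would also record, using \nameref{cond2} (Condition~\ref{cond2}), that each transmitter of $\mathcal{S}$ forces its own receiver into $\mathcal{N}_\mathcal{R}(\mathcal{S})$, so the transmitter-index set and the receiver-index set of any PIS are disjoint; this is precisely the feature that makes a PIS a genuinely two-sided object (non-empty on each side) and rules out the spurious ``all of $\mathcal{T}$'' set.

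For the direction ``largest PIS $\le\ceil{\frac{K}{2}}$ $\Longrightarrow$ \nameref{cond:ehc}'' I would argue the contrapositive: if the condition fails, pick a witnessing $\mathcal{S}$ with $|\mathcal{S}|=\ell\le\ceil{\frac{K}{2}}$ and put $\mathcal{Q}=\mathcal{R}\setminus\mathcal{N}_\mathcal{R}(\mathcal{S})$; then $|\mathcal{Q}|\ge K-(\floor{\frac{K}{2}}+\ell-1)=\ceil{\frac{K}{2}}-\ell+1\ge1$, so $\mathcal{S}\cup\mathcal{Q}$ is a PIS, non-empty on both sides, of size $\ell+|\mathcal{Q}|\ge\ceil{\frac{K}{2}}+1$. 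For the reverse implication, again by contrapositive, take a PIS $\mathcal{S}\cup\mathcal{Q}$ (non-empty on each side) with $|\mathcal{S}|+|\mathcal{Q}|\ge\ceil{\frac{K}{2}}+1$ and let $\mathcal{S}'\subseteq\mathcal{S}$ be any subset of size $\min\{|\mathcal{S}|,\ceil{\frac{K}{2}}\}$, so $1\le|\mathcal{S}'|\le\ceil{\frac{K}{2}}$. Since $\mathcal{S}'\cup\mathcal{Q}$ is still edge-free, $\mathcal{N}_\mathcal{R}(\mathcal{S}')\subseteq\mathcal{R}\setminus\mathcal{Q}$, hence $|\mathcal{N}_\mathcal{R}(\mathcal{S}')|\le K-|\mathcal{Q}|\le K-(\ceil{\frac{K}{2}}+1-|\mathcal{S}|)=\floor{\frac{K}{2}}+|\mathcal{S}|-1$; when $|\mathcal{S}|\le\ceil{\frac{K}{2}}$ this equals $\floor{\frac{K}{2}}+|\mathcal{S}'|-1$, a violation of \nameref{cond:ehc} at $\ell=|\mathcal{S}'|$, and when $|\mathcal{S}|>\ceil{\frac{K}{2}}$ we instead use only $|\mathcal{Q}|\ge1$ to get $|\mathcal{N}_\mathcal{R}(\mathcal{S}')|\le K-1<\floor{\frac{K}{2}}+\ceil{\frac{K}{2}}=\floor{\frac{K}{2}}+|\mathcal{S}'|$, again a violation at $\ell=\ceil{\frac{K}{2}}$. (In place of this second case one could instead invoke the transmitter/receiver reciprocity of \nameref{cond:ehc} supplied by Lemma~\ref{lemma:trans_rec} and argue on the receiver side; I would keep the trimming argument as it avoids that dependency, and if the precise definition of a PIS also caps each side by $\ceil{\frac{K}{2}}$ the argument only gets shorter.)

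The counting is routine; the one delicate point is the last case $|\mathcal{S}|>\ceil{\frac{K}{2}}$, where after cutting the transmitter part down to size $\ceil{\frac{K}{2}}$ one must still be sure its neighbourhood misses at least one receiver — exactly the place where non-degeneracy of a PIS ($\mathcal{Q}\neq\emptyset$) is used, and the reason a PIS must be a two-sided independent set rather than an arbitrary independent set of the bipartite graph $G$. This reformulation is also what sets up Theorem~\ref{theorem:poly}: ``largest PIS'' is a maximum-independent-set quantity on a bipartite graph, hence, by König duality, computable by a matching / max-flow computation.
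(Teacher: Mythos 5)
Your proposal is correct and follows essentially the same route as the paper: both directions rest on the correspondence between a violating set $\mathcal{S}\subseteq\mathcal{T}$ and the independent set $\mathcal{S}\cup(\mathcal{R}\setminus\mathcal{N}_\mathcal{R}(\mathcal{S}))$, with the counting identity $\floor{\frac{K}{2}}+\ceil{\frac{K}{2}}=K$ doing the work. The only cosmetic differences are that the paper's necessity argument normalizes to a PIS of size exactly $\ceil{\frac{K}{2}}+1$ and bounds its transmitter part by properness, where you instead trim $\mathcal{S}$ down to size at most $\ceil{\frac{K}{2}}$ (and your appeal to \nameref{cond2} for index-disjointness is a harmless extra observation the lemma does not actually need).
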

\begin{proof}
Proof of necessity: Suppose that there exists a PIS, $\mathcal{S}$, with $|\mathcal{S}|=\ceil{\frac{K}{2}}+1$. 
Let $l=\big| \mathcal{S} \cap \mathcal{T} \big|$. 
Then, $l\le m$ and $\big| \mathcal{R} \backslash \mathcal{S} \big| = K - \big| \mathcal{R} \cap \mathcal{S} \big| = \floor{\frac{K}{2}}+l-1$. 
Since $\mathcal{S}$ is a PIS, the neighbouring receivers of $\mathcal{S} \cap \mathcal{T}$ are contained in $\mathcal{R} \backslash \mathcal{S} $. 
Therefore, the set  $\mathcal{S} \cap \mathcal{T}$ of $l$ transmitters has at most $\big| \mathcal{R} \backslash \mathcal{S} \big| = \floor{\frac{K}{2}}+l-1$ neighbours, contradicting  \nameref{cond:ehc} (Condition~\ref{cond:ehc}).

Proof of sufficiency: Assume that \nameref{cond:ehc} (Condition~\ref{cond:ehc}) does not hold, i.e., there exists a subset $\mathcal{S}\subset\mathcal{T}$, with $|\mathcal{S}| = \ell$ ($\ell \leq\ceil{\frac{K}{2}}$), where $|\mathcal{N}_{\mathcal{R}}(\mathcal{S})|\ \textless\ \floor{\frac{K}{2}}+l$, or equivalently $\big|\mathcal{R}\backslash \mathcal{N}_{\mathcal{R}}(\mathcal{S})\big|\ \textgreater\ \ceil{\frac{K}{2}}-\ell$.
Then, $\mathcal{S} \cup \big\{\mathcal{R}\backslash N_{\mathcal{R}}(\mathcal{S}) \big\}$ forms an independent set of size larger than $\ceil{\frac{K}{2}}$.
\end{proof}

\begin{definition}\label{def:G_hat}
Consider a transmitter $a\in \mathcal{T}$ and a receiver $b\in \mathcal{R}\backslash N_\mathcal{R}(a)$, i.e, $L_{ba} = 0$.
We define a graph $\hat{G}(a,b)$, by modifying the equivalent bipartite graph $G$ as follows.
\begin{enumerate}
\item Add a source node $s$ and a sink node $t$.
\item Connect node $s$ to all the transmitters and node $t$ to all the receivers, with one--way links.
\item Set the weights of the links in $G$, the link connecting $s$ to $a$ and the link connecting $b$ to $t$, to infinite (or large enough, e.g., $\ceil{\frac{K}{2}}^3$).
\item  Set the weights of all the links connecting $s$ to $\mathcal{T}\backslash a$ and the links connecting $\mathcal{R}\backslash b$ to $t$ equal to one.
\end{enumerate}
\end{definition}

\begin{lemma}\label{claim:min-cut}
Consider a transmitter $a\in \mathcal{T}$ and a receiver $b\in \mathcal{R}\backslash N_\mathcal{R}(a)$, and the corresponding graph $\hat{G}(a,b)$ as in Definition~\ref{def:G_hat}.
Then,
\begin{align}\label{eq:min-cut}
\big|\min\textnormal{--cut}(\hat{G}(a,b))\big| = 2K - \big|\max\textnormal{--PIS}(a,b)
\big|.
\end{align}
\end{lemma}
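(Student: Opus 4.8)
The plan is to use the max-flow/min-cut duality together with a translation between cuts in $\hat{G}(a,b)$ and partial independent sets of $G$ that contain the ``forced'' pair $(a,b)$. First I would set up the natural correspondence: any finite cut of $\hat{G}(a,b)$ separating $s$ from $t$ partitions $\mathcal{T}\cup\mathcal{R}$ into the $s$-side and the $t$-side; because the links of $G$ as well as the edges $s\!-\!a$ and $b\!-\!t$ all carry infinite weight, a finite cut can never cut any of those, which forces $a$ to lie on the $s$-side, $b$ to lie on the $t$-side, and imposes that no edge of $G$ goes from an $s$-side transmitter to a $t$-side receiver. Consequently, if $\mathcal{S}$ denotes the set consisting of the $s$-side transmitters together with the $t$-side receivers, then $\mathcal{S}$ is exactly a partial independent set of $G$ containing both $a$ and $b$. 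The only edges such a cut actually severs are the unit-weight edges $s\!-\!x$ for $t$-side transmitters $x$ (there are $|\mathcal{T}|-|\mathcal{S}\cap\mathcal{T}|$ of them) and the unit-weight edges $y\!-\!t$ for $s$-side receivers $y$ (there are $|\mathcal{R}|-|\mathcal{S}\cap\mathcal{R}|$ of them). Since $|\mathcal{T}|=|\mathcal{R}|=K$, the value of that cut is $2K-|\mathcal{S}|$.

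Next I would run the argument in both directions to get equality rather than just an inequality. Given any partial independent set $\mathcal{S}$ of $G$ with $a,b\in\mathcal{S}$, put the transmitters of $\mathcal{S}$ (and only those) on the $s$-side, the receivers of $\mathcal{S}$ (and only those) on the $t$-side, and assign every remaining vertex to whichever side keeps the cut finite — the independence of $\mathcal{S}$ guarantees no $G$-edge crosses, and the placement of $a,b$ respects the two infinite-weight attachment edges; this yields a finite cut of value $2K-|\mathcal{S}|$. Hence $|\min\text{--cut}(\hat{G}(a,b))|=\min_{\mathcal{S}}\,(2K-|\mathcal{S}|)=2K-\max_{\mathcal{S}}|\mathcal{S}|$, where the maximum is over partial independent sets of $G$ containing $a$ and $b$, i.e.\ over exactly the sets counted by $\max\text{--PIS}(a,b)$. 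This gives \eqref{eq:min-cut}.

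The main obstacle I anticipate is handling the vertices not in $\mathcal{S}$ cleanly — one has to verify that the ``remaining'' transmitters and receivers can always be assigned a side so that the cut stays finite and so that the bookkeeping $|\mathcal{T}|-|\mathcal{S}\cap\mathcal{T}|$, $|\mathcal{R}|-|\mathcal{S}\cap\mathcal{R}|$ is exact, not merely an upper bound; the key point is that pushing an unconstrained transmitter to the $s$-side can only add a finite-weight $G$-edge crossing if some receiver is already forced to the $t$-side, so the optimal assignment is forced and the correspondence is a bijection between finite cuts (up to irrelevant choices) and partial independent sets containing $a,b$. A secondary subtlety is confirming that the finite surrogate weight $\ceil{\frac{K}{2}}^3$ is large enough that no minimum cut ever prefers to sever an ``infinite'' edge: since any cut that avoids all such edges has value at most $2K<\ceil{\frac{K}{2}}^3$ for $K$ not too small, the surrogate behaves exactly like $\infty$, which I would note in passing.
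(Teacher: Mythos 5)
Your proposal is correct and is essentially the paper's own argument: both establish a bijection between finite cuts of $\hat{G}(a,b)$ and proper independent sets containing $a$ and $b$, with the identity $|\text{cut}|=2K-|\mathcal{I}|$; you phrase it via the $s$-side/$t$-side vertex partition while the paper phrases it via the set of unit-capacity links in the cut and the nodes not incident to any of them, which are two bookkeeping styles for the same correspondence. For the converse direction, the ``forced'' assignment you worry about is simply to put every transmitter outside $\mathcal{S}$ on the $t$-side and every receiver outside $\mathcal{S}$ on the $s$-side, which makes the count exact.
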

\begin{proof}
Let $\mathcal{S}$ be a cut of $\hat{G}(a,b)$ which consists only of unit capacity links. 
Let  $\mathcal{I}$ be the set of nodes that are endpoints of no link in $\mathcal{S}$. 
Then, it follows from the definition of a cut that no two nodes in $\mathcal{I}$ have any links in between, and $\mathcal{I}$ is an independent set of $\hat{G}(a,b)$ containing $a$ and $b$. 
Conversely, if  $\mathcal{I}$ is an independent set of $\hat{G}(a,b)$ containing $a$ and $b$, then removing the set $\mathcal{S}$ of unit capacity links with no endpoint in $\mathcal{I}$ will disconnect $s$ from $t$.
As a result, $\mathcal{S}$ is a cut of $\hat{G}(a,b)$ with $|\mathcal{S}| = 2K - |\mathcal{I}|$.

Therefore, each cut $\mathcal{S}$ of $\hat{G}(a,b)$ corresponds to an independent set $\mathcal{I}$ of size $2K-|\mathcal{S}|$ containing $a$ and $b$, and vice-versa.
Thus, $\min$--cut$(\hat{G}(a,b))$ corresponds to $\max$--PIS$(a,b)$, and the lemma follows. 
\end{proof}

\begin{lemma}\label{lemma:poly-pis}
For a transmitter $a$ and a receiver $b$,
$\max$--PIS$(a,b)$ can be found in a time polynomial in the network size, $K$.
\end{lemma}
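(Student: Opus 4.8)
The plan is to read off $\max$--PIS$(a,b)$ from a maximum $s$--$t$ flow in the auxiliary network $\hat{G}(a,b)$ of Definition~\ref{def:G_hat}, using the identity already proved in Lemma~\ref{claim:min-cut}: $\big|\max\textnormal{--PIS}(a,b)\big| = 2K - \big|\min\textnormal{--cut}(\hat{G}(a,b))\big|$. So it suffices to argue that a minimum $s$--$t$ cut of $\hat{G}(a,b)$ — together with the independent set it certifies — can be computed in time polynomial in $K$, and then translate a min-cut back into a maximum PIS.

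First I would record that $\hat{G}(a,b)$ has only $2K+2$ vertices ($s$, $t$, the $K$ transmitters, the $K$ receivers) and at most $K^2+2K$ directed edges, and that its capacities may be taken to be positive integers of size polynomial in $K$ (the ``infinite'' capacity being set to any value exceeding $2(K-1)$, e.g.\ $\ceil{\frac{K}{2}}^3$). Next I would note that the collection of all unit-capacity edges, $\{\,s\to j : j\in\mathcal{T}\setminus a\,\}\cup\{\,j\to t : j\in\mathcal{R}\setminus b\,\}$, is itself an $s$--$t$ cut of total weight $2(K-1)$ (using $L_{ba}=0$ and the orientation of the edges of $\hat{G}(a,b)$, every $s$--$t$ path has the form $s\to T_i\to R_j\to t$ and is severed by it). Consequently the minimum cut has weight at most $2(K-1)$, hence it cannot contain any large-capacity edge and therefore consists solely of unit-capacity edges; by the correspondence in the proof of Lemma~\ref{claim:min-cut} its weight is exactly $2K$ minus the size of a largest independent set of $G$ containing $a$ and $b$.

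Then I would finish by invoking the \emph{Max-Flow Min-Cut} theorem: run any standard polynomial-time maximum-flow algorithm on $\hat{G}(a,b)$ (for instance Edmonds--Karp, in $O(|V|\,|E|^2)$ time, or a push--relabel variant), extract a minimum cut $\mathcal{S}$ from the residual graph in linear time, and return the set $\mathcal{I}$ of non-terminal vertices incident to no edge of $\mathcal{S}$; by Lemma~\ref{claim:min-cut} this $\mathcal{I}$ contains $a$ and $b$ and has cardinality $2K-|\mathcal{S}| = \big|\max\textnormal{--PIS}(a,b)\big|$, so it is the desired maximum PIS, and every step runs in time polynomial in $K$. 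The one point I expect to need care is the middle step---certifying that the global minimum cut of the weighted network does not ``cheat'' by cutting a large-capacity edge---since that is exactly what licenses identifying the efficiently computable min-cut value with the combinatorial quantity $2K-\big|\max\textnormal{--PIS}(a,b)\big|$ of Lemma~\ref{claim:min-cut}; everything else is a direct appeal to the polynomial-time solvability of maximum flow.
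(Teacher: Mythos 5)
Your proposal is correct and follows essentially the same route as the paper: reduce $\max$--PIS$(a,b)$ to $\min$--cut$(\hat{G}(a,b))$ via Lemma~\ref{claim:min-cut}, then compute the min-cut through a polynomial-time max-flow computation. You additionally verify a detail the paper leaves implicit --- that the minimum cut has weight at most $2(K-1)$ and hence uses no large-capacity edge, which is what justifies restricting to cuts of unit-capacity links in Lemma~\ref{claim:min-cut} --- but this is a refinement of the same argument rather than a different one.
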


\begin{proof}
If  $a$ and  $b$ are connected, then there is no independent set including both of them.
Let us now focus on the case that $L_{ba}=0$.
According to Lemma~\ref{claim:min-cut}, in order to find $\max$--PIS$(a,b)$, we only need to find the $\min$--cut$(\hat{G}(a,b))$.
On the grounds of \emph{Max-Flow Min-Cut} theorem \cite{dantzig2003max}, this is equivalent to finding the maximum flow from $s$ to $t$.
The later can be solved through linear programming and thereby is polynomial time.
\end{proof}

\begin{proof}[Proof of Theorem~\ref{theorem:poly}]
According to Lemma~\ref{lemma:IS-cond1}, it is sufficient to find the size of maximum PIS for the equivalent bipartite graph, $G$, to verify \nameref{cond:ehc} (Condition~\ref{cond:ehc}).
However, according to Lemma~\ref{lemma:poly-pis}, it takes polynomial time to find the $\max$--PIS$(a,b)$, for each arbitrary transmitter $a$ and receiver $b$.
Solving this polynomial time problem for $K^2$ possible ways of choosing transmitter and receiver pairs, we can find the maximum PIS in a time polynomial in $K$.
This completes the proof.
\end{proof}

\appendices
\section{Preliminaries on Linear Algebra}\label{app:linear_algebra}

In this Appendix, we provide some preliminaries required to follow the proofs in this paper.
These results are purely mathematical and hence, are provided in a separate Appendix. 

\begin{lemma}\label{lemma:block_det}
Assume that $\vm{A}$ and $\vm{D}$ are diagonal matrices with sizes $r\times r$ and $n\times n$ respectively.
If
\begin{align*}
\vm{M} = \left[ \begin{array}{c c}
\vm{A} & \vm{B}\\
\vm{C} & \vm{D}
\end{array}\right],
\end{align*}
then
\begin{align}
\det \vm{M} = \det\vm{A} \det\vm{D},
\end{align}
whenever at least one of the blocks $\vm{B}$ and $\vm{C}$ is equal to $\vm{0}$.
\end{lemma}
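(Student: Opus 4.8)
The plan is to reduce the statement to the elementary fact that a block triangular matrix has determinant equal to the product of the determinants of its diagonal blocks, and then to prove that fact directly from the Leibniz expansion of the determinant, so that no invertibility of $\vm{A}$ or $\vm{D}$ is ever needed. First I would remove the asymmetry between the two cases: since $\det\vm{M} = \det(\vm{M}^{\h})$, and since $\vm{A}$, $\vm{D}$ are diagonal and hence equal to their own transposes, replacing $\vm{M}$ by $\vm{M}^{\h}$ interchanges the roles of the off-diagonal blocks $\vm{B}$ and $\vm{C}$ while leaving $\det\vm{M}$, $\det\vm{A}$, and $\det\vm{D}$ unchanged. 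So it suffices to treat the case $\vm{C}=\vm{0}$, i.e. $\vm{M}$ block upper triangular.

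Next, with $\vm{C}=\vm{0}$, I would set $N=r+n$, $I=\{1,\dots,r\}$, $J=\{r+1,\dots,N\}$, and write $\det\vm{M}=\sum_{\sigma\in S_N}\operatorname{sgn}(\sigma)\prod_{i=1}^N M_{i\sigma(i)}$. In any term with nonzero product, for each $i\in J$ the entry $M_{i\sigma(i)}$ lies in the block indexed by $(J,I)$, which is $\vm{C}=\vm{0}$, unless $\sigma(i)\in J$; hence $\sigma(J)\subseteq J$, and as $\sigma$ is a bijection this forces $\sigma(J)=J$ and $\sigma(I)=I$. Thus every contributing $\sigma$ factors as a product of a permutation $\sigma_1$ of $I$ and a permutation $\sigma_2$ of $J$, with $\operatorname{sgn}(\sigma)=\operatorname{sgn}(\sigma_1)\operatorname{sgn}(\sigma_2)$; moreover for $i\in I$ the entry $M_{i\sigma(i)}$ is an entry of $\vm{A}$ and for $i\in J$ it is an entry of $\vm{D}$. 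The sum then factors as $\big(\sum_{\sigma_1}\operatorname{sgn}(\sigma_1)\prod_{i\in I}A_{i\sigma_1(i)}\big)\big(\sum_{\sigma_2}\operatorname{sgn}(\sigma_2)\prod_{j\in J}D_{j\sigma_2(j)}\big)=\det\vm{A}\,\det\vm{D}$, which is the claim.

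Since $\vm{A}$ and $\vm{D}$ are diagonal, this can be stated even more concretely: within each block only the identity permutation gives a nonzero contribution, so $\det\vm{M}=\big(\prod_i A_{ii}\big)\big(\prod_j D_{jj}\big)$ directly. There is no genuine obstacle in this lemma; the only point worth a word of caution is that one should \emph{not} route the proof through the Schur-complement identity $\det\vm{M}=\det\vm{A}\cdot\det(\vm{D}-\vm{C}\vm{A}^{-1}\vm{B})$, which presupposes $\vm{A}$ invertible and would then require an additional density or continuity argument to cover singular blocks. The permutation argument above avoids this entirely, and in fact shows that the diagonality hypothesis is used only for the transpose reduction in the first step (and for the final one-line simplification), the block-triangular identity itself holding for arbitrary square $\vm{A}$ and $\vm{D}$.
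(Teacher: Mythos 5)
Your proof is correct, and it is more complete than what the paper actually provides: the paper only sketches an induction on $n$, expanding the determinant along a row at each step and asserting the base case $n=1$ is obvious, without writing out the details. Your route through the Leibniz expansion proves the same block-triangular determinant identity in one pass: the observation that a nonzero term forces $\sigma(J)\subseteq J$, hence $\sigma(J)=J$ and $\sigma(I)=I$ by bijectivity, is exactly the right mechanism, and the multiplicativity of the sign over permutations supported on disjoint index sets makes the factorization $\det\vm{M}=\det\vm{A}\det\vm{D}$ immediate. The transpose reduction handling the case $\vm{B}=\vm{0}$ is clean (and in fact does not even need the diagonality of $\vm{A}$ and $\vm{D}$, since $\det\vm{A}=\det\vm{A}^{\h}$ for any square block). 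Your closing remarks add value the paper omits: the identity holds for arbitrary square diagonal blocks, the diagonality only buys the one-line simplification $\det\vm{M}=\prod_i A_{ii}\prod_j D_{jj}$, and one should not detour through the Schur-complement formula of Lemma~\ref{lemma:det_block}, which presupposes invertibility and would require a separate continuity argument --- and which, in the paper's logical order, is itself proved \emph{using} the present lemma, so such a detour would also be circular. The inductive cofactor argument the paper gestures at and your permutation argument are both elementary and essentially equivalent in content; yours simply discharges the obligation fully rather than deferring it.
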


The proof is base on induction.
We first consider the case $n=1$, where the result becomes obvious.
Then assuming that the result holds for $n=m-1$, we show that the result holds for the case of $n=m$, by expanding the determinant.
We do not go into details of the proof. 

\begin{lemma}\label{lemma:det_block}
For block matrix
$\vm{M} = \left[ \begin{array}{c c}
\vm{A} & \vm{B}\\
\vm{C} & \vm{D}
\end{array}\right]$ 
with submatrices $\vm{A}_{n\times n}$, $\vm{B}_{n\times m}$, $\vm{C}_{m\times n}$ and $\vm{D}_{m\times m}$, we have:
\begin{align}
\det(\vm{M}) = \left\lbrace \begin{array}{c c}
\det(\vm{A})\det(\vm{D}-\vm{C}\vm{A}^{-1}\vm{B}), & \text{if}\ \vm{A}\ \text{is invertible},\\
\det(\vm{D})\det(\vm{A}-\vm{B}\vm{D}^{-1}\vm{C}), & \text{if}\ \vm{D}\ \text{is invertible}.
\end{array}\right.
\end{align}
\end{lemma}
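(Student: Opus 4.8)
The plan is to derive both identities from an explicit block-triangular factorization of $\vm{M}$ together with the multiplicativity of the determinant. Assume first that $\vm{A}$ is invertible. First I would verify, by carrying out the block multiplication, the Schur-complement factorization
\begin{align*}
\left[\begin{array}{cc}\vm{A} & \vm{B}\\ \vm{C} & \vm{D}\end{array}\right]
=
\left[\begin{array}{cc}\vm{I} & \vm{0}\\ \vm{C}\vm{A}^{-1} & \vm{I}\end{array}\right]
\left[\begin{array}{cc}\vm{A} & \vm{0}\\ \vm{0} & \vm{D}-\vm{C}\vm{A}^{-1}\vm{B}\end{array}\right]
\left[\begin{array}{cc}\vm{I} & \vm{A}^{-1}\vm{B}\\ \vm{0} & \vm{I}\end{array}\right],
\end{align*}
where the identity blocks have the appropriate sizes $n$ and $m$. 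Taking determinants on both sides and using $\det(\vm{XYZ})=\det\vm{X}\,\det\vm{Y}\,\det\vm{Z}$ then reduces the claim to two facts: that each of the two unitriangular factors has determinant $1$, and that the block-diagonal middle factor has determinant $\det(\vm{A})\det(\vm{D}-\vm{C}\vm{A}^{-1}\vm{B})$.

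For the second identity, when $\vm{D}$ is invertible I would use the mirror-image factorization
\begin{align*}
\left[\begin{array}{cc}\vm{A} & \vm{B}\\ \vm{C} & \vm{D}\end{array}\right]
=
\left[\begin{array}{cc}\vm{I} & \vm{B}\vm{D}^{-1}\\ \vm{0} & \vm{I}\end{array}\right]
\left[\begin{array}{cc}\vm{A}-\vm{B}\vm{D}^{-1}\vm{C} & \vm{0}\\ \vm{0} & \vm{D}\end{array}\right]
\left[\begin{array}{cc}\vm{I} & \vm{0}\\ \vm{D}^{-1}\vm{C} & \vm{I}\end{array}\right],
\end{align*}
again checked by direct multiplication, and argue in exactly the same way. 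Note that when $\vm{A}$ (respectively $\vm{D}$) fails to be invertible the corresponding formula is simply not asserted, so no continuity or perturbation argument is needed.

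The one ingredient that is not immediate from what is already available is the statement that the determinant of a block-triangular matrix equals the product of the determinants of its two diagonal blocks. This is the natural generalization of Lemma~\ref{lemma:block_det} in which the diagonal blocks need not themselves be diagonal, and it can be proved by the very same induction on the size of the lower-right block, expanding along its last column (all of whose entries lying in the off-diagonal block vanish, by triangularity). Applying this generalization with one diagonal block equal to an identity matrix disposes of the two unitriangular factors, and applying it directly to the block-diagonal middle factor disposes of the remaining term, completing the computation. The main obstacle, to the extent there is one, is purely bookkeeping: stating the block-triangular determinant lemma in a form that simultaneously covers the upper-triangular factor $\left[\begin{smallmatrix}\vm{I} & *\\ \vm{0} & \vm{I}\end{smallmatrix}\right]$, the lower-triangular factor $\left[\begin{smallmatrix}\vm{I} & \vm{0}\\ * & \vm{I}\end{smallmatrix}\right]$, and the block-diagonal factor, with off-diagonal blocks of arbitrary rectangular shape.
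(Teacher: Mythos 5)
Your proposal is correct and follows essentially the same route as the paper: the paper also reduces the claim to a Schur--complement factorization (it right--multiplies $\vm{M}$ by the unitriangular factor $\left[\begin{smallmatrix}\vm{I}&-\vm{A}^{-1}\vm{B}\\ \vm{0}&\vm{I}\end{smallmatrix}\right]$ to obtain a block lower--triangular matrix, rather than writing out the full three--factor decomposition, but this is only a cosmetic difference) and then invokes multiplicativity of the determinant together with the block--triangular determinant formula. Your remark that Lemma~\ref{lemma:block_det} must be generalized to diagonal blocks that are not themselves diagonal matrices is well taken --- the paper applies that lemma to blocks such as $\vm{D}-\vm{C}\vm{A}^{-1}\vm{B}$ without comment, so your explicit treatment of this point is, if anything, more careful than the paper's.
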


\begin{proof}
Recall the well-known fact that $det(\vm{X}\vm{Y}) = \det(\vm{X})\det(\vm{Y})$.\\
Consider the case that $\vm{A}$ is invertible.
The proof is based on the fact that 
\begin{align*}
\left[ \begin{array}{c c}
\vm{A} & \vm{B}\\
\vm{C} & \vm{D}
\end{array}\right]
\left[ \begin{array}{c c}
\vm{I}_n & -\vm{A}^{-1}\vm{B}\\
\vm{0} & \vm{I}_m
\end{array}\right] = 
\left[ \begin{array}{c c}
\vm{A} & \vm{0}\\
\vm{C} & \vm{D}-\vm{C}\vm{A}^{-1}\vm{B}
\end{array}\right].
\end{align*}
From Lemma~\ref{lemma:block_det}, we have 
\begin{align*}
\det\left[ \begin{array}{c c}
\vm{I}_n & -\vm{A}^{-1}\vm{B}\\
\vm{0} & \vm{I}_m
\end{array}\right] = 1,
\end{align*}
and
\begin{align*}
\det\left[ \begin{array}{c c}
\vm{A} & \vm{0}\\
\vm{C} & \vm{D}-\vm{C}\vm{A}^{-1}\vm{B}
\end{array}\right] = \det(\vm{A})\det(\vm{D}-\vm{C}\vm{A}^{-1}\vm{B}),
\end{align*}
which completes the proof.
Considering the case that $\vm{D}$ is invertible, the proof is based on the fact that 
\begin{align*}
\left[ \begin{array}{c c}
\vm{A} & \vm{B}\\
\vm{C} & \vm{D}
\end{array}\right]
\left[ \begin{array}{c c}
\vm{I} & \vm{0}\\
-\vm{D}^{-1}\vm{C} & \vm{I}
\end{array}\right] = 
\left[ \begin{array}{c c}
\vm{A} - \vm{B}\vm{D}^{-1}\vm{C} & \vm{B}\\
\vm{0} & \vm{D}
\end{array}\right].
\end{align*}
The rest of the proof is similar to the case where $\vm{A}$ is invertible.
\end{proof}

\begin{lemma}[\cite{ashraphijuo2013capacity}, Lemma 11\ ]\label{lemma:matrix_function}
Let $\vm{L}(\vm{K},\vm{S})$ be defined as
\begin{align}
\vm{L}(\vm{K},\vm{S}) \triangleq \vm{K} - \vm{K}\vm{S}\big(\vm{I}_N+ \vm{S}^\dagger \vm{K}\vm{S}\big)^{-1}\vm{S}^\dagger\vm{K},
\end{align}
for some $M\times M$ p.s.d. Hermitian matrix $\vm{K}$ and some $M\times N$ matrix $\vm{S}$.
Then, if $0\preceq \vm{K}_1 \preceq\vm{K}_2$ for some Hermitian matrices $\vm{K}_1$ and $\vm{K}_2$, we have
\begin{align}
\vm{L}(\vm{K}_1,\vm{S}) \preceq \vm{L}(\vm{K}_2,\vm{S}).
\end{align}
\end{lemma}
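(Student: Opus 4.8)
The plan is to realize $\vm{L}(\vm{K},\vm{S})$ as a Schur complement of an explicitly positive--semidefinite block matrix, and then exploit the variational (minimization) characterization of the Schur complement, which is manifestly monotone in $\vm{K}$.

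First I would record the identity, valid for every $u\in\mathbb{C}^M$,
\begin{align*}
u^\dagger \vm{L}(\vm{K},\vm{S})\, u \;=\; \min_{v\in\mathbb{C}^N}\Big[(u+\vm{S}v)^\dagger\vm{K}(u+\vm{S}v) + v^\dagger v\Big].
\end{align*}
To prove it, expand the right--hand side as the Hermitian quadratic form $u^\dagger\vm{K}u + u^\dagger(\vm{K}\vm{S})v + v^\dagger(\vm{K}\vm{S})^\dagger u + v^\dagger(\vm{I}_N+\vm{S}^\dagger\vm{K}\vm{S})v$ in the variable $v$. Since $\vm{I}_N+\vm{S}^\dagger\vm{K}\vm{S}\succeq\vm{I}_N\succ 0$, this matrix is invertible, so the unique minimizer is $v^\star = -(\vm{I}_N+\vm{S}^\dagger\vm{K}\vm{S})^{-1}\vm{S}^\dagger\vm{K}u$; substituting $v^\star$ back (i.e. completing the square) yields exactly $u^\dagger\big[\vm{K} - \vm{K}\vm{S}(\vm{I}_N+\vm{S}^\dagger\vm{K}\vm{S})^{-1}\vm{S}^\dagger\vm{K}\big]u = u^\dagger\vm{L}(\vm{K},\vm{S})u$. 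Note that this requires no invertibility of $\vm{K}$, so it covers the entire p.s.d. range in one stroke. Equivalently, one may invoke the matrix inversion lemma to write $\vm{L}(\vm{K},\vm{S}) = (\vm{K}^{-1}+\vm{S}\vm{S}^\dagger)^{-1}$ when $\vm{K}\succ 0$ and then pass to the limit $\vm{K}+\epsilon\vm{I}_M$ as $\epsilon\to 0^+$, but this detour costs a separate continuity argument, whereas the variational identity is uniform.

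Given the identity, monotonicity is immediate. If $0\preceq\vm{K}_1\preceq\vm{K}_2$, then for every fixed $v\in\mathbb{C}^N$ we have $(u+\vm{S}v)^\dagger\vm{K}_1(u+\vm{S}v)\le(u+\vm{S}v)^\dagger\vm{K}_2(u+\vm{S}v)$, so the bracketed expression associated with $\vm{K}_1$ lies below the one associated with $\vm{K}_2$ pointwise in $v$; taking the infimum over $v$ preserves the inequality, giving $u^\dagger\vm{L}(\vm{K}_1,\vm{S})u \le u^\dagger\vm{L}(\vm{K}_2,\vm{S})u$ for all $u\in\mathbb{C}^M$, i.e. $\vm{L}(\vm{K}_1,\vm{S})\preceq\vm{L}(\vm{K}_2,\vm{S})$.

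The only real work is verifying the variational identity; the monotonicity step is then a one--liner. The main thing to watch is the bookkeeping in completing the square: using that $\vm{K}$ is Hermitian so $(\vm{K}\vm{S})^\dagger=\vm{S}^\dagger\vm{K}$, tracking signs through the substitution, and noting once and for all that $\vm{I}_N+\vm{S}^\dagger\vm{K}\vm{S}$ is invertible for every p.s.d. $\vm{K}$ so that $\vm{L}$ is well defined with no extra hypotheses. No difficulty beyond that is expected.
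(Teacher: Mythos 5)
Your proof is correct. Note that the paper itself offers no proof of this lemma at all --- it is imported verbatim by citation from an external reference --- so there is nothing in the source to compare against; your argument stands as a self-contained replacement. The variational identity $u^\dagger \vm{L}(\vm{K},\vm{S})u=\min_{v}\big[(u+\vm{S}v)^\dagger\vm{K}(u+\vm{S}v)+v^\dagger v\big]$ checks out: with $A=\vm{I}_N+\vm{S}^\dagger\vm{K}\vm{S}\succeq\vm{I}_N\succ 0$ and $b=\vm{S}^\dagger\vm{K}u$, completing the square gives the minimizer $v^\star=-A^{-1}b$ and minimum value $u^\dagger\vm{K}u-b^\dagger A^{-1}b=u^\dagger\vm{L}(\vm{K},\vm{S})u$, and the pointwise-in-$v$ comparison of the objectives for $\vm{K}_1\preceq\vm{K}_2$ then transfers to the minima. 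The standard route (and most likely the one in the cited reference) instead writes $\vm{L}(\vm{K},\vm{S})=(\vm{K}^{-1}+\vm{S}\vm{S}^\dagger)^{-1}$ via the matrix inversion lemma and invokes operator antimonotonicity of the inverse twice, which forces a separate perturbation/continuity step to cover singular $\vm{K}$; your variational argument handles the full p.s.d.\ cone uniformly and needs only that $\vm{I}_N+\vm{S}^\dagger\vm{K}\vm{S}$ is invertible, which is automatic. This is a clean and arguably preferable proof.
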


\begin{lemma}[\cite{horn1990matrix}, Corollary 7.7.4.b and \cite{ashraphijuo2014capacity}, Lemma 2\ ]\label{lemma:A_B}
For two Hermitian positive definite matrices $\vm{A}$ and $\vm{B}$ of size $m\times m$, if $\vm{A}\preceq \vm{B}$, then $\det(\vm{A}) \leq \det(\vm{B})$.
\end{lemma}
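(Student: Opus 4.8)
Since Lemma~\ref{lemma:A_B} is a classical fact (cited here from \cite{horn1990matrix} and \cite{ashraphijuo2014capacity}), the plan is just to recall the standard short argument, which proceeds through a normalizing congruence. As $\vm{A}$ is Hermitian positive definite it has an invertible Hermitian positive-definite square root $\vm{A}^{1/2}$. Conjugating the hypothesis $\vm{A}\preceq\vm{B}$ by $\vm{A}^{-1/2}$ preserves the order $\preceq$ --- because $\vm{X}\succeq\vm{0}$ implies $\vm{P}^\dagger\vm{X}\vm{P}\succeq\vm{0}$ for every $\vm{P}$, directly from the quadratic-form definition --- so
\begin{align*}
\vm{I}_m = \vm{A}^{-1/2}\vm{A}\vm{A}^{-1/2} \preceq \vm{A}^{-1/2}\vm{B}\vm{A}^{-1/2} =: \vm{C},
\end{align*}
and $\vm{C}$ is again Hermitian positive definite.

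Next I would observe that $\vm{I}_m\preceq\vm{C}$ forces $\det(\vm{C})\ge 1$: for any unit eigenvector $\vm{v}$ of $\vm{C}$ with eigenvalue $\lambda$ we have $\lambda = \vm{v}^\dagger\vm{C}\vm{v}\ge\vm{v}^\dagger\vm{v}=1$, so every eigenvalue of $\vm{C}$ is at least one and hence so is their product $\det(\vm{C})$. Unwinding the congruence via multiplicativity of the determinant,
\begin{align*}
\det(\vm{B}) = \det\!\big(\vm{A}^{1/2}\vm{C}\vm{A}^{1/2}\big) = \det(\vm{A})\,\det(\vm{C}) \ge \det(\vm{A}),
\end{align*}
which is exactly the claim.

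There is essentially no obstacle in this argument; the only step that deserves a word of care is the congruence-invariance of $\preceq$ invoked above. If one prefers to avoid the square root, two equally short alternatives are available: Weyl's monotonicity theorem gives $\lambda_k(\vm{A})\le\lambda_k(\vm{B})$ for the eigenvalues listed in increasing order, after which one multiplies over $k$ using positivity of all the factors; or one differentiates $t\mapsto\det\!\big((1-t)\vm{A}+t\vm{B}\big)$ on $[0,1]$, whose derivative $\det(M(t))\,\mathrm{tr}\!\big(M(t)^{-1}(\vm{B}-\vm{A})\big)$ is nonnegative since $M(t)\succ\vm{0}$ and $\vm{B}-\vm{A}\succeq\vm{0}$, so that $\det M$ is nondecreasing on $[0,1]$ and therefore $\det(\vm{A})\le\det(\vm{B})$.
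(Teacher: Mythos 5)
Your argument is correct and complete: the congruence by $\vm{A}^{-1/2}$ preserves the Loewner order, reduces the claim to $\det(\vm{C})\ge 1$ for $\vm{C}\succeq\vm{I}_m$, and the eigenvalue and multiplicativity steps are all sound (as are the two alternatives you sketch). The paper itself does not prove this lemma --- it is stated as a cited classical fact from Horn and Johnson, Corollary 7.7.4.b --- so there is no in-paper argument to compare against; your write-up simply supplies the standard textbook proof, which is exactly what is called for.
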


\begin{lemma}\label{lemma:psd_trace}
For Hermitian positive semi--definite matrix $\vm{Q}$ with $Tr(\vm{Q}) = \lambda \geq 0$ we have
\begin{align}
\lambda \vm{I} - \vm{Q} \succeq \vm{0}.
\end{align}
\end{lemma}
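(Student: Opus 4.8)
The plan is to diagonalize $\vm{Q}$ and argue eigenvalue-by-eigenvalue. Since $\vm{Q}$ is Hermitian positive semi-definite, by the spectral theorem there is a unitary $\vm{U}$ with $\vm{Q} = \vm{U}\,\mathrm{diag}(\mu_1,\dots,\mu_m)\,\vm{U}^\dagger$ where each $\mu_i \geq 0$. The trace is unitarily invariant, so $\sum_{i=1}^m \mu_i = Tr(\vm{Q}) = \lambda$.

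The key observation is that because every $\mu_i$ is nonnegative, each one is dominated by the whole sum: $\mu_j \le \sum_{i=1}^m \mu_i = \lambda$ for every $j$. Consequently $\lambda \vm{I} - \vm{Q} = \vm{U}\,\mathrm{diag}(\lambda-\mu_1,\dots,\lambda-\mu_m)\,\vm{U}^\dagger$ has all of its eigenvalues $\lambda - \mu_j \ge 0$, which is exactly the statement that $\lambda\vm{I} - \vm{Q} \succeq \vm{0}$. Equivalently, without invoking the eigendecomposition explicitly, one can note that for any vector $\vm{v}$ we have $\vm{v}^\dagger \vm{Q} \vm{v} \le \lambda_{\max}(\vm{Q})\,\|\vm{v}\|^2 \le Tr(\vm{Q})\,\|\vm{v}\|^2 = \lambda \|\vm{v}\|^2$, where the middle inequality uses that the largest eigenvalue of a PSD matrix is at most its trace; rearranging gives $\vm{v}^\dagger(\lambda\vm{I}-\vm{Q})\vm{v} \ge 0$.

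There is essentially no obstacle here: the only fact being used is that the largest eigenvalue of a Hermitian PSD matrix does not exceed the sum of all its (nonnegative) eigenvalues, i.e. its trace. I would present the one-line spectral argument as the proof and perhaps remark on the quadratic-form version for readers who prefer to avoid diagonalization.

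\begin{proof}
Since $\vm{Q}$ is Hermitian positive semi--definite, by the spectral theorem we may write $\vm{Q} = \vm{U}\,\mathrm{diag}(\mu_1,\dots,\mu_m)\,\vm{U}^\dagger$ with $\vm{U}$ unitary and $\mu_i \geq 0$ for all $i$. By unitary invariance of the trace, $\sum_{i=1}^m \mu_i = Tr(\vm{Q}) = \lambda$. Since every $\mu_i$ is nonnegative, for each $j$ we have $\mu_j \leq \sum_{i=1}^m \mu_i = \lambda$. Therefore
\begin{align*}
\lambda \vm{I} - \vm{Q} = \vm{U}\,\mathrm{diag}(\lambda - \mu_1,\dots,\lambda - \mu_m)\,\vm{U}^\dagger,
\end{align*}
and all the diagonal entries $\lambda - \mu_j$ are nonnegative, so $\lambda \vm{I} - \vm{Q} \succeq \vm{0}$.
\end{proof}
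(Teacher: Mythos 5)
Your proof is correct and follows essentially the same route as the paper: both arguments observe that the eigenvalues of a PSD matrix are nonnegative, hence each is bounded by their sum (the trace), so every eigenvalue of $\lambda\vm{I}-\vm{Q}$ is nonnegative. You merely make the spectral decomposition explicit where the paper reasons directly about the eigenvalue set, so no substantive difference.
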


\begin{proof}
Let the  $\mathcal{\Lambda}= \{\lambda_1 ,\lambda_2, \ldots,\lambda_n \}$ denote the set the eigenvalues of $\vm{Q}$. 
Since $\vm{Q}$ is a positive semi--definite matrix, we have $\lambda_i\geq 0$ for all $i$. 
On the other hand, we know that $Tr{\vm{Q}} = \sum_{i} \lambda_i$, and therefore one can conclude that $\lambda_i\leq \lambda$ for all $i$.
As the result, the all the eigenvalues of the matrix $\lambda \vm{I} - \vm{Q}$ are greater than zero, i.e., $\lambda \vm{I} - \vm{Q}$ is positive semi--definite.
\end{proof}

\begin{lemma}[\cite{karmakar2012generalized}, Lemma 4\ ]\label{lemma:dof_simple_sum}
Let $\vm{H}_{1}\in \mathbb{C}^{N\times M_1}$,$\vm{H}_{2}\in \mathbb{C}^{N\times M_2}$,\ldots , and $\vm{H}_{k}\in \mathbb{C}^{N\times M_k}$ be $k$ full--rank and independent channel matrices.
Then, the following holds
\begin{align}\label{eq:dof_simple_sum}
\log&\det(\vm{I}_N + P \vm{H}_1\vm{H}_1^\dagger + P \vm{H}_2\vm{H}_2^\dagger + \ldots+ P \vm{H}_k\vm{H}_k^\dagger) \nonumber \\
&= \log\det(\vm{I}_N + P[\vm{H}_1\ \ldots\ \vm{H}_k][\vm{H}_1\ \ldots\ \vm{H}_k]^\dagger)\nonumber \\
&= \min\{ N, M_1 + M_2 + \dots + M_k\}\log P + o(\log P)
\end{align}
\end{lemma}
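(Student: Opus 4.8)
The plan is to treat the two equalities in \eqref{eq:dof_simple_sum} separately: the first is block linear algebra, and the second is the standard high-$\SNR$ pre-log of a point-to-point MIMO channel. Set $\vm{H} \triangleq [\vm{H}_1\ \vm{H}_2\ \cdots\ \vm{H}_k] \in \mathbb{C}^{N\times(M_1+\cdots+M_k)}$. Multiplying out in block form immediately gives $\vm{H}\vm{H}^\dagger = \sum_{i=1}^k \vm{H}_i\vm{H}_i^\dagger$, so the first two expressions in \eqref{eq:dof_simple_sum} are the determinant of literally the same matrix and the first equality requires nothing further. It therefore remains to show that $\log\det(\vm{I}_N + P\vm{H}\vm{H}^\dagger) = r\log P + o(\log P)$ with $r \triangleq \min\{N,\, M_1+\cdots+M_k\}$.

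The crux is the rank computation: the claim is that $\mathrm{rank}(\vm{H}) = r$ almost surely. The inequality $\mathrm{rank}(\vm{H})\le r$ is forced by the dimensions of $\vm{H}$. For the reverse inequality, observe that $\vm{H}$ possesses some $r\times r$ submatrix of nonzero determinant precisely when a certain polynomial in the (independent, continuously distributed) entries of the $\vm{H}_i$ does not vanish; this polynomial is not identically zero, because one can exhibit a single admissible choice of the blocks — for instance, taking the columns of the $\vm{H}_i$ to be disjoint subsets of the columns of $\vm{I}_N$ selected so that together they span $\mathbb{C}^r$, which keeps each $\vm{H}_i$ full rank — for which the chosen submatrix is the identity. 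A non-identically-zero polynomial vanishes only on a Lebesgue-null set, so $\mathrm{rank}(\vm{H}) = r$ with probability one. This is exactly where the hypothesis that the $\vm{H}_i$ are full rank \emph{and independent} (i.e.\ jointly in generic position) is used, and it is the only non-routine step.

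Given $\mathrm{rank}(\vm{H}) = r$, I would conclude with an eigenvalue expansion. Since $\vm{H}\vm{H}^\dagger$ is Hermitian positive semidefinite of rank $r$, write its eigenvalues as $\sigma_1^2 \ge \cdots \ge \sigma_r^2 > 0$ together with $N-r$ zeros; then
\begin{align*}
\log\det(\vm{I}_N + P\vm{H}\vm{H}^\dagger) = \sum_{j=1}^r \log\!\big(1+P\sigma_j^2\big) = r\log P + \sum_{j=1}^r \log\!\Big(\sigma_j^2 + \tfrac{1}{P}\Big).
\end{align*}
For almost every realization the $\sigma_j^2$ are fixed strictly positive constants, so the last sum stays bounded as $P\to\infty$ and is $o(\log P)$; combined with the first equality this yields \eqref{eq:dof_simple_sum}. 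The main obstacle worth flagging is the genericity argument for the rank — in particular, constructing a witness configuration that respects the ``each $\vm{H}_i$ full rank'' constraint while still realizing rank $r$ for the concatenation — whereas the block-matrix identity and the eigenvalue bound are entirely routine.
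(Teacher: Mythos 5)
Your argument is correct, but note that the paper does not actually prove this lemma --- it is imported verbatim from \cite{karmakar2012generalized} (Lemma~4) and used as a black box --- so there is no in-paper proof to compare against. Your three-step route (the block identity $\vm{H}\vm{H}^\dagger=\sum_i\vm{H}_i\vm{H}_i^\dagger$, the almost-sure rank computation for the concatenation, and the eigenvalue expansion of $\log\det(\vm{I}_N+P\vm{H}\vm{H}^\dagger)$) is the standard one, and each step is sound; in particular the final bound $\sum_{j=1}^{r}\log(\sigma_j^2+1/P)=O(1)=o(\log P)$ is handled correctly.

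Two small remarks on the genericity step, which you rightly identify as the only non-routine part. First, your witness cannot literally consist of \emph{disjoint} subsets of the columns of $\vm{I}_N$ when $M_1+\cdots+M_k>N$, since only $N$ such columns exist; the fix is immediate (assign distinct standard basis vectors until all of $e_1,\dots,e_N$ have been used, then fill the remaining blocks with arbitrary full--rank matrices), but as written the witness does not exist in exactly the regime where $r=N$. Second, the care you take to make the witness respect the ``each $\vm{H}_i$ full rank'' constraint is unnecessary: under the paper's channel model the entries of all blocks are jointly independent with continuous distributions, so each $\vm{H}_i$ is full rank almost surely and the hypothesis imposes no conditioning --- the polynomial non-vanishing argument can be run over the unrestricted entry space, which also removes the first issue.
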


\begin{lemma}[\cite{ashraphijuo2013capacity}, Lemma 23\ ]\label{lemma:dof_complex}
Let $\vm{H}_{ii}\in\mathbb{C}^{N_i\times M_i}$ and $\vm{H}_{ji}\in\mathbb{C}^{N_j\times M_i}$ be two channel matrices whose entries are independently chosen from $\mathsf{CN}(0,1)$.
Then, the following holds with probability 1 (over the randomness of channel matrices).
\begin{align}\label{eq:dof_complex}
\log\det(\vm{I}_{N_i} &+ P \vm{H}_{ii}\vm{H}_{ii}^\dagger - P^2 \vm{H}_{ii}\vm{H}_{ji}^\dagger (\vm{I}_{N_j}+P\vm{H}_{ji}\vm{H}_{ji}^\dagger)^{-1}\vm{H}_{ji}\vm{H}_{ii}^\dagger)\nonumber\\
&= \min\{N_i,(M_i-N_j)^+\}\log P + o(\log P)
\end{align} 
\end{lemma}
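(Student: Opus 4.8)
The plan is to recognize the matrix whose determinant appears in \eqref{eq:dof_complex} as a Schur complement of a larger, easy-to-analyze matrix, and thereby reduce the statement to two applications of Lemma~\ref{lemma:dof_simple_sum}. Let $\vm{H}$ denote the $(N_i+N_j)\times M_i$ matrix obtained by stacking $\vm{H}_{ii}$ on top of $\vm{H}_{ji}$, and consider
\begin{align*}
\vm{M} \triangleq \vm{I}_{N_i+N_j} + P\,\vm{H}\vm{H}^\dagger = \left[\begin{array}{c c}
\vm{I}_{N_i} + P\vm{H}_{ii}\vm{H}_{ii}^\dagger & P\vm{H}_{ii}\vm{H}_{ji}^\dagger \\
P\vm{H}_{ji}\vm{H}_{ii}^\dagger & \vm{I}_{N_j} + P\vm{H}_{ji}\vm{H}_{ji}^\dagger
\end{array}\right].
\end{align*}
The bottom-right block $\vm{D} \triangleq \vm{I}_{N_j} + P\vm{H}_{ji}\vm{H}_{ji}^\dagger$ is positive definite, hence invertible; writing $\vm{A},\vm{B},\vm{C},\vm{D}$ for the four blocks of $\vm{M}$ as in Lemma~\ref{lemma:det_block}, one checks directly that the Schur complement $\vm{A}-\vm{B}\vm{D}^{-1}\vm{C}$ equals $\vm{I}_{N_i} + P\vm{H}_{ii}\vm{H}_{ii}^\dagger - P^2\vm{H}_{ii}\vm{H}_{ji}^\dagger(\vm{I}_{N_j}+P\vm{H}_{ji}\vm{H}_{ji}^\dagger)^{-1}\vm{H}_{ji}\vm{H}_{ii}^\dagger$, which is precisely the argument of the determinant in \eqref{eq:dof_complex}.

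Next I would invoke Lemma~\ref{lemma:det_block}: since $\vm{D}$ is invertible, $\det(\vm{M}) = \det(\vm{D})\,\det(\vm{A}-\vm{B}\vm{D}^{-1}\vm{C})$, so that
\begin{align*}
\log\det(\vm{A}-\vm{B}\vm{D}^{-1}\vm{C}) = \log\det\!\big(\vm{I}_{N_i+N_j} + P\vm{H}\vm{H}^\dagger\big) - \log\det\!\big(\vm{I}_{N_j} + P\vm{H}_{ji}\vm{H}_{ji}^\dagger\big).
\end{align*}
Both determinants on the right are now in the form handled by Lemma~\ref{lemma:dof_simple_sum} (taken with a single summand): the entries of $\vm{H}$ are i.i.d.\ $\mathsf{CN}(0,1)$ — they inherit joint independence from the fact that the blocks $\vm{H}_{ii}$ and $\vm{H}_{ji}$ are mutually independent with i.i.d.\ Gaussian entries — so $\vm{H}$ is full rank with probability one, and likewise $\vm{H}_{ji}$. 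Hence, with probability one,
\begin{align*}
\log\det(\vm{A}-\vm{B}\vm{D}^{-1}\vm{C}) = \big(\min\{N_i+N_j,\,M_i\} - \min\{N_j,\,M_i\}\big)\log P + o(\log P).
\end{align*}

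It then remains only to check the elementary identity $\min\{N_i+N_j,M_i\} - \min\{N_j,M_i\} = \min\{N_i,(M_i-N_j)^+\}$, which I would verify by splitting into the three regimes $M_i\le N_j$, $N_j<M_i\le N_i+N_j$, and $M_i>N_i+N_j$; in each case both sides reduce to the same quantity ($0$, $M_i-N_j$, and $N_i$, respectively), matching the claimed prelog factor. I do not expect a genuine obstacle here — the argument is mechanical once the Schur-complement structure is spotted, which is really the only ``idea'' in the proof; the sole point requiring a little care is the almost-sure full-rank claim for the stacked matrix $\vm{H}$. As an equivalent but slightly more laborious alternative one can avoid the block matrix: use the push-through identity $\vm{I}_{M_i} - P\vm{H}_{ji}^\dagger(\vm{I}_{N_j}+P\vm{H}_{ji}\vm{H}_{ji}^\dagger)^{-1}\vm{H}_{ji} = (\vm{I}_{M_i}+P\vm{H}_{ji}^\dagger\vm{H}_{ji})^{-1}$ to rewrite the argument as $\vm{I}_{N_i}+P\vm{H}_{ii}(\vm{I}_{M_i}+P\vm{H}_{ji}^\dagger\vm{H}_{ji})^{-1}\vm{H}_{ii}^\dagger$, note that $(\vm{I}_{M_i}+P\vm{H}_{ji}^\dagger\vm{H}_{ji})^{-1}$ has $\min\{M_i,N_j\}$ eigenvalues of order $1/P$ and $(M_i-N_j)^+$ eigenvalues equal to $1$ on the null space of $\vm{H}_{ji}$, and conclude by combining Lemma~\ref{lemma:dof_simple_sum} with the monotonicity of $\log\det$ from Lemma~\ref{lemma:A_B}.
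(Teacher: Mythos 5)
The paper never proves this lemma --- it is imported verbatim by citation from \cite{ashraphijuo2013capacity} (Lemma 23 there) --- so there is no in-paper argument to compare against; what you have written is a correct, self-contained substitute that uses only tools already present in the paper's appendices. The key step is sound: with $\vm{H}$ the $(N_i+N_j)\times M_i$ matrix obtained by stacking $\vm{H}_{ii}$ on $\vm{H}_{ji}$, the matrix inside the determinant in \eqref{eq:dof_complex} is exactly the Schur complement of the block $\vm{I}_{N_j}+P\vm{H}_{ji}\vm{H}_{ji}^\dagger$ in $\vm{I}_{N_i+N_j}+P\vm{H}\vm{H}^\dagger$, so Lemma~\ref{lemma:det_block} gives
\begin{align*}
\log\det\big(\vm{A}-\vm{B}\vm{D}^{-1}\vm{C}\big)=\log\det\big(\vm{I}_{N_i+N_j}+P\vm{H}\vm{H}^\dagger\big)-\log\det\big(\vm{I}_{N_j}+P\vm{H}_{ji}\vm{H}_{ji}^\dagger\big),
\end{align*}
and each term on the right is covered by Lemma~\ref{lemma:dof_simple_sum} with $k=1$ (the independence hypothesis of that lemma is vacuous for a single matrix, and the stacked matrix is full rank almost surely because all $(N_i+N_j)M_i$ entries are jointly independent $\mathsf{CN}(0,1)$ under the paper's channel model --- a point worth stating explicitly, as you do, since the lemma's wording only asserts independence within each block). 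Your three-regime verification of $\min\{N_i+N_j,M_i\}-\min\{N_j,M_i\}=\min\{N_i,(M_i-N_j)^+\}$ is complete and correct, and the resulting error term remains $o(\log P)$ since it is a difference of two $o(\log P)$ quantities holding on a common probability-one event. The push-through-identity alternative you sketch at the end is also viable but adds nothing; the Schur-complement route is the cleanest and meshes directly with how the same expression arises in the proof of Lemma~\ref{lemma:bound_log_det} (see Claim~\ref{claim:ent_1}), where it is literally generated as a conditional-entropy difference of a jointly Gaussian pair.
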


\section{Proof of Lemma~\ref{lemma:MIMO_dof_bound}}\label{app:00}
In this Appendix, we first provide some preliminary information theoretic results.
Subsequently, some upper bounds for the sum achievable rate of two--user multiple antenna interference channel with backhaul cooperation, is presented in Lemma~\ref{lemma:entropy_bound} and Lemma~\ref{lemma:bound_log_det}.
Finally, we provide a proof for Lemma~\ref{lemma:MIMO_dof_bound}.
Note that Lemma~\ref{lemma:MIMO_dof_bound}, works both for receivers cooperation and transmitters cooperation. 

\begin{definition}
For a two--user interference channel, we define $\vm{s}_i(t)$ as the random variable corresponding to the undesired part of the received signal by receiver $j$ at time $t$, i.e., 
\begin{align}\label{eq:s}
\vm{s}_i(t) \triangleq \vm{H}_{ji} \vm{x}_i(t) + \vm{z}_j(t),
\end{align}
for $i \in \{1,2\}$ and $j\neq i$.
\end{definition}

\begin{definition}
We define $\vm{Q}_{ij}(t)$ as the cross correlation between the signals sent from transmitter $i$ and transmitter $j$ at time $t$, i.e.,
\begin{align*}
\vm{Q}_{ij}(t) \triangleq \mathbb{E}\{ \vm{x}_i(t) \vm{x}_j(t)^\dagger\}.
\end{align*}
We also define $\bar{\vm{Q}}_{ij}$ as the summation of $\vm{Q}_{ij}(t)$ over $t$, i.e.,
\begin{align}\label{eq:bar_Q_ij}
\bar{\vm{Q}}_{ij} \triangleq \frac{1}{n} \sum_{t=1}^n \vm{Q}_{ij}(t)
\end{align}
Note that, for the case of receivers cooperation, $\vm{Q}_{ij}(t) = 0$ for all $i\neq j$, since transmitters cannot cooperate.
\end{definition}

\begin{definition}
We define $\vm{Q}_t$ as
\begin{align}\label{eq:def_Q}
\vm{Q}(t)  = \left[\begin{array}{cc}
								\vm{Q}_{11}(t) & \vm{Q}_{12}(t)\\
								\vm{Q}_{21}(t) & \vm{Q}_{22}(t)
							\end{array} \right],
\end{align}
and $\bar{\vm{Q}}$ as the summation of $\vm{Q}(t)$ over $t$, i.e.,
\begin{align}\label{eq:bar_Q}
\bar{\vm{Q}} \triangleq \frac{1}{n} \sum_{t=1}^n \vm{Q}(t) = \left[\begin{array}{cc}
								\bar{\vm{Q}}_{11} & \bar{\vm{Q}}_{12}\\
								\bar{\vm{Q}}_{21} & \bar{\vm{Q}}_{22}
							\end{array} \right].
\end{align} 
\end{definition}

\begin{lemma}{(\cite{shang2010capacity}, Lemma 2 and \cite{ashraphijuo2014capacity}, Lemma 8)} \label{lemma:guassian}
Let $\vm{x}$ and $\vm{y}$ be two random vectors, and $\vm{x}^G$ and $\vm{y}^G$ be Gaussian vectors with covariance matrices satisfying
\begin{align}
Cov\left[\begin{array}{c}
\vm{x}\\
\vm{y}
\end{array}\right] = Cov \left[\begin{array}{c}
\vm{x}^G\\
\vm{y}^G
\end{array}\right],
\end{align}

Then, we have
\begin{align}
h(\ \vm{y}\ ) &\leq h(\ \vm{y}^G\ ),\\
h(\ \vm{y}\ |\ \vm{x}\ ) &\leq h(\ \vm{y}^G\ |\ \vm{x}^G\ ).
\end{align}
\end{lemma}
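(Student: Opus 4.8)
The plan is to derive both inequalities from the single classical fact that a Gaussian vector maximizes differential entropy among all vectors with a prescribed covariance. Since differential entropy is translation invariant, I would first reduce to the zero-mean case. The core fact to recall is: for any zero-mean random vector $\vm{w}$ with covariance $\vm{K}\succ\vm{0}$, $h(\vm{w})\le\tfrac12\log\det(2\pi e\,\vm{K})$, and the right-hand side equals $h(\vm{w}^G)$ for a zero-mean Gaussian $\vm{w}^G$ with covariance $\vm{K}$ (in the circularly-symmetric complex setting of the paper the same argument applies with the constant $2\pi e$ replaced by $\pi e$). This follows from non-negativity of relative entropy, $0\le D(p_{\vm{w}}\,\|\,\phi_{\vm{K}})=-h(\vm{w})-\mathbb{E}_{p_{\vm{w}}}[\log\phi_{\vm{K}}(\vm{w})]$ with $\phi_{\vm{K}}$ the $\mathcal{N}(\vm{0},\vm{K})$ density, together with the observation that $\log\phi_{\vm{K}}(\vm{w})$ is an affine function of the entries of $\vm{w}\vm{w}^\dagger$, so its expectation depends on $p_{\vm{w}}$ only through the second moments; since those match $\phi_{\vm{K}}$ we get $\mathbb{E}_{p_{\vm{w}}}[\log\phi_{\vm{K}}(\vm{w})]=\mathbb{E}_{\phi_{\vm{K}}}[\log\phi_{\vm{K}}]=-h(\vm{w}^G)$. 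Applying this with $\vm{w}=\vm{y}$ and using $\mathrm{Cov}(\vm{y})=\mathrm{Cov}(\vm{y}^G)$, the diagonal block of the hypothesized equality of joint covariances, gives the first inequality.

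For the conditional inequality I would pass to the innovation of the linear MMSE estimator. Assuming for the moment $\mathrm{Cov}(\vm{x})\succ\vm{0}$, set $\hat{\vm{y}}=\vm{K}_{\vm{y}\vm{x}}\vm{K}_{\vm{x}}^{-1}\vm{x}$ and $\tilde{\vm{y}}=\vm{y}-\hat{\vm{y}}$. Since $\hat{\vm{y}}$ is a deterministic function of $\vm{x}$, $h(\vm{y}\mid\vm{x})=h(\tilde{\vm{y}}\mid\vm{x})\le h(\tilde{\vm{y}})$, the last step being ``conditioning does not increase differential entropy'' (equivalently $I(\tilde{\vm{y}};\vm{x})\ge0$). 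A short computation gives $\mathrm{Cov}(\tilde{\vm{y}})=\vm{K}_{\vm{y}}-\vm{K}_{\vm{y}\vm{x}}\vm{K}_{\vm{x}}^{-1}\vm{K}_{\vm{x}\vm{y}}$, the Schur complement of $\vm{K}_{\vm{x}}$ in $\mathrm{Cov}\big([\vm{x}^\h,\vm{y}^\h]^\h\big)$. The unconditional bound from the first step, applied to $\vm{w}=\tilde{\vm{y}}$, then yields $h(\vm{y}\mid\vm{x})\le\tfrac12\log\det\!\big(2\pi e\,(\vm{K}_{\vm{y}}-\vm{K}_{\vm{y}\vm{x}}\vm{K}_{\vm{x}}^{-1}\vm{K}_{\vm{x}\vm{y}})\big)$.

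It then remains to recognize the right-hand side as $h(\vm{y}^G\mid\vm{x}^G)$. Because $(\vm{x}^G,\vm{y}^G)$ is jointly Gaussian, the conditional law of $\vm{y}^G$ given $\vm{x}^G=\vm{a}$ is Gaussian with covariance independent of $\vm{a}$ equal to $\vm{K}_{\vm{y}^G}-\vm{K}_{\vm{y}^G\vm{x}^G}\vm{K}_{\vm{x}^G}^{-1}\vm{K}_{\vm{x}^G\vm{y}^G}$, so $h(\vm{y}^G\mid\vm{x}^G)=\tfrac12\log\det\!\big(2\pi e\,(\vm{K}_{\vm{y}^G}-\vm{K}_{\vm{y}^G\vm{x}^G}\vm{K}_{\vm{x}^G}^{-1}\vm{K}_{\vm{x}^G\vm{y}^G})\big)$. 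Since the joint covariances of $(\vm{x},\vm{y})$ and $(\vm{x}^G,\vm{y}^G)$ coincide by hypothesis, the two Schur complements are equal, and the last two displays combine to $h(\vm{y}\mid\vm{x})\le h(\vm{y}^G\mid\vm{x}^G)$.

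The only delicate point is the degenerate cases set aside above, and I expect these, rather than the main chain of inequalities, to be where the argument must be stated with care. If $\mathrm{Cov}(\vm{x})$ is singular I would replace $\vm{K}_{\vm{x}}^{-1}$ by the Moore--Penrose pseudoinverse and work on the almost-surely attained range of $\vm{x}$, which leaves $h(\vm{y}\mid\vm{x})$ unchanged and preserves the Schur-complement identity. If the resulting conditional covariance is singular the left side is $-\infty$ and the claim is trivial; otherwise one adds an independent Gaussian perturbation $\varepsilon\vm{n}$ to $\vm{y}$, runs the non-degenerate argument, and sends $\varepsilon\to0$, invoking continuity of differential entropy under such a perturbation.
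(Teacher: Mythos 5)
The paper does not prove this lemma at all: it is imported by citation from the references given in its header, so there is no in-paper argument to compare against. Your proof is correct and is essentially the classical argument those sources use --- the maximum-entropy property of the Gaussian under a covariance constraint (via nonnegativity of relative entropy and the fact that $\mathbb{E}[\log\phi_{\vm{K}}]$ depends only on second moments), followed by the LMMSE innovation $\tilde{\vm{y}}=\vm{y}-\vm{K}_{\vm{y}\vm{x}}\vm{K}_{\vm{x}}^{-1}\vm{x}$, the bound $h(\vm{y}\mid\vm{x})=h(\tilde{\vm{y}}\mid\vm{x})\le h(\tilde{\vm{y}})$, and identification of the Schur complement as the conditional covariance of the jointly Gaussian pair. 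Your handling of the degenerate cases (pseudoinverse, singular conditional covariance giving $-\infty$, and the $\varepsilon$-perturbation) is also sound, and you correctly note the $\pi e$ versus $2\pi e$ constant for the circularly symmetric complex setting used in this paper.
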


\begin{definition}
We define the vectors $\vm{x}_1^G$ and $\vm{x}_2^G$ as Gaussian random vectors with the covariance matrix
\begin{align}
Cov\left[\begin{array}{c}
\vm{x}_1^G\\
\vm{x}_2^G
\end{array}\right] = Cov \left[\begin{array}{c}
\vm{x}_1\\
\vm{x}_2
\end{array}\right],
\end{align}
and correspondingly we define $\vm{s}_i^G$ by replacing $\vm{x}_i$ with $\vm{x}_i^G$ in \eqref{eq:s}.
\end{definition}

\begin{lemma}\label{lemma:concavity}
Let $\vm{x}^G$, $\vm{z}_1$ and $\vm{z}_2$ be independent Gaussian random vectors, where $\vm{Q}$ is the covariance matrix of $\vm{x}^G$.
Then, the conditional entropy
$
h(\ \vm{H}_1\ \vm{x}^G +\ \vm{z}_1\ |\ \vm{H}_2\ \vm{x}^G +\ \vm{z}_2\ ),
$
is concave on $\vm{Q}$, for some deterministic matrices $\vm{H}_1$ and $\vm{H}_2$.
\end{lemma}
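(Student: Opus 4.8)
The plan is to realize the conditional differential entropy as an infimum, over linear estimators, of a family of \emph{unconditional} Gaussian entropies, each of which is manifestly concave in $\vm{Q}$; concavity of the infimum then finishes the argument. This detour is what makes the proof painless: the conditional covariance of $\vm{y}_1$ given $\vm{y}_2$ is a Schur complement $\vm{K}_{11}-\vm{K}_{12}\vm{K}_{22}^{-1}\vm{K}_{21}$, a \emph{rational} (not affine) matrix function of $\vm{Q}$, so concavity of its $\log\det$ is not visible directly; the linear‑estimator representation circumvents this entirely.

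\textbf{Step 1: an infimum representation.} Write $\vm{y}_1 = \vm{H}_1 \vm{x}^G + \vm{z}_1$ and $\vm{y}_2 = \vm{H}_2 \vm{x}^G + \vm{z}_2$. Since $\vm{x}^G$, $\vm{z}_1$, $\vm{z}_2$ are independent and (zero--mean) Gaussian, the pair $(\vm{y}_1, \vm{y}_2)$ is jointly Gaussian. For any deterministic matrix $\vm{A}$ of compatible dimensions, differential entropy is translation invariant, so $h(\vm{y}_1 \mid \vm{y}_2) = h(\vm{y}_1 - \vm{A}\vm{y}_2 \mid \vm{y}_2)$, and since conditioning cannot increase entropy, $h(\vm{y}_1 \mid \vm{y}_2) \leq h(\vm{y}_1 - \vm{A}\vm{y}_2)$. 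Taking $\vm{A} = \vm{A}^\star \triangleq \mathbb{E}\{\vm{y}_1 \vm{y}_2^\dagger\}\big(\mathbb{E}\{\vm{y}_2 \vm{y}_2^\dagger\}\big)^{-1}$, the vector $\vm{y}_1 - \vm{A}^\star \vm{y}_2$ is uncorrelated with $\vm{y}_2$ and hence, by joint Gaussianity, independent of $\vm{y}_2$, so equality holds for $\vm{A}^\star$. Therefore
\[
h(\vm{y}_1 \mid \vm{y}_2) \;=\; \inf_{\vm{A}} \; h(\vm{y}_1 - \vm{A}\vm{y}_2).
\]

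\textbf{Step 2: concavity for each fixed $\vm{A}$.} Fix $\vm{A}$. Then $\vm{y}_1 - \vm{A}\vm{y}_2 = (\vm{H}_1 - \vm{A}\vm{H}_2)\vm{x}^G + (\vm{z}_1 - \vm{A}\vm{z}_2)$ is a zero--mean Gaussian vector whose covariance $\vm{K}_{\vm{A}}(\vm{Q})$ equals $(\vm{H}_1 - \vm{A}\vm{H}_2)\,\vm{Q}\,(\vm{H}_1 - \vm{A}\vm{H}_2)^\dagger$ plus the fixed positive--definite covariance of $\vm{z}_1 - \vm{A}\vm{z}_2$; this is an affine function of $\vm{Q}$ taking values in the positive--definite cone. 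Consequently $h(\vm{y}_1 - \vm{A}\vm{y}_2) = \log\det\!\big(\pi e\,\vm{K}_{\vm{A}}(\vm{Q})\big)$, and because $\vm{X} \mapsto \log\det\vm{X}$ is concave on the cone of positive--definite matrices, the map $\vm{Q} \mapsto h(\vm{y}_1 - \vm{A}\vm{y}_2)$ is concave, being the composition of a concave function with an affine map.

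\textbf{Step 3: conclusion.} By Step~1, $\vm{Q}\mapsto h(\vm{H}_1 \vm{x}^G + \vm{z}_1 \mid \vm{H}_2 \vm{x}^G + \vm{z}_2)$ is the pointwise infimum over $\vm{A}$ of the concave functions produced in Step~2, and an infimum of concave functions is concave; this is exactly the claim. I expect the only point needing care to be Step~1 --- stating the orthogonality/Wiener--filter argument cleanly for circularly--symmetric complex Gaussian vectors, and checking that $\vm{K}_{\vm{A}}(\vm{Q})$ remains nonsingular (it does, the noise being identity--covariance in the model) so that all the entropies above are finite. Steps~2 and~3 are then standard facts of convex analysis.
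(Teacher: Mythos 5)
Your proof is correct, and it takes a genuinely different route from the paper's. The paper argues via time--sharing: it introduces a Bernoulli variable $T$ selecting between covariances $\vm{P}_0$ and $\vm{P}_1$, notes that the conditional entropy given $T$ is the convex combination $(1-p)f(\vm{P}_0)+pf(\vm{P}_1)$, and invokes ``conditioning reduces entropy'' to bound this by $f\big((1-p)\vm{P}_0+p\vm{P}_1\big)$. Strictly speaking that last identification needs one more ingredient the paper leaves implicit: the $T$--marginalized input is a Gaussian \emph{mixture}, not a Gaussian with the mixed covariance, so one must also appeal to the fact that the Gaussian maximizes conditional entropy for a given joint covariance (the paper's Lemma on Gaussian extremality) to land on $f$ evaluated at the mixed covariance. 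Your linear--estimator representation $h(\vm{y}_1\mid\vm{y}_2)=\inf_{\vm{A}}h(\vm{y}_1-\vm{A}\vm{y}_2)$ sidesteps this entirely: every object in sight stays Gaussian, each $\vm{Q}\mapsto\log\det\big(\pi e\,\vm{K}_{\vm{A}}(\vm{Q})\big)$ is concave as $\log\det$ composed with an affine map, and the pointwise infimum of concave functions is concave (the fact that the minimizing $\vm{A}^\star$ depends on $\vm{Q}$ is harmless for an infimum representation). What the paper's route buys is brevity and no need to verify attainment of an infimum or nonsingularity of $\mathbb{E}\{\vm{y}_2\vm{y}_2^\dagger\}$; what yours buys is a self--contained argument in which the source of concavity (the Schur complement being an inf of affine--in--$\vm{Q}$ covariances) is structurally transparent, and which does not lean on the Gaussian--extremality lemma. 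Both are standard and both are valid here, since the identity noise covariance guarantees $\vm{K}_{\vm{A}}(\vm{Q})\succ\vm{0}$ and hence finiteness of all entropies involved.
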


\begin{proof}

Let $T$ be a Bernoulli random variable with parameter $p$ and $\vm{Q} = (1-p)\ \vm{P}_0 + p\ \vm{P}_1$.
We define the function $f(\vm{Q}) = h(\ \vm{H}_1\ \vm{x}^G + \vm{z}_1\ |\ \vm{H}_2\ \vm{x}^G + \vm{z}_2\ )$.
Assume that conditioning on $T$ we have
\begin{align*}
\vm{Q} = \left\lbrace \begin{array}{c c}
\vm{P}_0, & \text{if}\ T=0,\\
\vm{P}_1, & \text{if}\ T=1,
\end{array}\right.
\end{align*}
and therefore,
\begin{align*}
h(\vm{H}_1\ \vm{x}^G &+ \vm{z}_1\ | \vm{H}_2\ \vm{x}^G +\vm{z}_2 , T\ )\\
&=(1-p)\ h(\vm{H}_1\ \vm{x}^G + \vm{z}_1\ | \vm{H}_2\ \vm{x}^G +\vm{z}_2 , T=0 ) + 
p\ h(\vm{H}_1\ \vm{x}^G + \vm{z}_1\ | \vm{H}_2\ \vm{x}^G +\vm{z}_2 , T=1 )\\
&= (1-p)\ f(\vm{P}_0) + p\ f(\vm{P}_1).
\end{align*}
On the other hand, since conditioning reduces the entropy, we have,
\begin{align*}
h(\vm{H}_1\ \vm{x}^G + \vm{z}_1 | \vm{H}_2\ \vm{x}^G +\vm{z}_2 , T ) \leq h(\vm{H}_1\ \vm{x}^G + \vm{z}_1 | \vm{H}_2\ \vm{x}^G +\vm{z}_2),
\end{align*}
or
\begin{align*}
(1-p)\ f(\vm{P}_1) + p\ f(\vm{P}_2) \leq f\big((1-p)\ \vm{P}_0\ + p\ \vm{P}_1\big)
\end{align*}
which is the definition of a concave function and the proof is complete.

\end{proof}

As a direct consequence of Lemma~\ref{lemma:concavity}, by setting $\vm{H}_1 = \vm{I}$ and $\vm{H}_2 = 0$ we have the following corollary.
\begin{corollary}\label{corollary:concavity}
Let $\vm{x}^G$ and $\vm{z}$ be independent Gaussian random vectors, where $\vm{Q}$ is the covariance matrix of $\vm{x}^G$.
Then, the entropy $h(\ \vm{x}^G\ +\ \vm{z}\ )$ is concave on $\vm{Q}$.
\end{corollary}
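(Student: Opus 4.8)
The plan is to read off the statement as the $\vm{H}_1 = \vm{I}$, $\vm{H}_2 = 0$ special case of Lemma~\ref{lemma:concavity}. Concretely, I would first set up the instantiation: take the Gaussian vector $\vm{x}^G$ with covariance $\vm{Q}$ as in the corollary, let $\vm{z}_1$ be the noise vector $\vm{z}$ appearing in the statement, and introduce an auxiliary Gaussian vector $\vm{z}_2$ chosen to be independent of the pair $(\vm{x}^G, \vm{z})$; its covariance is irrelevant, so any fixed choice will do. Setting $\vm{H}_1 = \vm{I}$ and $\vm{H}_2 = 0$, the conditional entropy $h(\vm{H}_1 \vm{x}^G + \vm{z}_1 \,|\, \vm{H}_2 \vm{x}^G + \vm{z}_2)$ that Lemma~\ref{lemma:concavity} treats becomes $h(\vm{x}^G + \vm{z} \,|\, \vm{z}_2)$, and the lemma asserts that this is a concave function of $\vm{Q}$ on the set of admissible covariance matrices.

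The second and last step is to discharge the spurious conditioning. Because $\vm{z}_2$ is independent of $(\vm{x}^G, \vm{z})$, we have $h(\vm{x}^G + \vm{z} \,|\, \vm{z}_2) = h(\vm{x}^G + \vm{z})$, so the concavity just obtained is exactly the concavity of $\vm{Q} \mapsto h(\vm{x}^G + \vm{z})$, which is the assertion of the corollary.

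There is essentially no obstacle here; the only two points I would make explicit are that (i) the domain of the map, namely the set of valid positive semidefinite covariance matrices compatible with the power constraint, is convex, so that ``concave on $\vm{Q}$'' is a meaningful assertion, and (ii) dropping $\vm{z}_2$ from the conditioning is licensed precisely by its independence from $(\vm{x}^G, \vm{z})$. If one prefers a self-contained derivation that does not route through Lemma~\ref{lemma:concavity}, the same template used to prove that lemma applies verbatim: introduce a Bernoulli time-sharing variable $T$ with $\vm{Q} = (1-p)\vm{P}_0 + p\vm{P}_1$, note that $h(\vm{x}^G + \vm{z} \,|\, T) = (1-p)\,h(\vm{x}^G + \vm{z} \,|\, T=0) + p\,h(\vm{x}^G + \vm{z} \,|\, T=1)$, and apply ``conditioning reduces entropy'' to obtain $h(\vm{x}^G + \vm{z} \,|\, T) \le h(\vm{x}^G + \vm{z})$, which is exactly the concavity inequality.
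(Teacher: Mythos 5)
Your proposal is correct and matches the paper's own derivation, which obtains the corollary precisely by setting $\vm{H}_1 = \vm{I}$ and $\vm{H}_2 = \vm{0}$ in Lemma~\ref{lemma:concavity}. Your explicit step of discharging the now-trivial conditioning on $\vm{z}_2$ via independence is a detail the paper leaves implicit, but it is the same argument.
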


In the following lemma, we give the first set of bounds on the sum achievable rate of a two--user interference channel with backhaul cooperation, with respect to the entropy of each sample of the encoded message.

\begin{lemma} \label{lemma:entropy_bound}
For a two user interference channel with backhaul cooperation, we have,
\begin{align}
R_1 + R_2 \leq \frac{1}{n}  \sum_{t=1}^n \bigg(\ h \big(\ \vm{H}_{11} \vm{x}_1(t) + \vm{z}_1(t)\ |\ \vm{s}_1(t)\ \big) +h \big(\ \vm{y}_2(t)\ \big) - h\big(\ \vm{z}_1(t) , \vm{z}_2(t)\ \big) + R_B^{[1,2]}\ \bigg) \label{eq:ent_bound_1}  \\
R_1 + R_2 \leq \frac{1}{n}  \sum_{t=1}^n \bigg(\ h \big(\ \vm{H}_{22} \vm{x}_2(t) + \vm{z}_2(t)\ |\ \vm{s}_2(t)\ \big) +h \big(\ \vm{y}_1(t)\ \big) - h\big(\ \vm{z}_1(t) , \vm{z}_2 (t)\ \big) + R_B^{[2,1]}\ \bigg) \label{eq:ent_bound_2}
\end{align}
\end{lemma}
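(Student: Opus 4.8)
The plan is to bound $n(R_1+R_2)$ from above via Fano's inequality and then carefully combine the mutual information terms of the two users, using the backhaul message from receiver~$1$ to receiver~$2$ as a genie gift so that receiver~$2$ can effectively decode both messages. Concretely, I would first give receiver~$2$ all its channel observations $[\vm{y}_2(\tau)]_{\tau=1}^n$ together with all backhaul messages it actually receives, $M_2^{[n]}$; by Fano, $nR_2 \le I(W_2;\vm{y}_2^n,M_2^{[n]}) + n\epsilon_n$. For user~$1$, I would give a genie the side information $\vm{s}_1^n$ (the interference-plus-noise seen at receiver~$2$ from transmitter~$1$, defined in \eqref{eq:s}) so that $nR_1 \le I(W_1;\vm{y}_1^n,\vm{s}_1^n\ \text{or suitable signals}) + n\epsilon_n$ — more precisely I expect to hand receiver~$1$ its own output plus a degraded version that lets the $W_1$-term telescope against $W_2$.

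The key algebraic manoeuvre is the standard ``interference channel genie'' telescoping: after expanding the mutual informations as differences of entropies, the term $h(\vm{y}_2^n\mid W_2)$ should reduce — once $W_2$ is known, $\vm{x}_2^n$ is essentially determined (in the transmitters-cooperation case up to the dependency handled in Lemma~\ref{lemma:MIMO_dof_bound}'s proof and Remark~\ref{remark:two-user-constraint}) — to the entropy of $\vm{s}_1^n$ plus noise, which then cancels with a matching term coming from the $W_1$ bound. What survives on the right-hand side is $h(\vm{H}_{11}\vm{x}_1^n + \vm{z}_1^n \mid \vm{s}_1^n)$ from user~$1$, $h(\vm{y}_2^n)$ from user~$2$, a $-h(\vm{z}_1^n,\vm{z}_2^n)$ from the noise entropies, and the cost $H(M_2^{[n]}\text{ restricted to the }1\!\to\!2\text{ link}) = nR_B^{[1,2]}$ of the backhaul message receiver~$2$ needs from receiver~$1$ in order to act as a joint decoder. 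Dividing by $n$ and using the chain rule to write the $n$-letter entropies as sums of per-time-slot entropies (legitimate since the channel is memoryless and each per-slot entropy is maximised appropriately — here I would also invoke that $\frac1n\sum_t h(\cdot(t)) \ge \frac1n h(\cdot^n)$ only after I have an $n$-letter bound, so I must be careful to single-letterize in the correct direction) yields \eqref{eq:ent_bound_1}. Inequality \eqref{eq:ent_bound_2} follows by the symmetric argument with the roles of receivers $1$ and $2$ swapped, using the $2\!\to\!1$ backhaul link.

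The main obstacle I anticipate is making the genie argument rigorous in the \emph{transmitters-cooperation} scenario: there $\vm{x}_1^n$ and $\vm{x}_2^n$ are statistically dependent (the cooperation phase precedes transmission), so the step ``given $W_2$, the term $h(\vm{y}_2^n\mid W_2)$ collapses to an interference-only entropy'' is not immediate — one must show the residual dependency contributes only a $o(\log P)$, i.e.\ DoF-negligible, amount, exactly the difficulty flagged in the remark after Lemma~\ref{lemma:MIMO_dof_bound}. A secondary technical point is handling the average (rather than peak) power constraint \eqref{eq:ave_pow_const}, which means the per-slot covariances $\vm{Q}(t)$ vary with $t$; this is why the bound is stated as a time-average $\frac1n\sum_t$ of per-slot entropies rather than in single-letter form, and the later steps (Lemma~\ref{lemma:bound_log_det} and the concavity Corollary~\ref{corollary:concavity}) will be needed to convert this average into the closed-form DoF expressions of Lemma~\ref{lemma:MIMO_dof_bound}.
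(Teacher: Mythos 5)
Your outline for the receivers-cooperation case is essentially the paper's argument: Fano applied to $I(\vm{x}_1^n;\vm{y}_1^n,M_1^{[n]})+I(\vm{x}_2^n;\vm{y}_2^n,M_2^{[n]})$, a genie handing $(\vm{x}_2^n,\vm{y}_2^n)$ to receiver one, the cancellation of the $-h(\vm{s}_1^n)$ term (coming from $h(\vm{y}_2^n\mid\vm{x}_2^n)=h(\vm{s}_1^n)$, which uses the independence of the transmit signals) against the joint entropy $h(\vm{H}_{11}\vm{x}_1^n+\vm{z}_1^n,\vm{s}_1^n)$ to produce the conditional entropy, the bound $I(\vm{x}_2^n;M_2^{[n]}\mid\vm{y}_2^n)\le H(M_2^{[n]})=nR_B^{[1,2]}$ for the backhaul term, and single-letterization in the upper-bounding direction. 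That half is in order.

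The gap is in the transmitters-cooperation case. Lemma~\ref{lemma:entropy_bound} is an exact finite-$n$, finite-$P$ entropy inequality, so you cannot discharge the dependency between $\vm{x}_1^n$ and $\vm{x}_2^n$ by arguing that it contributes only an $o(\log P)$, DoF-negligible amount: no such slack exists in the statement, and you have not indicated how the residual term would be bounded even at that level. The paper's resolution is to make the collapse exact by enlarging the genie: receiver one is given $(\vm{s}_1^n, M_2^{[n]}, W_2)$, where $M_2^{[n]}$ is the full set of backhaul messages received at transmitter two. Since $\vm{x}_2^n=f_2^n(W_2,M_2^{[n]})$, conditioning on $(W_2,M_2^{[n]})$ determines $\vm{x}_2^n$ exactly, so $h(\vm{y}_2^n\mid W_2,M_2^{[n]})=h(\vm{s}_1^n\mid W_2,M_2^{[n]})$ with no error term; moreover, because in the two-user channel every backhaul message is a deterministic function of $(W_1,W_2)$, one gets $h(\vm{y}_1^n,\vm{s}_1^n\mid W_1,W_2,M_2^{[n]})=h(\vm{z}_1^n,\vm{z}_2^n)$ and $H(M_2^{[n]}\mid W_1,W_2)=0$, which is exactly where the additive $H(M_2^{[n]}\mid W_2)\le H(M_2^{[n]})=nR_B^{[1,2]}$ term enters. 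Without this augmented genie the telescoping you describe does not close, so the DoF-negligibility step should be replaced by this exact conditioning argument.
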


\begin{proof} 
We prove Lemma~\ref{lemma:entropy_bound} separately for receiver and transmitters cooperation scenarios.

\textbf{Receivers cooperation.}
Assume that a genie gives information $\vm{x}_2^n$ and $\vm{y}_2^n$ to receiver one.
Now we make use of Fano and processing inequalities in addition to the fact that the backhaul messages from receiver one to receiver two, are solely a function of $\vm{y}_1^n$ and $\vm{y}_2^n$ to complete the proof.
Recall that, according to \eqref{eq:previous_messages}, for a two--user interference channel with backhaul cooperation $M_{i}^{[n]}$ indicates the set of all backhaul messages from receiver $j$ to receiver $i$.

For the rate pair $(R_1,R_2)$ to be achievable we have,
\begin{align*}
n(R_1 + R_2 - \epsilon_n) &\leq I(\vm{x}_1^n ; \vm{y}_1^n , M_{1}^{[n]}) + I(\vm{x}_2^n ; \vm{y}_2^n , M_2^{[n]})\\
&\overset{(a)}{\leq} I(\vm{x}_1^n ; \vm{y}_1^n , M_1^{[n]} , \vm{y}_2^n , \vm{x}_2^n) + I(\vm{x}_2^n ; \vm{y}_2^n) +I(\vm{x}_2^n;M_2^{[n]} | \vm{y}_{2}^n)\\
&\overset{(b)}{\leq} I(\vm{x}_1^n ; \vm{y}_1^n , M_1^{[n]} , \vm{y}_2^n | \vm{x}_2^n) + h(\vm{y}_2^n) - h(\vm{s}_1^n) + H(M_2^{[n]})\\
&\overset{(c)}{=} I(\vm{x}_1^n ; \vm{y}_1^n  , \vm{y}_2^n | \vm{x}_2^n) + h(\vm{y}_2^n) - h(\vm{s}_1^n) + H(M_2^{[n]})\\
&= h(\vm{H}_{11}\vm{x}_1^n + \vm{z}_1^n,\vm{s}_1^n)-h(\vm{z}_1^n, \vm{z}_2^n)+h(\vm{y}_2^n)-h(\vm{s}_1^n)+H(M_2^{[n]})\\
&= h(\vm{H}_{11}\vm{x}_1^n + \vm{z}_1^n|\vm{s}_1^n)-h(\vm{z}_1^n , \vm{z}_2^n)+h(\vm{y}_2^n)+H(M_2^{[n]})\\
&\overset{(d)}{\leq} \sum_{t=1}^n \bigg(\ h\big(\ \vm{H}_{11}\vm{x}_1(t) + \vm{z}_1(t)\ |\ \vm{s}_1^n\ \big) +h \big(\ \vm{y}_2(t)\ \big) - h\big(\ \vm{z}_1(t) , \vm{z}_2(t)\ \big) \bigg) + H \big(\ M_2^{[n]}\ ) \\
&\overset{(e)}{=} \sum_{t=1}^n \bigg(\ h\big(\ \vm{H}_{11}\vm{x}_1(t) + \vm{z}_1(t)\ |\ \vm{s}_1(t)\ \big) +h \big(\ \vm{y}_2(t)\ \big) - h\big(\ \vm{z}_1(t) , \vm{z}_2(t)\ \big) + R_B^{[1,2]} \bigg),
\end{align*}
where $\epsilon_n\rightarrow 0$ as $n\rightarrow \infty$.
(a) is due to the genie giving side information $(\vm{x}_2^n,\vm{y}_2^n)$ to receiver one and the chain rule.
(b) is due to the chain rule, the independence of $\vm{x}_1^n$ and $\vm{x}_2^n$, and the fact that $I(x,y|z)\leq H(y)$.
(c) is also due to the chain rule and the fact that $M_1^{[n]}$ is solely a function of $(\vm{y}_1^n,\vm{y}_2^n)$.
(d) is due to the fact that the independence maximizes the mutual information and that the noise is i.i.d., and finally (e) is due to the fact that the channel is memoryless and the definition of $R_B^{[1,2]}$ at \eqref{eq:backhaul_rate}.
Hence and similarly (by giving side information $\vm{x}_1^n$ and $\vm{y}_1^n$ to receiver two) bounds \eqref{eq:ent_bound_1} and \eqref{eq:ent_bound_2} are shown.

\textbf{Transmitters cooperation.} Assume that a genie gives information $\vm{s}_1^n$, $M_2^{[n]}$ and $W_2$ to receiver one, and $M_2^{[n]}$ is available at transmitter two.
We again make use of Fano and processing inequalities in addition to the fact that the backhaul messages $M_2^{[n]}$ is solely a function of $W_1$ and $W_2$ to complete the proof.

For the rate pair $(R_1,R_2)$ to be achievable we have,
\begin{align*}
n(R_1 + R_2 - \epsilon_n) &\leq I(W_1 ; \vm{y}_1^n) + I(W_2 ; \vm{y}_2^n)\\
&\overset{(a)}{\leq} I ( W_1 ; \vm{y}_1^n,\vm{s}_1^n,M_2^{[n]},W_2) + I (W_2 , M_2^{[n]} ; \vm{y}_2 ^ n) \\
&\overset{(b)}{=} I(W_1;\vm{y}_1^n,\vm{s}_1^n,M_2^{[n]} | W_2) + I (W_2 , M_2^{[n]} ; \vm{y}_2^n) \\
&= I (W_1; M_2^{[n]} | W_2) + I(W_1 ; \vm{y}_1^n , \vm{s}_1^n | M_2^{[n]} ,W_2)+ h(\vm{y}_2^n) - h(\vm{y}_2^n| W_2,M_2^{[n]})\\
&=H(M_2^{[n]}|W_2)-H(M_2^{[n]}|W_1,W_2) + h(\vm{y}_1^n,\vm{s}_1^n|M_2^{[n]},W_2)-h(\vm{y}_1^n,\vm{s}_1^n|M_2^{[n]},W_2,W_1)\\
&\quad\quad + h(\vm{y}_2^n)- h(\vm{s}_1^n|W_2,M_2^{[n]})\\
&\overset{(c)}{=} H(M_2^{[n]}|W_2)+h(\vm{y}_1^n,\vm{s}_1^n|M_2^{[n]},W_2)-h(\vm{z}_1^n,\vm{z}_2^n)+h(\vm{y}_2^n)-h(\vm{s}_1^n|M_2^{[n]}, W_2)\\
 &= h(\vm{y}_1^n|\vm{s}_1^n,M_2^{[n]},W_2) - h(\vm{z}_1^n , \vm{z}_2^n)+ h(\vm{y}_2^n)+H(M_2^{[n]}|W_2)\\
&\leq h(\vm{H}_{11}\vm{x}_1^n+\vm{z}_1^n |\vm{s}_1^n)-h(\vm{z}_1^n,\vm{z}_2^n)+h(\vm{y}_2^n)  + H(M_2^{[n]})
\end{align*}
where $\epsilon_n\rightarrow 0$ as $n\rightarrow \infty$.
(a) is due to the genie giving side information $(\vm{s}_1^n,M_2^{[n]},W_2)$ to receiver one and the availability of $M_2^{[n]}$ at transmitter two.
(b) is due to the independence of the messages $W_1$ and $W_2$.
(c) is also due to the fact that $M_2^{[n]}$ is a function of $(W_1,W_2)$ only.
The rest of the proof, follows the same line as the proof for the case of receivers cooperation.
Hence, the bounds \eqref{eq:ent_bound_1} and \eqref{eq:ent_bound_2} are obtained.

\end{proof}

Using the power constraint in \eqref{eq:ave_pow_const}, we have the following lemma.
\begin{lemma}
For the matrices $\bar{\vm{Q}}_{ii}$ and $\bar{\vm{Q}}$ defined at \eqref{eq:bar_Q_ij} and \eqref{eq:bar_Q}, respectively, we have
\begin{align}\label{eq:Q_11}
P \vm{I} - \bar{\vm{Q}}_{ii} \succeq \vm{0},
\end{align}
and
\begin{align}\label{eq:Q}
2P \vm{I} - \bar{\vm{Q}} \succeq \vm{0},
\end{align}
\end{lemma}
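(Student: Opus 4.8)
The plan is to reduce both inequalities to Lemma~\ref{lemma:psd_trace} by first checking that $\bar{\vm{Q}}_{ii}$ and $\bar{\vm{Q}}$ are Hermitian positive semi--definite, and then bounding their traces by the average power constraint \eqref{eq:ave_pow_const}. Since Lemma~\ref{lemma:psd_trace} says that any Hermitian p.s.d.\ matrix $\vm{Q}$ satisfies $Tr(\vm{Q})\vm{I}-\vm{Q}\succeq\vm{0}$, the statement will follow once we know the relevant matrices are p.s.d.\ with trace at most $P$ (respectively $2P$).

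First I would observe that for each $t$ the matrix $\vm{Q}_{ii}(t)=\mathbb{E}\{\vm{x}_i(t)\vm{x}_i(t)^\dagger\}$ is a correlation matrix, hence Hermitian positive semi--definite; averaging over $t$ in \eqref{eq:bar_Q_ij} preserves this, so $\bar{\vm{Q}}_{ii}\succeq\vm{0}$. Similarly, writing the stacked signal $\vm{x}(t)=[\vm{x}_1(t)^\h,\vm{x}_2(t)^\h]^\h$, the matrix $\vm{Q}(t)$ defined in \eqref{eq:def_Q} equals $\mathbb{E}\{\vm{x}(t)\vm{x}(t)^\dagger\}\succeq\vm{0}$, and therefore $\bar{\vm{Q}}=\tfrac{1}{n}\sum_{t=1}^n\vm{Q}(t)\succeq\vm{0}$ by \eqref{eq:bar_Q}. (In the receivers--cooperation case the off--diagonal blocks of $\bar{\vm{Q}}$ vanish, but this is irrelevant to the argument.)

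Next I would compute the traces. Taking the expectation of the per--codeword constraint \eqref{eq:ave_pow_const} and using $Tr(\vm{x}_i(t)\vm{x}_i(t)^\dagger)=\|\vm{x}_i(t)\|^2$ gives
\begin{align*}
Tr(\bar{\vm{Q}}_{ii})=\frac{1}{n}\sum_{t=1}^n \mathbb{E}\big\{Tr(\vm{x}_i(t)\vm{x}_i(t)^\dagger)\big\}\leq P,
\end{align*}
and, since the trace of a block matrix is the sum of the traces of its diagonal blocks,
\begin{align*}
Tr(\bar{\vm{Q}})=Tr(\bar{\vm{Q}}_{11})+Tr(\bar{\vm{Q}}_{22})\leq 2P.
\end{align*}

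Finally I would invoke Lemma~\ref{lemma:psd_trace}: with $\lambda=Tr(\bar{\vm{Q}}_{ii})\geq 0$ it yields $Tr(\bar{\vm{Q}}_{ii})\,\vm{I}-\bar{\vm{Q}}_{ii}\succeq\vm{0}$, and adding the manifestly positive semi--definite matrix $\big(P-Tr(\bar{\vm{Q}}_{ii})\big)\vm{I}$ gives \eqref{eq:Q_11}; applying the same two steps to $\bar{\vm{Q}}$ with $\lambda=Tr(\bar{\vm{Q}})\leq 2P$ gives \eqref{eq:Q}. There is essentially no obstacle here: the only point that deserves a word of care is that \eqref{eq:ave_pow_const} is an almost--sure (sample--path) constraint, so taking expectations is legitimate and preserves the bound on the traces.
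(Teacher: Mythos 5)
Your proposal is correct and follows essentially the same route as the paper: bound $Tr(\bar{\vm{Q}}_{ii})$ and $Tr(\bar{\vm{Q}})$ by taking expectations of the power constraint \eqref{eq:ave_pow_const}, then invoke Lemma~\ref{lemma:psd_trace}. You are slightly more careful than the paper in two spots --- explicitly verifying that the matrices are Hermitian p.s.d.\ (a hypothesis of Lemma~\ref{lemma:psd_trace} the paper leaves implicit) and spelling out the step from $Tr(\vm{Q})\vm{I}-\vm{Q}\succeq\vm{0}$ to $P\vm{I}-\vm{Q}\succeq\vm{0}$ --- but the argument is the same.
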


\begin{proof}
Focusing on $\bar{\vm{Q}}_{ij}$ we have
\begin{align}\label{eq:Q_total}
\begin{split}
Tr(\bar{\vm{Q}}_{ij})= Tr( \frac{1}{n} \sum_{t=1}^n \vm{Q}_{ij}(t)) = \mathbb{E}\{\frac{1}{n} \sum_{t=1}^n Tr(\vm{x}_i(t) \vm{x}_j^\dagger(t)) \}.
\end{split}
\end{align}
Combining \eqref{eq:ave_pow_const} and \eqref{eq:Q_total} results in $Tr(\bar{\vm{Q}}_{11})\leq P$ and $Tr(\bar{\vm{Q}}) = Tr(\bar{\vm{Q}}_{11}) + Tr(\bar{\vm{Q}}_{22})\leq 2P$,
which in addition to the result of Lemma~\ref{lemma:psd_trace} complete the proof.

\end{proof}

In the following lemma, we improve the bounds in \eqref{eq:ent_bound_1} and \eqref{eq:ent_bound_2} and rewrite them in algebraic form.
These new bounds are with respect to the power constraint and channel realizations.

\begin{lemma}\label{lemma:bound_log_det}
For a two--user multiple antenna interference channel with backhaul cooperation, we have
\begin{align}
R_1+R_2 &\leq \log\det \bigg(\ \vm{I}_{N_1} + P\ \vm{H}_{11}\vm{H}_{11}^\dagger - P^2\ \vm{H}_{11}\vm{H}_{21}^\dagger \big(\ \vm{I}_{N_2} + P\ \vm{H}_{21}\vm{H}_{21}^\dagger\ \big)^{-1} \vm{H}_{21}\vm{H}_{11}^\dagger\ \bigg)\nonumber\\
		&\quad + \log\det \big(\ \vm{I}_{N_2} + 2P\ \vm{H}_{22}\vm{H}_{22}^\dagger + 2P\ \vm{H}_{21}\vm{H}_{21}^\dagger\ \big) + R_B^{[1,2]},\label{eq:rate_final_1} \\ 
R_1+R_2 &\leq \log\det \bigg(\ \vm{I}_{N_2} + P\ \vm{H}_{22}\vm{H}_{22}^\dagger - P^2\ \vm{H}_{22}\vm{H}_{12}^\dagger \big(\ \vm{I}_{N_1} + P\ \vm{H}_{12}\vm{H}_{12}^\dagger\ \big)^{-1} \vm{H}_{12}\vm{H}_{22}^\dagger\ \bigg)\nonumber\\
		&\quad + \log\det \big(\ \vm{I}_{N_1} + 2P\ \vm{H}_{11}\vm{H}_{11}^\dagger + 2P\ \vm{H}_{12}\vm{H}_{12}^\dagger\ \big) + R_B^{[2,1]}.\label{eq:rate_final_2}			    
\end{align}
\end{lemma}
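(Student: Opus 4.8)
The plan is to start from the single-letter bound \eqref{eq:ent_bound_1} of Lemma~\ref{lemma:entropy_bound} and convert each of its three entropy terms into a $\log\det$ expression; \eqref{eq:rate_final_2} then follows by exchanging the indices $1\leftrightarrow 2$ everywhere, starting from \eqref{eq:ent_bound_2}. First I would Gaussianize: applying the extremal inequality of Lemma~\ref{lemma:guassian} separately to $h\big(\vm{H}_{11}\vm{x}_1(t)+\vm{z}_1(t)\,\big|\,\vm{s}_1(t)\big)$ and to $h(\vm{y}_2(t))$, replacing $\vm{x}_1(t),\vm{x}_2(t)$ by jointly Gaussian vectors with the same covariance $\vm{Q}(t)$ of \eqref{eq:def_Q} can only increase both quantities. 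Since $\vm{s}_1(t)=\vm{H}_{21}\vm{x}_1(t)+\vm{z}_2(t)$ and $\vm{H}_{11}\vm{x}_1(t)+\vm{z}_1(t)$ involve only $\vm{x}_1(t)$, the conditional-entropy term depends on $\vm{Q}(t)$ solely through $\vm{Q}_{11}(t)$, whereas $h(\vm{y}_2(t))$ depends on $\vm{Q}(t)$ only through $[\vm{H}_{21}\ \vm{H}_{22}]\,\vm{Q}(t)\,[\vm{H}_{21}\ \vm{H}_{22}]^{\dagger}$. The remaining term is deterministic: $h(\vm{z}_1(t),\vm{z}_2(t))=(N_1+N_2)\log(\pi e)$.

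Next I would carry out the time averaging. Using the concavity of the Gaussian conditional entropy in the input covariance (Lemma~\ref{lemma:concavity}) and the concavity of $h(\vm{x}^G+\vm{z})$ in its covariance (Corollary~\ref{corollary:concavity}), Jensen's inequality gives $\frac1n\sum_t h(\cdot\,;\vm{Q}_{11}(t))\le h(\cdot\,;\bar{\vm{Q}}_{11})$ and $\frac1n\sum_t h(\vm{y}_2^G(t))\le h(\vm{y}_2^G;\bar{\vm{Q}})$, with $\bar{\vm{Q}}_{11}$ and $\bar{\vm{Q}}$ as in \eqref{eq:bar_Q_ij}--\eqref{eq:bar_Q}. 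The key algebraic identity is that the conditional covariance of a Gaussian $\vm{H}_{11}\vm{x}+\vm{z}_1$ given $\vm{H}_{21}\vm{x}+\vm{z}_2$ with $\mathrm{Cov}(\vm{x})=\vm{K}$ equals $\vm{I}_{N_1}+\vm{H}_{11}\,\vm{L}(\vm{K},\vm{H}_{21}^{\dagger})\,\vm{H}_{11}^{\dagger}$, where $\vm{L}$ is the operator of Lemma~\ref{lemma:matrix_function}; hence that conditional entropy equals $\log\det\!\big(\pi e(\vm{I}_{N_1}+\vm{H}_{11}\vm{L}(\vm{K},\vm{H}_{21}^{\dagger})\vm{H}_{11}^{\dagger})\big)$, and $h(\vm{y}_2^G;\bar{\vm{Q}})=\log\det\!\big(\pi e(\vm{I}_{N_2}+[\vm{H}_{21}\ \vm{H}_{22}]\bar{\vm{Q}}[\vm{H}_{21}\ \vm{H}_{22}]^{\dagger})\big)$.

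Then I would invoke the power constraints. From \eqref{eq:Q_11}, $\vm{0}\preceq\bar{\vm{Q}}_{11}\preceq P\vm{I}$, so Lemma~\ref{lemma:matrix_function} gives $\vm{L}(\bar{\vm{Q}}_{11},\vm{H}_{21}^{\dagger})\preceq\vm{L}(P\vm{I},\vm{H}_{21}^{\dagger})$, conjugation by $\vm{H}_{11}$ preserves the order, and since $\vm{L}(P\vm{I},\vm{H}_{21}^{\dagger})=P\vm{I}-P^2\vm{H}_{21}^{\dagger}(\vm{I}_{N_2}+P\vm{H}_{21}\vm{H}_{21}^{\dagger})^{-1}\vm{H}_{21}$, Lemma~\ref{lemma:A_B} yields the first $\log\det$ term of \eqref{eq:rate_final_1}. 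Likewise, from \eqref{eq:Q}, $\bar{\vm{Q}}\preceq 2P\vm{I}$ gives $[\vm{H}_{21}\ \vm{H}_{22}]\bar{\vm{Q}}[\vm{H}_{21}\ \vm{H}_{22}]^{\dagger}\preceq 2P\vm{H}_{21}\vm{H}_{21}^{\dagger}+2P\vm{H}_{22}\vm{H}_{22}^{\dagger}$, and Lemma~\ref{lemma:A_B} bounds $h(\vm{y}_2^G;\bar{\vm{Q}})$ by $\log\det\!\big(\pi e(\vm{I}_{N_2}+2P\vm{H}_{22}\vm{H}_{22}^{\dagger}+2P\vm{H}_{21}\vm{H}_{21}^{\dagger})\big)$. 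Summing the three contributions, the $N_1\log(\pi e)$ and $N_2\log(\pi e)$ coming from the two positive terms exactly cancel the $(N_1+N_2)\log(\pi e)$ from $-h(\vm{z}_1,\vm{z}_2)$, while $\frac1n\sum_t R_B^{[1,2]}=R_B^{[1,2]}$, leaving precisely \eqref{eq:rate_final_1}.

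The step I expect to be the main obstacle is this concavity-plus-monotonicity chain: one must verify that after Gaussianization the per-symbol entropies really are concave in the relevant covariance (so that Jensen legitimately passes to $\bar{\vm{Q}}_{11}$ and $\bar{\vm{Q}}$), and then that the matrix monotonicity of $\vm{L}(\cdot,\vm{S})$ from Lemma~\ref{lemma:matrix_function}, together with congruence-invariance of the positive semidefinite order and the determinant monotonicity of Lemma~\ref{lemma:A_B}, genuinely allows replacing $\bar{\vm{Q}}_{11}$ by $P\vm{I}$ and $\bar{\vm{Q}}$ by $2P\vm{I}$. Identifying the Gaussian conditional covariance with $\vm{I}+\vm{H}_{11}\vm{L}(\vm{K},\vm{H}_{21}^{\dagger})\vm{H}_{11}^{\dagger}$ is the bookkeeping that makes all the pieces line up with the stated bounds.
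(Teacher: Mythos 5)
Your proposal is correct and follows essentially the same route as the paper's proof: Gaussianize via Lemma~\ref{lemma:guassian}, pass to the time-averaged covariances $\bar{\vm{Q}}_{11}$ and $\bar{\vm{Q}}$ by the concavity results (Lemma~\ref{lemma:concavity} and Corollary~\ref{corollary:concavity}), and then replace them by $P\vm{I}$ and $2P\vm{I}$ using Lemmas~\ref{lemma:matrix_function} and~\ref{lemma:A_B} together with \eqref{eq:Q_11} and \eqref{eq:Q}. The only cosmetic difference is that you identify the conditional term directly through the Gaussian conditional covariance $\vm{I}_{N_1}+\vm{H}_{11}\vm{L}(\vm{K},\vm{H}_{21}^{\dagger})\vm{H}_{11}^{\dagger}$, whereas the paper derives the same expression via the chain rule and the block-determinant (Schur complement) identity of Lemma~\ref{lemma:det_block}.
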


\begin{proof}
For any given $t$ we have,
\begin{align}\label{eq:rate_for_t}
\begin{split}
h \big(\ \vm{H}_{11}\vm{x}_1^G(t) &+ \vm{z}_1\ |\ \vm{H}_{21}\vm{x}_1^G(t) + \vm{z}_2\ \big)+ h \big(\ \vm{H}_{21} \vm{x}_1^G(t) + \vm{H}_{22} \vm{x}_2^G(t) + \vm{z}_2\ \big) - h(\ \vm{z}_1,\vm{z}_2\ )\\
&= h\big(\ \vm{H}_{11}\vm{x}_1^G(t)+ \vm{z}_1\ |\ \vm{H}_{21} \vm{x}_1^G(t) + \vm{z}_2\ \big) - h(\ \vm{z}_1\ ) + h\big( \vm{H}_{21} \vm{x}_1^G(t) + \vm{H}_{22} \vm{x}_2^G(t) + \vm{z}_2\ ) - h(\ \vm{z}_2\ )
\end{split}
\end{align}

\begin{claim}\label{claim:ent_1}
For any given $t$
\begin{align}\label{eq:ent_1}
\begin{split}
h\big(\ \vm{H}_{11}\vm{x}_1^G(t)+ \vm{z}_1\ &|\ \vm{H}_{21} \vm{x}_1^G(t) + \vm{z}_2\ \big) - h(\ \vm{z}_1\ )\\
&= \log\det (\vm{I}_{N_1} + \vm{H}_{11} \vm{Q}_{11}(t) \vm{H}_{11}^\dagger - \vm{H}_{11}\vm{Q}_{11}(t) \vm{H}_{21}^\dagger (\vm{I}_{N_2} + \vm{H}_{21}\vm{Q}_{11}(t)\vm{H}_{21}^\dagger)^{-1}\vm{H}_{21}\vm{Q}_{11}(t)\vm{H}_{11}^\dagger).
\end{split}
\end{align}
\end{claim}

\begin{proof}[Proof of Claim~\ref{claim:ent_1}]
In order to prove this claim, we first expand the left hand side of \eqref{eq:ent_1} as
\begin{align}\label{eq:ent_2}
\begin{split}
h\big(\ \vm{H}_{11}\vm{x}_1^G(t)+ \vm{z}_1\ &|\ \vm{H}_{21} \vm{x}_1^G(t) + \vm{z}_2\ \big) - h(\ \vm{z}_1\ )\\
& =  h\big(\ \vm{H}_{11}\vm{x}_1^G(t)+ \vm{z}_1\ ,\ \vm{H}_{21} \vm{x}_1^G(t) + \vm{z}_2\ \big) - h\big(\ \vm{H}_{21} \vm{x}_1^G(t) + \vm{z}_2\ \big) - h(\ \vm{z}_1\ )\\
&= \log\det \left[\begin{array}{ccc}
	\vm{I}_{N_1} + \vm{H}_{11} \vm{Q}_{11}(t) \vm{H}_{11}^\dagger &\hspace{.2in} & \vm{H}_{11} \vm{Q}_{11}(t) \vm{H}_{21}^\dagger\\
	\vm{H}_{21} \vm{Q}_{11}(t) \vm{H}_{11}^\dagger &\hspace{.2in} &\vm{I}_{N_2} + \vm{H}_{21}\vm{Q}_{11}(t)\vm{H}_{21}^\dagger
				  \end{array} \right]\\
&\quad \quad - \log\det (\vm{I}_{N_2} + \vm{H}_{21}\vm{Q}_{11}(t)\vm{H}_{21}^\dagger),
\end{split}
\end{align}
where, using the results from Lemma~\ref{lemma:det_block}, the claim is proved.

\end{proof}

\begin{claim}\label{claim:ent_3}
For any given $t$
\begin{align}\label{eq:ent_3}
\begin{split}
h\big( \vm{H}_{21} \vm{x}_1^G(t) &+ \vm{H}_{22} \vm{x}_2^G(t) + \vm{z}_2\ ) - h(\ \vm{z}_2\ ) =  \log\det \big(\ \vm{I}_{N_2}\ + \vm{H}_2\ \vm{Q}(t)\ \vm{H}_2^\dagger\ \big),
\end{split}
\end{align}
where, $\vm{Q}$ is defined at \eqref{eq:def_Q} and
\begin{align}\label{eq:def_H}
\vm{H}_2  = \left[\begin{array}{cc}
			\vm{H}_{21} & \vm{H}_{22}
			      \end{array} \right].
\end{align} 
\end{claim}

\begin{proof}[Proof of Claim~\ref{claim:ent_3}]
In order to prove this claim, we have
\begin{align}\label{eq:ent_4}
\begin{split}
h\big( \vm{H}_{21} \vm{x}_1^G(t) &+ \vm{H}_{22} \vm{x}_2^G(t) + \vm{z}_2\ \big) - h(\ \vm{z}_2\ )\\ 
&= \log \det \big(\ \vm{I}_{N_2} + \vm{H}_{21}\vm{Q}_{11}(t)\vm{H}_{21}^\dagger + \vm{H}_{22}\vm{Q}_{22}(t) \vm{H}_{22}^\dagger + \vm{H}_{21}\vm{Q}_{12}(t)\vm{H}_{21}^\dagger + \vm{H}_{21}\vm{Q}_{21}(t)\vm{H}_{22}^\dagger\ \big)\\
&= \log\det \big(\ \vm{I}_{N_2} + \left[ \begin{array}{cc}
								\vm{H}_{21} & \vm{H}_{22}
							\end{array}\right]
							\left[\begin{array}{cc}
								\vm{Q}_{11}(t) & \vm{Q}_{12}(t)\\
								\vm{Q}_{21}(t) & \vm{Q}_{22}(t)
							\end{array} \right]
							 \left[ \begin{array}{cc}
								\vm{H}_{21} & \vm{H}_{22}
							\end{array}\right]^\dagger\ \big)\\
&= \log\det \big(\ \vm{I}_{N_2} + \vm{H}_2 \vm{Q}(t) \vm{H}_2^\dagger\ \big),
\end{split}
\end{align}
which, completes the proof.

\end{proof}

\begin{claim}\label{claim:concavity}
Due to the results from Lemma~\ref{lemma:concavity} and Corollary~\ref{corollary:concavity}, the right hand sides of \eqref{eq:ent_1} and \eqref{eq:ent_3} are concave with respect to $\vm{Q}_{11}(t)$ and $\vm{Q}(t)$, respectively.
\end{claim}

Considering \eqref{eq:ent_1}, \eqref{eq:ent_3} and \eqref{eq:rate_for_t}, we have
\begin{align}\label{eq:each_concave}
\begin{split}
\frac{1}{n} \sum_{t=1}^n &\bigg( h \big(\ \vm{H}_{11}\vm{x}_1^G(t) + \vm{z}_1\ |\ \vm{H}_{21}\vm{x}_1^G(t) + \vm{z}_2\ \big)+ h \big(\ \vm{H}_{21} \vm{x}_1^G(t) + \vm{H}_{22} \vm{x}_2^G(t) + \vm{z}_2\ \big) - h(\ \vm{z}_1,\vm{z}_2\ ) \bigg) \\
&= \frac{1}{n}\sum_{t=1}^n \bigg( \log\det (\vm{I}_{N_1} + \vm{H}_{11} \vm{Q}_{11}(t) \vm{H}_{11}^\dagger - \vm{H}_{11}\vm{Q}_{11}(t) \vm{H}_{21}^\dagger (\vm{I}_{N_2} + \vm{H}_{21}\vm{Q}_{11}(t)\vm{H}_{21}^\dagger)^{-1}\vm{H}_{21}\vm{Q}_{11}(t)\vm{H}_{11}^\dagger)\\
&\quad\quad\quad + \log\det \big(\ \vm{I}_{N_2} + \vm{H}_2 \vm{Q}(t) \vm{H}_2^\dagger\ \big) + R_B^{[1,2]} \bigg) \\
& \overset{(a)}{\leq} \log\det (\vm{I}_{N_1} + \vm{H}_{11} \bar{\vm{Q}}_{11} \vm{H}_{11}^\dagger - \vm{H}_{11}\bar{\vm{Q}}_{11}\vm{H}_{21}^\dagger (\vm{I}_{N_2} + \vm{H}_{21}\bar{\vm{Q}}_{11}\vm{H}_{21}^\dagger)^{-1}\vm{H}_{21}\bar{\vm{Q}}_{11}\vm{H}_{11}^\dagger)\big) \\
&\quad + \log\det (\vm{I}_{N_2} + \vm{H}_2 \bar{\vm{Q}} \vm{H}_2^\dagger)+ R_B^{[1,2]},\\
&\overset{(b)}{\leq}  \log\det (\vm{I}_{N_1} + P \vm{H}_{11}\vm{H}_{11}^\dagger - P^2 \vm{H}_{11}\vm{H}_{21}^\dagger (\vm{I}_{N_2} + P \vm{H}_{21}\vm{H}_{21}^\dagger)^{-1}\vm{H}_{21}\vm{H}_{11}^\dagger)\big) \\
&\quad + \log\det (\vm{I}_{N_2} + \vm{H}_2 \bar{\vm{Q}} \vm{H}_2^\dagger)+ R_B^{[1,2]}\\
&\overset{(c)}{\leq}  \log\det (\vm{I}_{N_1} + P \vm{H}_{11}\vm{H}_{11}^\dagger - P^2 \vm{H}_{11}\vm{H}_{21}^\dagger (\vm{I}_{N_2} + P \vm{H}_{21}\vm{H}_{21}^\dagger)^{-1}\vm{H}_{21}\vm{H}_{11}^\dagger)\big) \\
&\quad + \log\det (\vm{I}_{N_2} + 2P \vm{H}_2 \vm{H}_2^\dagger)+ R_B^{[1,2]}\\
&\overset{(d)}{=}  \log\det (\vm{I}_{N_1} + P \vm{H}_{11}\vm{H}_{11}^\dagger - P^2 \vm{H}_{11}\vm{H}_{21}^\dagger (\vm{I}_{N_2} + P \vm{H}_{21}\vm{H}_{21}^\dagger)^{-1}\vm{H}_{21}\vm{H}_{11}^\dagger)\big) \\
&\quad + \log\det (\vm{I}_{N_2} + 2P \vm{H}_{11} \vm{H}_{11}^\dagger + 2P \vm{H}_{22} \vm{H}_{22}^\dagger)+ R_B^{[1,2]}.
\end{split}
\end{align}
.

Here, (a) is due to Claim~\ref{claim:concavity}, (b) is due to Lemma~\ref{lemma:matrix_function}, Lemma~\ref{lemma:A_B} and \eqref{eq:Q_11}.
(c) is also due to Lemma~\ref{lemma:A_B} and \eqref{eq:Q} and finally (d) is due to definition of $\vm{H}_2$ at \eqref{eq:def_H}.
Eq. \eqref{eq:rate_final_1} is the result of combining what we have at \eqref{eq:ent_bound_1} and \eqref{eq:each_concave}.
With similar arguments, we can also derive \eqref{eq:rate_final_2}.

\end{proof}

We are now ready to prove Lemma~\ref{lemma:MIMO_dof_bound}.
\begin{proof}[Proof of Lemma~\ref{lemma:MIMO_dof_bound}]

Consider the bound at \eqref{eq:rate_final_1}, we have
\begin{align}
\log\det &\big( I_{N_1} + P\vm{H}_{11}\vm{H}_{11}^\dagger - P^2\vm{H}_{11}\vm{H}_{21}^\dagger \big( I_{N_2} + P\vm{H}_{21}\vm{H}_{21}^\dagger \big)^{-1} \vm{H}_{21}\vm{H}_{11}^\dagger \big)\nonumber\\
	    &\quad + \log\det \big(I_{N_2} + 2P\vm{H}_{22}\vm{H}_{22}^\dagger+2P\vm{H}_{21}\vm{H}_{21}^\dagger \big) + R_B^{[1,2]}\\
	    & \overset{(a)}{=} (\min\{N_1 , (M_1 - N_2)^+ \} + \min\{N_2, M_1+ M_2\}+ C_B^{12})\log P + o(\log P)
\end{align}
where, (a) is a consequence of the results in Lemma~\ref{lemma:dof_complex} and Lemma~\ref{lemma:dof_simple_sum}.
Dividing both sides by $\log P$ and letting $P\rightarrow \infty$ we get \eqref{eq:MIMO_dof_bound_1}.
Similarly, one can obtain \eqref{eq:MIMO_dof_bound_2}. 

\end{proof}

\section{Preliminaries on Graph Theory}\label{app:graph}

This Appendix provides some preliminaries and results of graph theory.
Let $G = (\mathcal{V},\mathcal{E})$ denote a graph $G$, with the vertex (node) set of $\mathcal{V}$ and the edge (link) set of $\mathcal{E}$.

\begin{definition}[Induced Subgraph]
For the graph $G = (\mathcal{V},\mathcal{E})$, let $\mathcal{S}\subseteq \mathcal{V}$ be arbitrary subset of vertices.
Then the induced subgraph $G[\mathcal{S}] = (\mathcal{S}, \mathcal{E}_s)$ is the graph whose vertex set is $\mathcal{S}$. 
Moreover, the set $\mathcal{E}_s \subseteq\mathcal{E}$ is the edge set of $G[\mathcal{S}]$ and consists of the edges in $\mathcal{E}$ with both ends in $\mathcal{S}$.
\end{definition}

\begin{definition}[Connected graph]
A connected graph is a graph in which for each arbitrary pair of vertices there is at least one path starting from one vertex and ending at the other.
For the graph $G = (\mathcal{V},\mathcal{E})$ to be connected, it is necessary to have $|\mathcal{E}|\geq |\mathcal{V}|-1$.
\end{definition}

\begin{definition}[Geodesic Distance]\label{def:distance}
In a graph $G = (\mathcal{V},\mathcal{E})$, the geodesic distance between two vertices $i$ and $j$, denoted by $d(i,j)$, is the number of edges in a shortest path from node $i$ to node $j$.
For every pair $i,j\in \mathcal{V}$ we have $d(i,j) = d(j,i)$ and $d(i,i) = 0$.
\end{definition}

\begin{definition}[Normalized Closeness Centrality]\label{def:closeness_centrality}
For a node $i\in \mathcal{V}$ of a graph $G = (\mathcal{V},\mathcal{E})$, the normalized closeness centrality, denoted by $C(i)$, is defined as the inverse of normalized average geodesic distance of node $i$ from all other nodes in the graph, i.e.,
\begin{align}
C(i) = \frac{|V|-1}{\sum_{j\in V} d(i,j)}.
\end{align}
The node with Maximum Closeness Centrality measure, has the shortest average distance from all other nodes of the graph.
\end{definition}

\begin{definition}[Neighboring set]\label{def:neigh}
$\mathcal{N}_{\mathcal{Q}}(\mathcal{S})$ indicates the set of vertices in the set $\mathcal{Q}$ which are neighbors of at least one of the nodes in the set $\mathcal{S}$.
\end{definition}

\begin{definition}[$\min$--cut]\label{def:min-cut}
Consider a directed graph $G$, and two of its nodes $s$ and $t$, called source and sink, respectively.
Then, $\min$--cut of $G$, denoted by $\min$--cut$(G)$, is defined as the set of edges (links), with the minimum sum weight, by removing which, there remains no path from $s$ to $t$.
\end{definition}

\begin{definition}[Connectivity degree]
For node $i\in \mathcal{V}$ of the graph $G = (\mathcal{V},\mathcal{E})$, the connectivity degree is defined as the number of neighbors of node $i$ and is denoted by $\text{Deg}(i)$.
Apparently,  $0 \leq \text{Deg}(i) \leq |\mathcal{V}|-1$ for all $i\in \mathcal{V}$.
\end{definition}

\begin{definition}[Bipartite Graph]
A graph $G = (\mathcal{V},\mathcal{E})$ is Bipartite if there exists partition $\mathcal{V} =\mathcal{X} \cup \mathcal{Y}$ with $\mathcal{X} \cap \mathcal{Y} = \emptyset$ and $\mathcal{E} \subseteq \mathcal{X} \times \mathcal{Y}$. 
\end{definition}

\begin{definition}[Matching]
In bipartite graphs, a matching is a set of edges that do not have a set of common vertices. In other words, a matching is a sub--graph where each node has either zero or one edge incident to it.
One of the bipartitions is said to be saturated in a matching, whenever all the vertices in that bipartition are incident to exactly one edge.
In a perfect matching, both bipartitions are saturated.
\end{definition}

\begin{definition}[Independent set]\label{def:IS}
An independent set of a bipartite graph $G$, with bipartitions $(\mathcal{A},\mathcal{B})$, is a subset of nodes with no links among them.
An independent set is said to be \emph{proper}, if it contains at least one node from $\mathcal{A}$ and one node from $\mathcal{B}$.
We use the shorthand PIS for a proper independent set.
For any $a\in\mathcal{A}$ and $b\in\mathcal{B}$, we define $\max$--PIS$(a,b)$ as a proper independent set  with maximum size that contains  $a$ and $b$.
\end{definition}

For bipartite graphs we have the well-known \textit{Hall's Marriage} theorem (\cite{diestel2000graph}, Theorem 2.1.2, page 31 ) as follows.
\begin{Theorem}\label{theorem:hall}
 Let $G$ be a bipartite graph with bipartitions $(\mathcal{A},\mathcal{B})$. 
Subsequently, $G$ contains a matching that saturates every vertex in $\mathcal{A}$ if and only if  for all $\mathcal{S} \subseteq \mathcal{A}
$
\begin{align}\label{cond:hall}
|N_\mathcal{B}(\mathcal{S})| \geq |\mathcal{S}|,
\end{align}
where $N_{\mathcal{B}}(\mathcal{S})$ denotes the set of neighboring nodes of $\mathcal{S}$ in $\mathcal{B}$.
\end{Theorem}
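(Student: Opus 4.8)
The plan is the textbook induction on $|\mathcal{A}|$, after disposing of necessity. First I would note that necessity is immediate: if a matching $M$ saturates $\mathcal{A}$, then for every $\mathcal{S}\subseteq\mathcal{A}$ the $M$-partners of the vertices of $\mathcal{S}$ are pairwise distinct vertices of $N_\mathcal{B}(\mathcal{S})$, whence $|N_\mathcal{B}(\mathcal{S})|\geq|\mathcal{S}|$, i.e.\ \eqref{cond:hall} holds.

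For sufficiency I would induct on $n=|\mathcal{A}|$. The base case $n=1$ is trivial, since \eqref{cond:hall} applied to $\mathcal{S}=\mathcal{A}$ supplies a neighbour of the single vertex. For the inductive step I would split into two cases according to whether \eqref{cond:hall} holds with slack on all proper nonempty subsets. \emph{Case 1:} every nonempty $\mathcal{S}\subsetneq\mathcal{A}$ satisfies $|N_\mathcal{B}(\mathcal{S})|\geq|\mathcal{S}|+1$. Then I pick any $a\in\mathcal{A}$ and any neighbour $b\in N_\mathcal{B}(\{a\})$ (which exists by \eqref{cond:hall}), delete $a$ and $b$, and observe that in the remaining graph every $\mathcal{S}\subseteq\mathcal{A}\setminus\{a\}$ still has $|N_{\mathcal{B}\setminus\{b\}}(\mathcal{S})|\geq|N_\mathcal{B}(\mathcal{S})|-1\geq|\mathcal{S}|$; the induction hypothesis yields a matching saturating $\mathcal{A}\setminus\{a\}$, to which I append the edge $ab$. \emph{Case 2:} there is a nonempty $\mathcal{S}_0\subsetneq\mathcal{A}$ with $|N_\mathcal{B}(\mathcal{S}_0)|=|\mathcal{S}_0|$. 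Writing $\mathcal{B}_0=N_\mathcal{B}(\mathcal{S}_0)$, the induced subgraph on $(\mathcal{S}_0,\mathcal{B}_0)$ satisfies \eqref{cond:hall} and is smaller, so it has a matching $M_0$ saturating $\mathcal{S}_0$; and for the induced subgraph on $(\mathcal{A}\setminus\mathcal{S}_0,\mathcal{B}\setminus\mathcal{B}_0)$ one checks, for every $\mathcal{T}\subseteq\mathcal{A}\setminus\mathcal{S}_0$,
\[
|N_{\mathcal{B}\setminus\mathcal{B}_0}(\mathcal{T})| = |N_\mathcal{B}(\mathcal{S}_0\cup\mathcal{T})| - |\mathcal{B}_0| \geq |\mathcal{S}_0\cup\mathcal{T}| - |\mathcal{S}_0| = |\mathcal{T}|,
\]
so this subgraph also satisfies \eqref{cond:hall}, is smaller, and hence has a matching $M_1$ saturating $\mathcal{A}\setminus\mathcal{S}_0$. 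Since $M_0$ and $M_1$ live on disjoint vertex sets, $M_0\cup M_1$ is a matching saturating $\mathcal{A}$.

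The step I expect to be the main obstacle is the verification in Case 2 that Hall's condition is inherited by the ``outer'' piece $(\mathcal{A}\setminus\mathcal{S}_0,\mathcal{B}\setminus\mathcal{B}_0)$: the displayed identity rests on the fact that $N_\mathcal{B}(\mathcal{S}_0\cup\mathcal{T})$ is the disjoint union of $\mathcal{B}_0$ and $N_{\mathcal{B}\setminus\mathcal{B}_0}(\mathcal{T})$ together with the tightness $|\mathcal{B}_0|=|\mathcal{S}_0|$, and it is precisely this that both makes the two sub-instances solvable and guarantees their solutions combine without conflict. As an alternative better aligned with the flow machinery already used in the proof of Lemma~\ref{lemma:poly-pis}, one could instead build a unit-capacity network with a source joined to all of $\mathcal{A}$ and a sink reached from all of $\mathcal{B}$: a matching saturating $\mathcal{A}$ corresponds to an integral $s$--$t$ flow of value $|\mathcal{A}|$, while \eqref{cond:hall} is exactly what forbids an $s$--$t$ cut of value below $|\mathcal{A}|$, so the Max-Flow Min-Cut theorem closes the argument.
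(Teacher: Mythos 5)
Your proof is correct. Note, however, that the paper does not prove Theorem~\ref{theorem:hall} at all: it is quoted verbatim from Diestel's textbook (Theorem 2.1.2 there) and used as a black box throughout Section~\ref{section:centralized} and the appendices, so there is no in-paper argument to compare against. What you have written is the classical Halmos--Vaughan induction, and all the delicate points are handled properly: the slack hypothesis in Case 1 survives the deletion of one vertex from each side; in Case 2 the tight set $\mathcal{S}_0$ makes $N_{\mathcal{B}_0}(\mathcal{S})=N_{\mathcal{B}}(\mathcal{S})$ for $\mathcal{S}\subseteq\mathcal{S}_0$, and the displayed computation for the outer piece correctly uses both the disjointness $N_{\mathcal{B}}(\mathcal{S}_0\cup\mathcal{T})=\mathcal{B}_0\,\dot\cup\,N_{\mathcal{B}\setminus\mathcal{B}_0}(\mathcal{T})$ and the tightness $|\mathcal{B}_0|=|\mathcal{S}_0|$, which is exactly what lets the two partial matchings coexist. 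Your closing remark about the max-flow formulation is apt in context, since the paper already routes Lemma~\ref{lemma:poly-pis} through the Max-Flow Min-Cut theorem; either route would serve as a self-contained proof here.
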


The condition in~\eqref{cond:hall} is called Hall's Condition and can also be stated as the following condition.

\begin{condition}[Hall's Condition]\label{cond:hall-cond}
For every $\ell\in\{1,2,\ldots,|\mathcal{A}|\}$ and any subset of $\mathcal{A}$ with $\ell$ nodes, the number of neighbors of that subset in $\mathcal{B}$ must be at least $\ell$.
\end{condition}

The following corollary is the direct consequence of Theorem~\ref{theorem:hall}.

\begin{corollary}\label{perfect}
For a bipartite graph with equal size bipartitions, a perfect matching exists if and only if \eqref{cond:hall} holds.
\end{corollary}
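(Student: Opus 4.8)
The plan is to obtain both directions directly from Hall's Marriage theorem (Theorem~\ref{theorem:hall}), using the equal-size hypothesis $|\mathcal{A}| = |\mathcal{B}|$ only to upgrade a matching saturating $\mathcal{A}$ into a perfect matching.

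First I would handle the ``if'' direction. Assuming that Hall's condition~\eqref{cond:hall} holds, Theorem~\ref{theorem:hall} already supplies a matching $M$ saturating every vertex of $\mathcal{A}$. The one step to carry out is a counting observation: each edge of $M$ meets $\mathcal{A}$ in exactly one vertex and distinct edges of $M$ meet $\mathcal{A}$ in distinct vertices, so $|M| = |\mathcal{A}|$; since distinct edges of $M$ likewise meet $\mathcal{B}$ in distinct vertices, $M$ saturates $|M| = |\mathcal{A}| = |\mathcal{B}|$ vertices of $\mathcal{B}$, that is, all of them. Hence $M$ is a perfect matching.

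For the ``only if'' direction I would note that a perfect matching $M$ in particular saturates $\mathcal{A}$, so the necessity part of Theorem~\ref{theorem:hall} immediately yields~\eqref{cond:hall}. Equivalently, and more explicitly, $M$ determines an injection $\phi\colon\mathcal{A}\to\mathcal{B}$ sending each $a$ to a neighbour of $a$, and for any $\mathcal{S}\subseteq\mathcal{A}$ the $|\mathcal{S}|$ distinct vertices $\phi(\mathcal{S})$ all lie in $N_\mathcal{B}(\mathcal{S})$, so $|N_\mathcal{B}(\mathcal{S})|\geq|\mathcal{S}|$.

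I do not expect a genuine obstacle: the statement is an immediate corollary of Theorem~\ref{theorem:hall}, and the only place needing an explicit sentence is the remark that, when $|\mathcal{A}| = |\mathcal{B}|$, a matching saturating one side is forced to saturate the other as well.
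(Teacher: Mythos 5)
Your proof is correct and follows exactly the route the paper intends: the paper states this corollary without proof as a ``direct consequence'' of Theorem~\ref{theorem:hall}, and your argument simply spells out that consequence, with the counting remark that a matching saturating $\mathcal{A}$ must also saturate $\mathcal{B}$ when $|\mathcal{A}|=|\mathcal{B}|$ being the only step worth writing down. No gaps.
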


We also have the following lemma about rank efficiency of a matrix with random elements. 
\begin{lemma}\label{lemma:rank_efficiency}
Consider a squared matrix $\vm{X}$, where the elements are either zero or independently chosen from a continuous distribution.
If the diagonal elements are random, then $\vm{X}$ is full rank, almost surely.
\end{lemma}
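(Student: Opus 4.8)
The plan is to show that $\det(\vm{X})$, viewed as a polynomial in the random entries of $\vm{X}$, is not the zero polynomial, and then invoke the standard fact that a nonzero polynomial in finitely many variables vanishes only on a set of Lebesgue measure zero, while the joint law of the random entries --- a product of non-atomic (continuous) distributions --- is absolutely continuous with respect to Lebesgue measure on the relevant Euclidean space. Hence the event $\{\det(\vm{X})=0\}$ has probability zero, which is exactly full rank almost surely.

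For the first step I would use the Leibniz expansion $\det(\vm{X}) = \sum_{\sigma\in S_n}\mathrm{sgn}(\sigma)\prod_{i=1}^n X_{i\sigma(i)}$, where $n$ is the size of $\vm{X}$. Treat each nonzero entry of $\vm{X}$ as an independent indeterminate and each deterministic-zero entry as the constant $0$, so that $\det(\vm{X})$ becomes a polynomial $p$ in these indeterminates. The term coming from the identity permutation is $\prod_{i=1}^n X_{ii}$, the product of the diagonal entries, which by hypothesis are all random and hence all genuine indeterminates, so this monomial is present. No permutation $\sigma\neq\mathrm{id}$ can produce the same monomial: such a $\sigma$ has $\sigma(i)\neq i$ for some $i$, so its term $\prod_j X_{j\sigma(j)}$ involves the off-diagonal indeterminate $X_{i\sigma(i)}$, which differs from every diagonal indeterminate. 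Thus the monomial $\prod_i X_{ii}$ appears in $p$ with coefficient $\pm 1$ and cannot be cancelled, so $p\not\equiv 0$; setting the deterministic-zero entries to $0$ only deletes some monomials and cannot destroy this one, so the resulting polynomial in the random entries is still nonzero.

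For the second step I would record (or cite) the elementary lemma: if $q$ is a nonzero polynomial in $k$ real or complex variables and $U_1,\dots,U_k$ are independent random variables each with a continuous (non-atomic) distribution, then $\mathbb{P}\big(q(U_1,\dots,U_k)=0\big)=0$. This follows by induction on $k$ via Fubini: write $q$ as a polynomial in $U_k$ whose coefficients are polynomials in $U_1,\dots,U_{k-1}$, at least one of which is not identically zero; by the induction hypothesis those coefficients are almost surely not all zero, and on that event $q$ is a nonconstant polynomial in $U_k$ with finitely many roots, an event of probability zero under the non-atomic law of $U_k$. Applying this with $q=p$ and with $U_1,\dots,U_k$ the random entries of $\vm{X}$ yields $\mathbb{P}(\det(\vm{X})=0)=0$.

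The only delicate point, and the one I would state most carefully, is the non-cancellation of the diagonal monomial $\prod_i X_{ii}$; this is precisely where the hypothesis that \emph{the diagonal elements are random} is used, since if even one diagonal entry were the constant $0$ the identity term would vanish and both the argument and, in general, the conclusion could fail (e.g.\ a zero row forces rank deficiency). Everything else --- the Leibniz bookkeeping and the measure-theoretic lemma --- is routine, so I would keep the write-up short.
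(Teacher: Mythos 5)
Your proof is correct, but it takes a different route from the paper's. The paper argues by induction on the matrix size, expanding the determinant by cofactors along the last row, isolating the term $X_{nn}M_{nn}$, and invoking the induction hypothesis to get $M_{nn}\neq 0$ almost surely; you instead work globally with the Leibniz expansion, observe that the identity-permutation monomial $\prod_i X_{ii}$ survives with coefficient $\pm 1$ and cannot be cancelled by any other permutation's monomial, and then apply the standard ``a nonzero polynomial in independent non-atomic random variables vanishes with probability zero'' lemma (itself proved by the Fubini induction you sketch). Your version is somewhat more robust: the paper's final step (``$X_{nn}$ is a random number and $M_{nn}$ is non-zero\ldots therefore $|\vm{X}|$ is non-zero, almost surely'') silently skips the possibility that the remaining cofactor terms $\sum_{j<n}(-X_{nj})^{n+j}M_{nj}$ cancel the last one; the correct justification there is exactly your conditioning argument --- given all entries except $X_{nn}$, the determinant is affine in $X_{nn}$ with almost surely nonzero slope, so it vanishes with probability zero. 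Your approach packages that step once and for all in the polynomial lemma, and it also isolates cleanly where the hypothesis on the diagonal is used (the non-cancellation of $\prod_i X_{ii}$), which is the same role the diagonal plays in the paper's induction. Both proofs are valid; yours is the more standard Schwartz--Zippel-style presentation, while the paper's is slightly shorter but leaves the cancellation point implicit.
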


\begin{proof}
For a full rank matrix we know that the determinant is non-zero.
The proof is based on induction.
For one by one matrix, the rank efficiency is obvious.
Let us now assume that the assumption is valid for $n-1$ by $n-1$ matrices and consider the case of $n$ by $n$.
Using the determinant expansion on the row $n$, we have
\begin{align}\label{eq:det_exp}
|\vm{X}| = \sum_{j=1}^{n} ( - X_{nj})^{n+j} M_{nj},
\end{align}
where $M_{nj}$ indicates $n$, $j$ minor of $\vm{X}$, that is, the determinant of the $n-1$ by $n-1$ matrix that results from deleting the $n$--th row and the $j$--th column of $\vm{X}$.
Rewriting what we have in \eqref{eq:det_exp}, we have
\begin{align}
|\vm{X}| = \sum_{j=1}^{n-1} ( - X_{nj})^{n+j} M_{nj} + ( X_{nn})^{2n} M_{nn}.
\end{align}

$X_{nn}$ is a random number and $M_{nn}$ is non-zero due to the induction hypothesis, therefore $|\vm{X}|$ is non-zero, almost surely.
\end{proof}

 Consider an $M$ by $N$ matrix $\vm{X}$ with all elements equal to zero, where we are able to replace each of the elements independently with a random number choosing from a continuous distribution.
The question is how many and which elements do we need to replace in order for the resulting matrix to be full rank.
This matrix has an equivalent bipartite graph with bipartitions $(\mathcal{A},\mathcal{B})$ such that $|\mathcal{A}| = M$ and $|\mathcal{B}|=N$.
For non-zero element $X_{ij}$, there exists a direct link between nodes $i\in\mathcal{A}$ and $j\in\mathcal{B}$.  
The following proposition, answers the above question.

\begin{proposition}\label{prop:1}
An $M$ by $N$ matrix $\vm{X}$ with all elements either zero or drawn i.i.d. from a continuous distribution, is rank efficient with probability one, if and only if there exists a matching in the equivalent bipartite graph that saturates all the elements of the bipartiotion with fewer number of nodes, i.e., $\min(M,N)$.
\end{proposition}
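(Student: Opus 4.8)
The plan is to prove the equivalence by identifying the generic rank of such a structured random matrix with the term rank (maximum matching size) of its bipartite graph. First I would reduce to a canonical orientation: without loss of generality assume $M\le N$, since transposing $\vm{X}$ swaps the two bipartitions, preserves the rank, and preserves the existence of a matching saturating the smaller side; thus ``rank efficient'' means $\operatorname{rank}(\vm{X})=M$, i.e. some $M\times M$ submatrix has nonzero determinant.

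For the necessity direction (``rank efficient $\Rightarrow$ a matching saturating $\mathcal{A}$ exists''), I would argue the contrapositive. Suppose no matching saturates the row-side $\mathcal{A}$. By Hall's theorem (Theorem~\ref{theorem:hall}) there is a subset $\mathcal{S}\subseteq\mathcal{A}$ of rows with $|N_\mathcal{B}(\mathcal{S})|<|\mathcal{S}|$. By the definition of $N_\mathcal{B}(\mathcal{S})$, every row indexed by $\mathcal{S}$ has all of its nonzero entries confined to the columns in $N_\mathcal{B}(\mathcal{S})$, so these $|\mathcal{S}|$ rows, viewed as vectors in $\mathbb{C}^N$, lie in a coordinate subspace of dimension $|N_\mathcal{B}(\mathcal{S})|<|\mathcal{S}|$; hence they are linearly dependent and $\operatorname{rank}(\vm{X})\le M-1<M$. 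This holds for every realization of the random entries, so $\vm{X}$ is never rank efficient, in particular not with probability one.

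For the sufficiency direction (``a matching saturating $\mathcal{A}$ exists $\Rightarrow$ rank efficient a.s.''), suppose the matching assigns each row $i$ to a distinct column $c(i)$. I would form the $M\times M$ submatrix $\vm{X}'$ on the columns $\{c(1),\dots,c(M)\}$ and permute these columns so that $c(i)$ sits in position $i$; a column permutation alters the determinant only by a sign and preserves rank. Then the diagonal entry $\vm{X}'_{ii}=\vm{X}_{i,c(i)}$ occupies a nonzero position of $\vm{X}$, hence is one of the i.i.d.\ continuous random variables, while the off-diagonal entries are either zero or such variables. By Lemma~\ref{lemma:rank_efficiency}, $\vm{X}'$ is full rank almost surely, so $\operatorname{rank}(\vm{X})=M$ almost surely, which is exactly rank efficiency.

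The only delicate point is the sufficiency step: one must know that the ``diagonal'' monomial $\prod_i \vm{X}'_{ii}$ in the Leibniz expansion of $\det\vm{X}'$ cannot cancel against the other permutation terms. This is precisely the content already isolated in Lemma~\ref{lemma:rank_efficiency} (distinct permutations produce monomials in distinct sets of indeterminates, so a nonzero polynomial results, and a nonzero polynomial vanishes only on a measure-zero set). Consequently, once the reduction to a square submatrix with random diagonal is in place, there is nothing substantive left to verify; the remaining steps are routine bookkeeping about transposition, column reordering, and the rank bound from a dependent subset of rows.
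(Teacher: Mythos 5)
Your proposal is correct and follows essentially the same route as the paper's own proof: Hall's theorem for the necessity direction (the rows indexed by a deficient set $\mathcal{S}$ are supported on fewer than $|\mathcal{S}|$ columns, forcing a rank drop in every realization), and for sufficiency a row/column permutation placing the matched entries on the diagonal of an $M\times M$ submatrix, to which Lemma~\ref{lemma:rank_efficiency} applies. The only cosmetic difference is that you phrase the dependency argument in terms of a coordinate subspace rather than counting nonzero columns of the extracted block, which is the same observation.
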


\begin{proof}
 Without loss of generality we assume that $\min(M,N) = M$.\\
\textbf{Sufficiency proof:}

In case such a matching exists, there are at least $M$ non-zero elements, each in a different row--column pair.
We form a new matrix $\bar{\vm{X}}$, by substituting rows and columns such that these non-zero elements form $\bar{X}_{ii}$ for $i=1,2,\ldots,M$.
This is equivalent to changing the place of nodes in bipartitions, such that each element in first bipartition becomes the neighbor of its opponent in the second bipartition.
Consequently, we have $\bar{\vm{X}} = [\vm{X}_1 , \vm{X}_2]$ where $\vm{X}_1$ is an $M$ by $M$ matrix with non-zero diagonal elements and $\vm{X}_2$ is an $M$ by $N-M$ matrix.
Due to the Lemma~\ref{lemma:rank_efficiency} in Appendix \ref{app:linear_algebra}, $\vm{X}_1$ is full--rank and therefore, $\bar{\vm{X}}$ is rank efficient irrespective of the value of $\vm{X}_2$.
Since row and column operations do not change the rank of a matrix, we have $\text{rank}(\vm{X}) =\text{rank}(\bar{\vm{X}})$.
\\
\textbf{Necessity proof:}

 Assume that such matching does not exist and the equivalent bipartite graph consists of bipartitions $(\mathcal{A},\mathcal{B})$.
Due to Theorem~\ref{theorem:hall} in Appendix \ref{app:graph}, there exists a group of nodes $\mathcal{S}\subseteq \mathcal{A}$ with $|\mathcal{S}| = l$ and $|N(\mathcal{S})|\ \textless\ l\ $.
By substituting rows and columns of $\vm{X}$, we form a new matrix $\bar{\vm{X}}$ such that the nodes in $\mathcal{S}$ form the first $l$ rows.
We further have $\bar{\vm{X}} = [\vm{X}_1^\h, \vm{X}_2^\h]^\h$ where $\vm{X}_1$ is an $l$ by $N$ matrix and $\vm{X}_2$ is an $M-l$ by $N$ matrix.
Since $|N(\mathcal{S})|\ \textless\ l\ $, there are less than $l$ non-zero columns in $\vm{X}_1$.
Therefore, $\text{rank}(\vm{X}_1)\ \textless\ l$ and $\text{rank}(\bar{\vm{X}})\ \textless\ M$ no matter what $\vm{X}_2$ is.
Since row and column operations do not change the rank of a matrix, we have $\text{rank}(\vm{X}) =\text{rank}(\bar{\vm{X}})$ and $\vm{X}$ is rank-deficient.

\end{proof}

\begin{proposition}\label{prop:2}
Consider the matrix $\vm{A}$ consisting of $m\times n$ blocks of the following form
\begin{align}
\left[\begin{array}{cccc}
L_{11}\vm{A}_{11} & L_{12}\vm{A}_{12} & \ldots & L_{1n}\vm{A}_{1n}\\
L_{21}\vm{A}_{21} & L_{22}\vm{A}_{22} & \ldots & L_{2n}\vm{A}_{1n}\\
\vdots & \vdots & \ddots & \vdots\\
L_{n1}\vm{A}_{m1} & L_{n2}\vm{A}_{m2} & \ldots & L_{mn}\vm{A}_{mn}
\end{array}\right]
\end{align}
where $L_{ij}$ is either zero or one and the block $\vm{A}_{ij}$ is an $M$ by $M$ matrix with all the elements drawn i.i.d. from a continuous distribution, for all $i\in\{1,2,\ldots, m\}$ and $j\in\{1,2,\ldots,n\}$.

$\vm{A}$ is full rank with probability one, if and only if there exists a matching in the equivalent bipartite graph of adjacency matrix $\vm{L}$, that saturates all the elements of the bipartiotion with fewer number of nodes, i.e., $\min(m,n)$.
\end{proposition}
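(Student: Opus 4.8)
The plan is to mirror the proof of Proposition~\ref{prop:1}, promoting ``nonzero scalar entry'' to ``full-rank $M\times M$ block''. Without loss of generality I would take $\min(m,n)=m$, so that ``full rank'' means full row rank $mM$ and the relevant matching is one that saturates the $m$ block-rows of the adjacency graph of $\vm{L}$.

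For the sufficiency direction, suppose such a matching exists. First I would permute block-rows and block-columns so that the matching pairs block-row $i$ with block-column $i$ for $i=1,\dots,m$; the permuted matrix then reads $[\vm{X}_1\ \vm{X}_2]$, where $\vm{X}_1$ is the $mM\times mM$ block matrix whose $(i,i)$-block is a fresh random matrix $\vm{A}_{ii}$ (since $L_{ii}=1$ along the matching) and whose $(i,j)$-block is $L_{ij}\vm{A}_{ij}$ for $i\neq j$. Next I would observe that $\det\vm{X}_1$ is a polynomial in the independent entries of the blocks $\vm{A}_{ij}$, and that it is not the zero polynomial: evaluating at the point where every off-diagonal block equals $\vm{0}$ and every $\vm{A}_{ii}=\vm{I}_M$ gives $\det\vm{X}_1=1$. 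Since the entries are drawn i.i.d.\ from a continuous distribution and the zero set of a nonzero polynomial has Lebesgue measure zero, $\vm{X}_1$ is invertible with probability one and $\vm{A}$ has full row rank $mM$ almost surely. Equivalently, one could state and prove a block version of Lemma~\ref{lemma:rank_efficiency} by the same determinant-expansion induction, carried out along block-rows rather than scalar rows.

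For the necessity direction, suppose no matching saturating the block-rows exists. By Hall's theorem (Theorem~\ref{theorem:hall}) applied to the adjacency graph of $\vm{L}$, there is a set $\mathcal{S}$ of block-rows with $|\mathcal{S}|=\ell$ and $|N(\mathcal{S})|\le\ell-1$. I would permute block-rows so that those of $\mathcal{S}$ come first; then the first $\ell M$ scalar rows of $\vm{A}$ are supported on the columns of the at most $\ell-1$ blocks in $N(\mathcal{S})$, i.e., on at most $(\ell-1)M$ columns, so their rank is at most $(\ell-1)M<\ell M\le mM$. Hence $\text{rank}(\vm{A})<mM$ for every realization, and since block-row and block-column permutations preserve rank, $\vm{A}$ is rank-deficient (deterministically, hence with probability one).

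The only place where care is genuinely needed, and where this goes beyond a verbatim transcription of Proposition~\ref{prop:1}, is the sufficiency step: the blocks $\vm{A}_{ij}$ are full $M\times M$ random matrices rather than single random scalars, so Lemma~\ref{lemma:rank_efficiency} cannot be quoted directly; the substitute is the explicit evaluation showing that $\det\vm{X}_1$ is a nontrivial polynomial. The rest is the block-level copy of the proof of Proposition~\ref{prop:1}, with ``nonzero scalar'' replaced by ``full-rank block.''
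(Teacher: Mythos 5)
Your proof is correct and follows essentially the same route as the paper: permute block-rows and block-columns along the matching, reduce sufficiency to the almost-sure invertibility of the resulting $mM\times mM$ square submatrix, and invoke Hall's theorem (Theorem~\ref{theorem:hall}) for necessity. The only divergence is that you replace the paper's appeal to Lemma~\ref{lemma:rank_efficiency} with a polynomial non-vanishing (determinant-evaluation) argument; in fact the lemma does apply verbatim at the scalar level --- after the permutation every scalar diagonal entry of the square submatrix is a random diagonal entry of some $\vm{A}_{ii}$ and every other entry is either zero or random --- but your measure-zero argument is equally valid and arguably more self-contained.
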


\begin{proof}

Due to the randomness, $\vm{A}_{ij}$ is full rank for all $i\in\{1,2,\ldots, m\}$ and $j\in\{1,2,\ldots,n\}$.
 Without loss of generality we assume that $\min(m,n) = m$.\\
\textbf{Sufficiency proof:}

In case such a matching exists, there are at least $m$ non-zero elements, each in a different row--column pair of $\vm{L}$.
One is able to form a new matrix $\bar{\vm{L}}$, by substituting rows and columns such that these non-zero elements form the main diagonal of $\vm{L}$, i.e., $\bar{L}_{ii}$ for $i=1,2,\ldots,M$.
This is equivalent to changing the place of nodes in bipartitions, such that each element in first bipartition becomes the neighbor of its opponent in the second bipartition.
By doing the same substitutions on the blocks of $\vm{A}$, we form a new matrix $\bar{\vm{A}}$.

Now we have $\bar{\vm{A}} = [\vm{A}_1 , \vm{A}_2]$ where $\vm{A}_1$ consists of $m\times m$ blocks where the diagonal blocks are random and $\vm{A}_2$ consists of $m\times (n-m)$ blocks.
Due to the Lemma~\ref{lemma:rank_efficiency} in Appendix \ref{app:linear_algebra}, since all the diagonal elements of $\vm{A}_1$ are non-zero with probability one, $\vm{X}_1$ is full--rank and therefore, $\bar{\vm{X}}$ is full--rank irrespective of the value of $\vm{X}_2$.
Since row and column operations do not change the rank of a matrix, we have $\text{rank}(\vm{X}) =\text{rank}(\bar{\vm{X}})$.
\\
\textbf{Necessity proof:}

 Assume that such matching does not exist and the equivalent bipartite graph consists of bipartitions $(\mathcal{A},\mathcal{B})$.
Due to Theorem~\ref{theorem:hall} in Appendix \ref{app:graph}, there exists a group of nodes $\mathcal{S}\subseteq \mathcal{A}$ with $|\mathcal{S}| = l$ and $|\mathcal{N}_{\mathcal{B}}(\mathcal{S})|\ \textless\ l\ $.
By substituting rows and columns of $\vm{L}$, we are able to form a new matrix $\bar{\vm{L}}$ such that the nodes in $\mathcal{S}$ form the first $l$ rows.
By doing the same substitutions on the blocks of $\vm{A}$, we form a new matrix $\bar{\vm{A}} = [\vm{A}_1^\h, \vm{A}_2^\h]^\h$ where $\vm{A}_1$ consists of $l\times n$ blocks and $\vm{A}_2$ consists of $m-l\times n$ blocks.
Since $|\mathcal{N}_{\mathcal{B}}(\mathcal{S})|\ \textless\ l\ $, there are less than $l\times n$ non-zero blocks in $\vm{A}_1$.
Therefore, $\text{rank}(\vm{A}_1)\ \textless\ l$ and $\text{rank}(\bar{\vm{A}})\ \textless\ m$ irrespective of the value of $\vm{A}_2$.
Since row and column operations do not change the rank of a matrix, we have $\text{rank}(\vm{A}) =\text{rank}(\bar{\vm{A}})$ and $\vm{A}$ is rank-deficient.

\end{proof}

Here, considering the adjacency matrix of a $K$--user interference channel and its equivalent bipartite graph, we introduce the dual for \nameref{cond:ehc} (Condition~\ref{cond:ehc}).
\begin{condition}[Dual of Extended Hall's Condition]\label{cond:ehcr}
Let $k$ be any arbitrary integer in $\{1,2,\ldots,\ceil{\frac{K}{2}}\}$.
In the equivalent bipartite graph, for each group of $k$ arbitrary subset $\mathcal{Q}\subseteq\mathcal{R}$ of \emph{receivers},
\begin{align*}
|\mathcal{N}_\mathcal{T}(\mathcal{Q})| \geq \floor{\frac{K}{2}}+k.
\end{align*}
\end{condition}
The reason for the appellation of \nameref{cond:ehcr} (Condition~\ref{cond:ehcr}), is that it is exactly the same as \nameref{cond:ehc} (Condition~\ref{cond:ehc}).
However, \nameref{cond:ehc} deals with the transmitters in the equivalent bipartite graph, while \nameref{cond:ehcr} deals with the receivers.
We have the follwoing lemma.

\begin{lemma}\label{lemma:trans_rec}
\nameref{cond:ehc} (Condition~\ref{cond:ehc}) and \nameref{cond:ehcr} (Condition~\ref{cond:ehcr}) are equivalent.
\end{lemma}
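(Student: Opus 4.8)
The plan is to establish the equivalence by a complementation argument, using the fact that \nameref{cond:ehc}(Condition~\ref{cond:ehc}) and \nameref{cond:ehcr}(Condition~\ref{cond:ehcr}) are literally the same statement after exchanging the two (equal-size) sides $\mathcal{T}$ and $\mathcal{R}$ of the equivalent bipartite graph. Hence it suffices to prove one implication, say \nameref{cond:ehc} $\Rightarrow$ \nameref{cond:ehcr}; the reverse follows verbatim upon swapping the roles of transmitters and receivers. I would argue by contraposition: assuming \nameref{cond:ehcr} fails, I will exhibit a violation of \nameref{cond:ehc}.

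So suppose there is an integer $k\in\{1,\ldots,\ceil{\frac{K}{2}}\}$ and a set $\mathcal{Q}\subseteq\mathcal{R}$ with $|\mathcal{Q}|=k$ and $|\mathcal{N}_\mathcal{T}(\mathcal{Q})|<\floor{\frac{K}{2}}+k$. Put $\mathcal{S}\triangleq\mathcal{T}\setminus\mathcal{N}_\mathcal{T}(\mathcal{Q})$. By construction no transmitter in $\mathcal{S}$ is adjacent to any receiver of $\mathcal{Q}$, so $\mathcal{N}_\mathcal{R}(\mathcal{S})\subseteq\mathcal{R}\setminus\mathcal{Q}$ and therefore $|\mathcal{N}_\mathcal{R}(\mathcal{S})|\le K-k$. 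On the other hand, $|\mathcal{S}|=K-|\mathcal{N}_\mathcal{T}(\mathcal{Q})|>K-\floor{\frac{K}{2}}-k=\ceil{\frac{K}{2}}-k$, i.e. $|\mathcal{S}|\ge\ceil{\frac{K}{2}}-k+1\ge 1$ (the case $\mathcal{N}_\mathcal{T}(\mathcal{Q})=\emptyset$, giving $\mathcal{S}=\mathcal{T}$, being covered as well).

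The remaining step is to extract from $\mathcal{S}$ a subset of the right cardinality, since $|\mathcal{S}|$ may exceed $\ceil{\frac{K}{2}}$ and \nameref{cond:ehc} only speaks about subsets of size at most $\ceil{\frac{K}{2}}$. I would take $\ell\triangleq\min\{|\mathcal{S}|,\ceil{\frac{K}{2}}\}$; both arguments of the minimum are at least $\ceil{\frac{K}{2}}-k+1$, so $\ceil{\frac{K}{2}}-k+1\le\ell\le\ceil{\frac{K}{2}}$ and $\ell\le|\mathcal{S}|$. Choose any $\mathcal{S}'\subseteq\mathcal{S}$ with $|\mathcal{S}'|=\ell$. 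Monotonicity of neighbourhoods gives $|\mathcal{N}_\mathcal{R}(\mathcal{S}')|\le|\mathcal{N}_\mathcal{R}(\mathcal{S})|\le K-k$, while $\ell\ge\ceil{\frac{K}{2}}-k+1$ together with $K-\floor{\frac{K}{2}}=\ceil{\frac{K}{2}}$ yields $K-k<\floor{\frac{K}{2}}+\ell$. Thus $\mathcal{S}'$ is a set of $\ell\le\ceil{\frac{K}{2}}$ transmitters with $|\mathcal{N}_\mathcal{R}(\mathcal{S}')|<\floor{\frac{K}{2}}+\ell$, contradicting \nameref{cond:ehc}. Swapping $\mathcal{T}$ and $\mathcal{R}$ proves the converse, and the equivalence follows.

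I expect the only delicate point to be the cardinality bookkeeping: one must be sure that, after passing to the complement $\mathcal{S}$, there is always an $\ell$ that is simultaneously small enough ($\le\ceil{\frac{K}{2}}$, so that \nameref{cond:ehc} applies) and large enough ($\ell>\ceil{\frac{K}{2}}-k$, so that the bound $|\mathcal{N}_\mathcal{R}(\mathcal{S}')|\le K-k$ is a genuine violation). The floor/ceiling identity $K-\floor{\frac{K}{2}}=\ceil{\frac{K}{2}}$ and the trivial bound $k\ge 1$ are exactly what make the window $[\ceil{\frac{K}{2}}-k+1,\ \ceil{\frac{K}{2}}]$ nonempty and compatible with $|\mathcal{S}|$, but these inequalities should be checked carefully for both parities of $K$.
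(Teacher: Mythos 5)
Your proof is correct and follows essentially the same route as the paper: argue by contraposition, take $\mathcal{S}$ to be the set of transmitters with no neighbour in the offending receiver set $\mathcal{Q}$, bound $|\mathcal{N}_\mathcal{R}(\mathcal{S})|\le K-k$, and contradict \nameref{cond:ehc} via the identity $K-\floor{\frac{K}{2}}=\ceil{\frac{K}{2}}$. The only (cosmetic) difference is that you handle the case $|\mathcal{S}|>\ceil{\frac{K}{2}}$ by truncating to a subset of admissible size, whereas the paper treats it as a separate case by noting such a set would be forced to have all of $\mathcal{R}$ as neighbours; both are valid.
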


\begin{proof}
Let $k$ be any arbitrary integer in $\{1,2,\ldots,\ceil{\frac{K}{2}}\}$ and assume there exist a set $\mathcal{Q}\subseteq \mathcal{R}$, containing $k$ receivers with less than $\floor{\frac{K}{2}}+k$ neighbors, i.e.,
\begin{align*}
|\mathcal{N}_\mathcal{T}(\mathcal{Q})|\ \textless\ \floor{\frac{K}{2}}+k.
\end{align*}
Subsequently, there exists a set $\mathcal{S}\subseteq \mathcal{T}$, containing $l\ \textgreater\ \ceil{\frac{K}{2}}-k$ transmitters, where
\begin{align*}
|\mathcal{N}_\mathcal{T}(\mathcal{Q}) \cap \mathcal{S}| = 0,
\end{align*}
and,
\begin{align}\label{eq:contradict}
|\mathcal{N}_\mathcal{R}(\mathcal{S})| \leq K-k.
\end{align}
If $l\ \textgreater\ \ceil{\frac{K}{2}}$, according to \nameref{cond:ehc} (Condition~\ref{cond:ehc}), we must have $|\mathcal{N}_\mathcal{R}(\mathcal{S})| = K$, which contradicts \eqref{eq:contradict}, since $k \geq 1$. 
On the other hand, using \nameref{cond:ehc} (Condition~\ref{cond:ehc}) for all $l\leq \ceil{\frac{K}{2}}$, we have 
\begin{align}\label{eq:contradict1}
|\mathcal{N}_\mathcal{R}(\mathcal{S})|\ \geq \floor{\frac{K}{2}}+l.
\end{align}
However, as stated before $l\ \textgreater\ \ceil{\frac{K}{2}}-k$, and combining it with \eqref{eq:contradict1} results in
\begin{align}
|\mathcal{N}_\mathcal{R}(\mathcal{S})|\ \textgreater\ K - k,
\end{align}
which contradicts \eqref{eq:contradict}.

Changing the role of receivers and transmitters, one is able to show that if the condition in the lemma holds, then \nameref{cond:ehc} (Condition~\ref{cond:ehc}) holds.

\end{proof}

\bibliographystyle{ieeetr}
\bibliography{journal_abbr,multiuserbh}

\begin{thebibliography}{10}

\bibitem{cadambe08}
V.~R. Cadambe and S.~A. Jafar, ``Interference alignment and degrees of freedom
  of the {$K$}-user interference channel,'' {\em IEEE Trans. Inf. Theory},
  vol.~54, pp.~3425--3441, Aug. 2008.

\bibitem{motahari14}
A.~S. Motahari, S.~O. Gharan, M.~A. Maddah-Ali, and A.~K. Khandani, ``Real
  interference alignment: Exploiting the potential of single antenna systems,''
  {\em IEEE Trans. Inf. Theory}, vol.~60, Aug. 2014.

\bibitem{maddah10_com}
M.~A. Maddah-Ali, ``On the degrees of freedom of the compound {MISO} broadcast
  channels with finite states,'' in {\em Proc. IEEE ISIT}, pp.~2273--2277, June
  2010.

\bibitem{wang11a}
I.-H. Wang and D.~N.~C. Tse, ``Interference mitigation through limited receiver
  cooperation,'' {\em IEEE Trans. Inf. Theory}, vol.~57, pp.~2913--2940, May
  2011.

\bibitem{wang11b}
I.-H. Wang and D.~N.~C. Tse, ``Interference mitigation through limited
  transmitter cooperation,'' {\em IEEE Trans. Inf. Theory}, vol.~57,
  pp.~2941--2965, May 2011.

\bibitem{ashraphijuo2014capacity}
M.~Ashraphijuo, V.~Aggarwal, and X.~Wang, ``On the capacity and degrees of
  freedom regions of two-user mimo interference channels with limited receiver
  cooperation,'' {\em IEEE Trans. Inf. Theory}, vol.~60, no.~7, pp.~4170--4196,
  2014.

\bibitem{ntranos2015cooperation}
V.~Ntranos, M.-A.~M. Ali, and G.~Caire, ``Cooperation alignment for distributed
  interference management,'' in {\em Information Theory (ISIT), 2015 IEEE
  International Symposium on}, pp.~874--878, IEEE, 2015.

\bibitem{kananian2016collaboration}
B.~Kananian, M.~A. Maddah-Ali, S.~P. Shariatpanahi, and B.~H. Khalaj,
  ``Collaboration alignment in distributed interference management in uplink
  cellular systems,'' in {\em Information Theory (ISIT), 2016 IEEE
  International Symposium on}, pp.~1456--1460, IEEE, 2016.

\bibitem{dantzig2003max}
G.~Dantzig and D.~R. Fulkerson, ``On the max flow min cut theorem of
  networks,'' {\em Linear inequalities and related systems}, vol.~38,
  pp.~225--231, 2003.

\bibitem{ashraphijuo2013capacity}
M.~Ashraphijuo, V.~Aggarwal, and X.~Wang, ``On the capacity region and the
  generalized degrees of freedom region for the mimo interference channel with
  feedback,'' {\em IEEE Transactions on Information Theory}, vol.~59, no.~12,
  pp.~8357--8376, 2013.

\bibitem{horn1990matrix}
R.~A. Horn and C.~R. Johnson, {\em Matrix analysis}.
\newblock Cambridge university press, 1990.

\bibitem{karmakar2012generalized}
S.~Karmakar and M.~K. Varanasi, ``The generalized degrees of freedom region of
  the mimo interference channel and its achievability,'' {\em IEEE Transactions
  on Information Theory}, vol.~58, no.~12, pp.~7188--7203, 2012.

\bibitem{shang2010capacity}
X.~Shang, B.~Chen, G.~Kramer, and H.~V. Poor, ``Capacity regions and sum-rate
  capacities of vector gaussian interference channels,'' {\em IEEE Transactions
  on Information Theory}, vol.~56, no.~10, pp.~5030--5044, 2010.

\bibitem{diestel2000graph}
R.~Diestel, {\em Graph theory $\{$graduate texts in mathematics; 173$\}$}.
\newblock Springer-Verlag Berlin and Heidelberg GmbH \& amp, 2000.

\end{thebibliography}

\end{document}